\documentclass[11pt,a4paper]{article}

\usepackage[utf8]{inputenc}
\usepackage[T1]{fontenc}
\usepackage{lmodern}

\usepackage{amsmath,amssymb,amsthm}
\usepackage{bm}
\usepackage{mathtools}

\usepackage{booktabs}
\usepackage{array}
\usepackage{tabularx}
\usepackage{float}

\usepackage[margin=2.5cm]{geometry}
\usepackage{setspace}
\usepackage{parskip}

\usepackage{graphicx}
\graphicspath{{figures/}}
\usepackage[section]{placeins}

\usepackage{enumitem}

\usepackage[authoryear,round]{natbib}
\usepackage[colorlinks=true,linkcolor=blue,citecolor=blue,urlcolor=blue]{hyperref}
\usepackage[capitalise,noabbrev]{cleveref}

\newtheorem{theorem}{Theorem}
\newtheorem{proposition}{Proposition}
\newtheorem{corollary}{Corollary}
\newtheorem{lemma}{Lemma}
\theoremstyle{definition}

\theoremstyle{remark}

\newcommand{\one}{\mathbf{1}}

\newcommand{\Sig}{\boldsymbol{\Sigma}}
\newcommand{\Sigi}{\boldsymbol{\Sigma}_1}
\newcommand{\SigH}{\boldsymbol{\Sigma}_H}
\newcommand{\eivec}{\mathbf{v}}
\newcommand{\kappastat}{\kappa}
\newcommand{\Omat}{\mathbf{O}}
\newcommand{\horizon}{H}

\newcommand{\R}{\mathbb{R}}

\newcommand{\Var}{\mathrm{Var}}
\newcommand{\Cov}{\mathrm{Cov}}
\newcommand{\tr}{\mathrm{tr}}

\title{A Spectral Generalisation of the Variance Ratio:\\
Eigenstructure of Long-Horizon Portfolio Covariance and\\
a Multi-Memory Factor Model of U.S.\ Equity Returns}
\author{Anders G.\ Fr{\o}seth\thanks{Independent Researcher.\
  E-mail: \href{mailto:indrefjorden@pm.me}{indrefjorden@pm.me}.}}
\date{\today}

\begin{document}

\maketitle

\begin{abstract}
\noindent
We propose a multivariate generalisation of the
Lo--MacKinlay (1988) variance ratio that decomposes long-horizon
equity-return dynamics into separate return-channel and
volatility-channel memory components across the cross-section
of asset returns. The framework identifies a parsimonious
five-factor model --- capturing persistent, antipersistent, and
multi-scale memory in returns and volatility --- that fits four
U.S.\ portfolio sub-samples (Fama--French 49-industry universe
full sample and pre/post-1998 split halves; Fama--French 100
size$\times$book-to-market sort) and a non-U.S.\ replication on
the Fama--French Europe 25 sort, recovering seven well-known
stylised facts of long-horizon equity dynamics simultaneously
across all five panels.

Three findings carry economic content. (i)~The same five-factor
decomposition fits all five panels, indicating a cross-sectional
structure of long-horizon equity dynamics that is robust to
industry vs.\ size-and-value sorts, to pre- vs.\ post-1998
sub-periods, and to U.S.\ vs.\ developed-European markets.
(ii)~U.S.\ equity volatility memory underwent a regime
transition in the late 1980s --- not at the static 1998
split-half boundary --- with the
slowest component of the volatility cascade lengthening from
approximately two to four years across the transition. A
1000-replicate rolling-window bootstrap localises the transition
to the late 1980s with strictly non-overlapping $90\%$
confidence bands separating pre- and post-transition windows;
the $28$-year window precludes narrower dating.
(iii)~The cross-sectional loadings driving return-channel long
memory are economically distinct from those driving
volatility-channel cascade memory: a cross-channel
$\beta$-inversion test finds no panel exhibits the positive
cross-channel alignment that a single shared loading predicts,
and rejects the shared-loading hypothesis toward anti-alignment
on the two largest panels at Bonferroni $p = 0.0004$. Industry and
size$\times$book-to-market characteristics that predict
return-momentum patterns therefore need not predict
volatility-persistence patterns.
\end{abstract}

\medskip
\noindent\textbf{JEL Classification:} G11, G12, C58, C32.

\smallskip
\noindent\textbf{Keywords:} variance ratio test, principal
component analysis, long-horizon portfolio dynamics,
multifractal cascade, factor momentum, volatility long memory.

\section{Introduction}
\label{sec:intro}

Aggregation across horizons is a fundamental question for any
time series of returns: as the holding period $H$ grows from one
day to several years, the second-moment structure of the
cross-section of asset returns rearranges in non-trivial ways.
Variance flows between eigenmodes, the principal directions of
the covariance matrix rotate, and the volatility structure
acquires multi-scale persistence that does not aggregate as
$\sqrt{H}$ \citep{Mantegna2000,BouchaudPotters2003}. The
single-asset variance-ratio test of \citet{LoMacKinlay1988} ---
the natural scalar measure of whether $H$-period returns
aggregate cleanly --- collapses the multivariate structure to
one chosen series and discards the cross-sectional information
that the eigenstructure carries. The matrix-valued variance
ratio of \citet{HongLintonZhang} preserves more of the
multivariate structure but reduces it back to scalar functionals
(trace, determinant, maximum diagonal entry) for inference.
Neither test indexes the long-horizon dynamics by eigenmode.

This paper constructs a multivariate, eigenmode-indexed
generalisation of the variance-ratio test, applies it jointly to
the linear and volatility channels of the equity return
covariance, and uses the resulting joint
$(\kappastat^{\text{lin}}, \kappastat^{\text{vol}})$ statistic
to identify a multi-memory factor decomposition that is robust
across temporal sub-samples and across cross-sectional sorts of
the U.S.\ equity universe. Two cross-sectional statistics
support the framework, both defined for an arbitrary symmetric
matrix-valued time series $\Sig_t$ and its $H$-period aggregate
$\Sig_t^{(H)}$. The per-eigenmode variance ratio is
\begin{equation}
  \kappastat_i(H) \;:=\; \frac{\lambda_i(\Sig_t^{(H)})}{H \cdot
    \lambda_i(\Sig_t)},
  \label{eq:kappa-intro}
\end{equation}
where $\lambda_i$ denotes the $i$-th eigenvalue in descending
order; this measures the autocorrelation of the $i$-th principal
direction across aggregation horizons. The eigenvector-overlap
matrix is
\begin{equation}
  \Omat_{ij}(H) \;:=\;
    |\langle \eivec_i(\Sig_t^{(H)}),\, \eivec_j(\Sig_t) \rangle|^2,
  \label{eq:O-intro}
\end{equation}
which measures the geometric rotation of the eigenbasis with
horizon. Specialising $\Sig_t$ to the cross-sectional covariance
of daily log-returns gives the \emph{linear-channel} statistics
$(\kappastat^{\text{lin}}_i, \Omat^{\text{lin}}_{ij})$, while
specialising to the cross-sectional covariance of squared
returns gives the \emph{volatility-channel} analogues
$(\kappastat^{\text{vol}}_i, \Omat^{\text{vol}}_{ij})$
(Section~\ref{sec:setup}). Under independent, identically
distributed (i.i.d.)\ Gaussian returns
$\kappastat_i(H) = 1$ for every eigenmode and every horizon in
either channel, and $\Omat_{ij}(H) = \delta_{ij}$.

The i.i.d.\ null is a mathematical reference point, not an
empirical hypothesis. Equity returns are well-known to depart
from Gaussian i.i.d.\ behaviour along several documented axes:
power-law tails of the marginal return distribution
\citep{Mantegna2000,Gopikrishnan1999,Gabaix2003}, long-range
dependence and clustering in absolute returns
\citep{DingGrangerEngle1993,Cont2001}, and multi-scale
self-similarity in the volatility process consistent with a
multifractal cascade
\citep{MandelbrotCalvetFisher1997,CalvetFisher2008}. Each of
these regularities is a particular signature of departure from
i.i.d.\ behaviour, and a long line of econophysics work has
characterised them one by one
\citep{Mantegna2000,BouchaudPotters2003,Cont2001}. The
$(\kappastat^{\text{lin}}, \kappastat^{\text{vol}}, \Omat)$
framework contributes to this programme not by rejecting the
null but by giving a structured, eigenstructure-indexed
decomposition of the joint deviation pattern: per-eigenmode
temporal autocorrelation separates from cross-sectional
eigenvector rotation, and the linear-channel signature
separates from the volatility-channel signature. The task of
the rest of the paper is to read off the structural content of
that decomposition for the U.S.\ equity universe.

Both statistics admit closed-form predictions under parametric
autocorrelation: the per-eigenmode autoregressive
order-$p$ (AR$(p)$) decomposition
$\kappastat^{\text{AR}(p)}(H; \rho) = \sum_k A_k \cdot
\kappastat^{\text{AR}(1)}(H; \mu_k)$ indexed by the
characteristic roots $\mu_k$ of the AR recursion, and a first-order
perturbation result
$\Omat_{ij}(H) \propto \epsilon^2 (S_1/c)^2 |\langle \eivec_j, N
\eivec_i\rangle|^2 / (\lambda_i - \lambda_j)^2$ under a vector AR(1)
with eigenvector-mixing perturbation. These are presented in
Sections~\ref{sec:kappa} and \ref{sec:overlap}.

The joint
$(\kappastat^{\text{lin}}, \kappastat^{\text{vol}})$ statistic
across the four canonical regions of its support
($\kappastat^{\text{lin}} \gtrless 1$ paired with
$\kappastat^{\text{vol}} \gtrless 1$) partitions the
multi-memory structure of the return-generating process into
distinct eigenstructure-level cells: persistent linear with
persistent volatility (the market-mode quadrant),
antipersistent linear with persistent volatility (deep
mean-reverting modes with long-memory volatility cascade), and
two further cells whose empirical content we discuss in
Section~\ref{sec:empirical}.

\subsection*{The multi-memory factor model}

The empirical $(\kappastat^{\text{lin}},
\kappastat^{\text{vol}})$ matrices on U.S.\ equity returns show
two patterns simultaneously: in the linear channel,
$\kappastat^{\text{lin}}_1(H)$ rises and then falls across
horizons at the market mode (the Lo--MacKinlay /
\citet{PoterbaSummers1988} variance-ratio anomaly), while
deep-mode $\kappastat^{\text{lin}}_i(H)$ for $i \gtrsim 24$
falls steadily with $H$ (the long-horizon mean-reversion
signal of \citet{Cochrane1988} and \citet{FamaFrench1988}); in
the volatility channel,
$\kappastat^{\text{vol}}_1(1260)$ at the market mode takes
loss-filtered bootstrap median around $48$ on the full sample
(and $37$ on the post-1998 sub-period), one to two orders of
magnitude above the i.i.d.\ null, the characteristic signature
of multi-scale long memory in volatility
\citep{Cont2001,BouchaudPotters2003,CalvetFisher2008}.
Neither pattern is reproducible by a single autoregressive
process. The empirical
$\kappastat^{\text{lin}}_1$ profile requires both a persistent
and an antipersistent component, suggesting a multi-component
factor model; the $\kappastat^{\text{vol}}_1$ long-memory
profile is precisely the Calvet--Fisher Markov-switching
multifractal (MSM) cascade
\citep{CalvetFisher2004,CalvetFisher2008}, which itself derives
from \citet{MandelbrotCalvetFisher1997} and the broader
multifractal-cascade tradition.

We formalise this with a five-factor multi-memory model:
fractional Brownian motion (fBm) in persistent ($F_P$) and
antipersistent ($F_A$) factors \citep{MandelbrotVanNess1968,
GrangerJoyeux1980}, an autoregressive fractionally integrated
moving average, ARFIMA$(1, d, 0)$, component
\citep{BeranFengGhoshKulik2013} plus MSM cascade for the
central multifractal factor $F_M$, a pure-volatility-cascade
companion factor $F_V$ that shares $F_M$'s MSM dynamics but
contributes no linear-channel structure, and a transitory
volatility-of-volatility factor $F_{Vt}$
(Section~\ref{sec:factor-model}). The structural justification
for the ARFIMA linear-channel component of $F_M$ is the
volatility-feedback channel of the
Lucas-tree equilibrium of \citet[Ch.~9]{CalvetFisher2008}: a
shift in any frequency component of the multifractal volatility
state induces a return-side feedback with magnitude
proportional to the inverse of that component's switching
probability, and the ARFIMA reduced form parametrises the
inertial range of this feedback filter. Each factor carries a
cross-asset loading vector $\beta_k \in \R^N$, and the joint
$(\kappastat^{\text{lin}}, \kappastat^{\text{vol}})$ matrices
under this specification take an explicit closed form that we
fit by joint least squares (LS).

\subsection*{Empirical results}

We fit the multi-memory factor model on four data panels: the
full Fama--French 49-industry universe (FF 49 hereafter,
1969--2026), its 1969--1997 firsthalf, its 1998--2026
secondhalf, and the Fama--French 100 size$\times$book-to-market
sort (FF 100 hereafter, 1969--2026). Each panel is fit on
$1000$ moving-block bootstrap replicates of the
$(\widehat{\kappastat}^{\text{lin}},
\widehat{\kappastat}^{\text{vol}})$ matrices, using a multi-pass
warm-restart limited-memory
Broyden--Fletcher--Goldfarb--Shanno bounded (L-BFGS-B)
optimisation
(Section~\ref{ssec:joint-LS-fit}). The estimation pipeline
identifies all nine global parameters of the model with finite
bootstrap dispersion; the fractional-Brownian and ARFIMA
exponents are the most sharply identified, and the MSM cascade
parameters $(m_0, b, \gamma_{\bar k})$ the least. Three
substantive findings emerge:

\textit{First, the seven stylised facts of long-horizon equity
dynamics are recovered consistently across all four panels.}
The factor-momentum persistence \citep{AsnessEtAl2013,
EhsaniLinnainmaa2022} of sub-leading eigenmodes, the long-horizon
mean reversion of deep eigenmodes, the market-mode
rise-then-fall pattern, the short-range volatility clustering
\citep{Cont2001}, the multi-scale long memory in volatility, the
transitory-burst volatility of deep eigenmodes, and the
cross-sectional concentration of the volatility spectrum onto a
single dominant mode are all recovered --- the first six in the
per-eigenmode weight pattern, the seventh in the volatility
eigenvalue spectrum.
The fractional-Brownian-motion exponents
$H_P \approx 0.52$--$0.57$ and $H_A \approx 0.17$--$0.27$
characterising the persistent and antipersistent factors are
stable across the four panels, indicating a
universality of the eigenstructure-level multi-memory
specification.

\textit{Second, the market-mode volatility structure shows a
regime transition that a rolling-window analysis localises in
the late 1980s, not at the static 1998 split-half boundary.}
On the secondhalf panel ($1998$--$2026$) and on the full-sample
sensitivity panel the rank-1 vol-channel allocation is sharply
concentrated on the MSM cascade
($w^{\text{vol}}_{\text{MSM}} = 1.00$ at the market mode),
corresponding to a high-baseline volatility environment with
persistent multi-frequency clustering. The firsthalf panel
($1969$--$1997$) by contrast carries a weakly-identified rank-1
vol-channel mode-mix
($w^{\text{vol}}_{\text{MSM}} = 0.29$ at the median, $90\%$
unfiltered-bootstrap confidence interval (CI) $[0.10, 1.00]$). A complementary
rolling-window analysis (Section~\ref{ssec:rolling-window})
shows why. On $28$-year windows in $2$-year strides ($15$
windows centred $1983$-$06$ through $2011$-$06$), with
$1000$-replicate moving-block bootstrap per window, the median
$w^{\text{vol}}_{\text{MSM}}$ rises monotonically from $0.30$
($1985$ centre) to $0.93$ ($1991$ centre) and saturates at
$1.00$ from the $1995$-centred window onward (sample span
$1981$--$2009$). The early-sample and late-sample $90\%$ CI
bands are non-overlapping: the $1983$-centre upper bound
($0.80$) sits strictly below the $1999$-centre lower bound
($0.84$). The firsthalf static panel ($1969$--$1997$) therefore
contains roughly a decade of post-transition data, which is
the source of the wide rank-1 CI: the panel mixes pre- and
post-transition regimes that the rolling window separates. The
cross-over time scale between the additive (high-frequency) and
multiplicative (low-frequency, cascade-dominated) regions of
the MSM cascade --- the lowest-frequency MSM-component duration
$1/\gamma_1$ --- runs from $2.2$ yr in the firsthalf to $4.0$
yr in the secondhalf at the median, consistent with the
post-transition regime extending the lowest-frequency cascade
component to multi-year horizons.

\textit{Third, a cross-channel $\beta$-inversion test rejects
the hypothesis that the multifractal factor's linear- and
volatility-channel imprints are governed by a single shared
cross-asset loading.} The model attributes both $F_M$'s linear
long-memory signature and its volatility-channel multifractal
cascade to one loading $\beta_M$; a $\beta$-inversion diagnostic
recovers the per-asset $|\beta_M[a]|^2$ from each channel
independently and correlates them across the $N$ assets. A
shared $\beta_M$ predicts a positive cross-channel correlation.
Instead the Pearson correlation is negative on all four panels
--- significantly so on three --- and the Spearman is negative
or indistinguishable from zero on every panel, significantly
negative on the full sample. The linear and volatility imprints
of $F_M$ are carried by distinct cross-sections, falsifying the
single-loading sub-claim of the specification while leaving the
per-panel fits and the seven stylised facts intact.

\subsection*{Relation to existing literature}

The eigenstructure-level decomposition of the variance-ratio
statistic is, to our knowledge, novel; the closest existing
work is the multivariate variance ratio of
\citet{HongLintonZhang}, which we discuss as a methodological
comparator in supplementary material~\ref{app:HLZ}. The
multi-memory factor model builds on
\citet{CalvetFisher2008}'s Markov-switching multifractal
framework (Chapter 3 for the discrete-time MSM, Chapter 4 for
the bivariate extension that we generalise to $N$-asset
cross-sectional structure, Chapter 8 for the power-variation
moment condition that motivates our joint LS objective, Chapter
9 for the equilibrium volatility-feedback channel that anchors
the ARFIMA linear-channel reduced form). The ARFIMA factor
$F_M$ generalises the long-memory specification of
\citet{GrangerJoyeux1980} to a cross-sectional setting; the
persistent and antipersistent factors $F_P$, $F_A$ are
discretely-sampled fractional Brownian motions in the sense of
\citet{MandelbrotVanNess1968}. The first-order
eigenvector-perturbation result for $\Omat$ builds on standard
spectral perturbation theory of symmetric matrices
\citep{Kato1995} as applied to large-dimensional sample
covariance matrices \citep{AllezBouchaud2012,
BunBouchaudPotters2017}, and on the random-matrix-theory
treatment of \citet{PottersBouchaud2021}.

\subsection*{Plan of the paper}

Section~\ref{sec:setup} sets up the eigenstructure
framework and the two cross-sectional statistics
$(\kappastat, \Omat)$. Sections~\ref{sec:kappa} and
\ref{sec:overlap} derive the AR$(p)$ characteristic-root
decomposition of $\kappastat$ and the first-order perturbation
theory for $\Omat$. Section~\ref{sec:factor-model} introduces
the multi-memory factor model and the joint LS estimation
procedure. Section~\ref{sec:empirical} reports the empirical
results on the four datasets: the global parameter table, the
per-mode weight pattern recovering the seven stylised facts,
the rolling-window localisation of the volatility regime
transition, the cross-channel $\beta$-inversion test, and the
robustness sweep. Section~\ref{sec:discussion} discusses the
universality of the multi-memory structure, the interpretation
of the regime transition, the channel-distinct $\beta$
structure, and the limits of the current specification. Section~\ref{sec:conclusion} concludes. Proofs,
detailed per-dataset tables, the Hong--Linton--Zhang
comparison, the identifiability diagnostic, the
optimisation-objective comparison, and the reproducibility
manifest are in the supplementary material.

\section{Setup and the two-statistic framework}
\label{sec:setup}

\subsection{Notation}

Let $X_t \in \R^N$ denote the daily log-return vector on a fixed
cross-section of $N$ assets, observed at times
$t = 1, 2, \ldots, T$. We work throughout with log returns; the
choice is standard for variance-ratio analysis because log returns
are additive across horizons. Assume that the process $\{X_t\}$ is
covariance-stationary with finite second moments. The daily
cross-sectional covariance matrix is
\begin{equation*}
  \Sig_1 \;:=\; \Cov(X_t),
  \qquad \Sig_1 \in \R^{N \times N}.
\end{equation*}
For a horizon $\horizon \geq 1$, the $\horizon$-period log return
ending at time $t$ is
\begin{equation*}
  X^\horizon_t \;:=\; \sum_{s = t - \horizon + 1}^{t} X_s,
\end{equation*}
and its cross-sectional covariance is
\begin{equation*}
  \SigH \;:=\; \Cov(X^\horizon_t),
  \qquad \SigH \in \R^{N \times N}.
\end{equation*}
By covariance stationarity, $\SigH$ depends only on $\horizon$,
not on $t$. We write the autocovariance matrix at lag $h$ as
$\Gamma(h) := \Cov(X_t, X_{t-h})$ for $h \geq 0$, so that
$\Gamma(0) = \Sig_1$. By stationarity $\Gamma(-h) = \Gamma(h)^\top$.

The $\horizon$-period covariance admits the standard
trapezoidal-sum representation
\begin{equation}
  \SigH \;=\; \sum_{h = -(\horizon - 1)}^{\horizon - 1}
              (\horizon - |h|)\,\Gamma(h)
       \;=\; \horizon\,\Sig_1
             + \sum_{h = 1}^{\horizon - 1} (\horizon - h)
               \bigl[\Gamma(h) + \Gamma(h)^\top\bigr].
  \label{eq:SigmaH-trapezoid}
\end{equation}
Under i.i.d.\ log returns $\Gamma(h) = 0$ for $h \neq 0$ and
\eqref{eq:SigmaH-trapezoid} collapses to $\SigH = \horizon \Sig_1$.
The cross-time autocovariances $\{\Gamma(h)\}_{h \geq 1}$ are
exactly the structural content that distinguishes a non-trivial
$\SigH / \horizon$ from the daily covariance $\Sig_1$.

\subsection{Eigenstructure and the two statistics}
\label{ssec:eigenstructure-stats}

Both $\Sig_1$ and $\SigH$ are positive semidefinite and admit
real eigendecompositions
\begin{equation*}
  \Sig_1 \;=\; V_1\, \Lambda_1\, V_1^\top,
  \qquad
  \SigH \;=\; V_\horizon\, \Lambda_\horizon\, V_\horizon^\top,
\end{equation*}
with $V_1, V_\horizon \in \R^{N \times N}$ orthonormal and
$\Lambda_1, \Lambda_\horizon$ diagonal with eigenvalues sorted in
descending order. We write $\lambda_i(\Sigma)$ for the $i$-th
eigenvalue of a symmetric matrix in descending order and
$\eivec_i(\Sigma)$ for the corresponding eigenvector. The
eigenvalues and eigenvectors of $\SigH$ are quantities of interest
in their own right: the eigenvalues describe the variance
contributions of the principal directions at horizon $\horizon$,
and the eigenvectors describe what those principal directions
are.

The two statistics that organise the analysis are
\begin{equation}
  \boxed{\;
    \kappastat_i(\horizon)
      \;:=\;
      \frac{\lambda_i(\SigH)}{\horizon \cdot \lambda_i(\Sig_1)},
    \qquad
    \Omat_{ij}(\horizon)
      \;:=\;
      \bigl|\langle \eivec_i(\SigH),\, \eivec_j(\Sig_1) \rangle\bigr|^2.
  \;}
  \label{eq:kappa-O-def}
\end{equation}
The first statistic compares the $i$-th eigenvalue of $\SigH$ to
its $\horizon$-scaled daily counterpart; the second compares the
$i$-th eigenvector of $\SigH$ to the $j$-th eigenvector of
$\Sig_1$. Both statistics are dimensionless.

The eigenvector-overlap matrix $\Omat$ is doubly stochastic
because the rows and columns are squared expansions of an
orthonormal vector in an orthonormal basis: for every $i$,
$\sum_j \Omat_{ij} = 1$ by Parseval's identity, and similarly
$\sum_i \Omat_{ij} = 1$. The diagonal entry $\Omat_{ii}$ measures
how much of the $i$-th eigenmode of $\SigH$ projects onto the
$i$-th eigenmode of $\Sig_1$; an off-diagonal entry $\Omat_{ij}$
($i \neq j$) measures cross-mode leakage. Sign of the inner
product is washed out by the square: $\Omat$ is invariant to
sign flips of the eigenvectors (the only ambiguity left by an
ordinary eigendecomposition of a symmetric matrix).

\subsection{The i.i.d.\ null}

Under i.i.d.\ log returns the autocovariance function is
$\Gamma(h) = \delta_{h0} \Sig_1$, so~\eqref{eq:SigmaH-trapezoid}
gives $\SigH = \horizon \Sig_1$ exactly. Eigenvalues scale by
$\horizon$:
\begin{equation*}
  \lambda_i(\SigH) \;=\; \horizon \cdot \lambda_i(\Sig_1)
  \quad \Longrightarrow \quad
  \kappastat_i(\horizon) \;=\; 1
  \quad \text{for all } i, \horizon.
\end{equation*}
Eigenvectors coincide:
\begin{equation*}
  \eivec_i(\SigH) \;=\; \eivec_i(\Sig_1)
  \quad \Longrightarrow \quad
  \Omat_{ij}(\horizon) \;=\; \delta_{ij}.
\end{equation*}
We refer to the joint null $(\kappastat_i, \Omat_{ij}) = (1, \delta_{ij})$
as \emph{the i.i.d.\ null}. As noted in Section~\ref{sec:intro},
the null is a mathematical reference point against which the
joint $(\kappastat, \Omat)$ deviations are measured, not an
empirical hypothesis; the documented stylised facts of equity
returns
\citep{Mantegna2000,Cont2001,BouchaudPotters2003,CalvetFisher2008}
imply that any non-trivial sample will exhibit
$(\kappastat \neq 1, \Omat \neq I)$ in at least one channel.
The two component nulls are logically independent: a process with non-trivial temporal autocorrelation
along the principal directions and a stable principal basis sits
at $(\kappastat \neq 1, \Omat = I)$, while a process with
eigenvalue scaling that is faithful to the i.i.d.\ rule but
eigenvectors that rotate with horizon sits at
$(\kappastat = 1, \Omat \neq I)$. The general empirical case has
both deviations.

\subsection{Four-cell decomposition of deviations}

Table~\ref{tab:four-cell} organises the four limiting cases of the
joint $(\kappastat, \Omat)$ pair.

\begin{table}[H]
  \centering
  \small
  \begin{tabular}{>{\bfseries}c c l}
    \toprule
    $\kappastat$ & $\Omat$ & Interpretation \\
    \midrule
    $= 1$ & $= I$ & i.i.d.\ null. \\
    $\neq 1$ & $= I$ & Temporal autocorrelation along the principal
                       directions; principal basis is stable. \\
    $= 1$ & $\neq I$ & Eigenvalues scale as $\horizon$, but the
                       principal directions rotate with horizon. \\
    $\neq 1$ & $\neq I$ & General case: both temporal autocorrelation
                          and cross-sectional rotation. \\
    \bottomrule
  \end{tabular}
  \caption{Four-cell decomposition of $(\kappastat, \Omat)$ deviations
    from the i.i.d.\ null. The cells correspond to qualitatively
    distinct data-generating processes.}
  \label{tab:four-cell}
\end{table}

The framework's first claim is that the four-cell decomposition
is informative: distinguishing between the four cells is a real
question about the data-generating process, and a single statistic
cannot answer it. Section~\ref{ssec:HLZ-precedent} below
sharpens this claim by contrasting the $(\kappastat, \Omat)$
pair with the matrix-valued multivariate variance ratio of
\citet{HongLintonZhang}, which is a scalar functional of a related
object and cannot distinguish the four cells in general.

\subsection{The Hong--Linton--Zhang precedent}
\label{ssec:HLZ-precedent}

The closest prior work is the matrix-valued multivariate variance
ratio
\begin{equation}
  \mathbf{VR}(\horizon)
    \;:=\;
    \Sig_1^{-1/2}\,\frac{\SigH}{\horizon}\,\Sig_1^{-1/2},
  \label{eq:HLZ-VR}
\end{equation}
introduced by \citet{HongLintonZhang}. Under the i.i.d.\ null
$\SigH = \horizon \Sig_1$, so $\mathbf{VR}(\horizon) = I_N$. The
standard $\mathbf{VR}$-based tests use scalar functionals such as
$\tr \mathbf{VR}(\horizon)$, $\det \mathbf{VR}(\horizon)$, and the
maximum diagonal entry. These functionals reduce the multivariate
test to a scalar; the eigenmode decomposition of
$\mathbf{VR}(\horizon)$ is not standard test material.

The relationship between $\mathbf{VR}(\horizon)$ and our $\kappastat$
deserves a careful statement.

\begin{proposition}
  \label{prop:HLZ-vs-kappa}
  Let $\Sig_1$ and $\SigH$ be positive definite. The eigenvalues
  of the Hong--Linton--Zhang matrix $\mathbf{VR}(\horizon)$ coincide
  with the per-eigenmode statistics $\kappastat_i(\horizon)$ if and
  only if $\SigH$ commutes with $\Sig_1$. Equivalently, the
  eigenvalues coincide if and only if the eigenvectors of
  $\SigH$ coincide with those of $\Sig_1$, i.e.\ $\Omat = I$.
\end{proposition}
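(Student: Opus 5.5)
The plan is to prove the second formulation, $\Omat = I$, directly, and to treat the commutation clause as a reformulation of it. I assume the eigenvalues of $\Sig_1$ and of $\SigH$ are simple, so that $\eivec_i$ is defined up to sign and $\Omat$ is well defined. Commutation alone does not suffice: commuting matrices share an eigenbasis, but the descending orderings need not match. In $N=2$, if $\eivec_1(\SigH) = \eivec_2(\Sig_1)$, then $\mathbf{VR}(\horizon)$ has eigenvalues $\{a_1/b_2,\, a_2/b_1\}$ rather than $\{a_1/b_1,\, a_2/b_2\}$. Here I write $a_i = \lambda_i(\SigH)/\horizon$ and $b_j = \lambda_j(\Sig_1)$. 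So ``commutes'' has to be read as ``simultaneously diagonalisable with order-compatible eigenbases,'' which is exactly $\Omat = I$. I will state this caveat explicitly.

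\emph{Sufficiency.} If $\Omat = I$, then $V_\horizon = V_1 D$ for some diagonal sign matrix $D$. Hence $\Sig_1^{-1/2} = V_1 \Lambda_1^{-1/2} V_1^\top$, and
\[
  \mathbf{VR}(\horizon) = V_1 \Lambda_1^{-1/2} (\Lambda_\horizon/\horizon) \Lambda_1^{-1/2} V_1^\top .
\]
Its eigenvalues are therefore $a_i/b_i = \kappastat_i(\horizon)$. This step is routine.

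\emph{Necessity.} Suppose the multiset of eigenvalues of $\mathbf{VR}(\horizon)$ equals $\{\kappastat_i\}$. Then in particular the traces agree. Cyclicity of the trace gives $\tr \mathbf{VR} = \tr(\Sig_1^{-1}\SigH/\horizon)$. Expanding both matrices in their eigenbases,
\[
  \tr \mathbf{VR}(\horizon) = \sum_{i,j} \Omat_{ij}\, a_i\, c_j, \qquad c_j := 1/b_j .
\]
Here $\Omat$ is doubly stochastic, as noted after \eqref{eq:kappa-O-def}. The map $\Omat \mapsto \sum_{ij}\Omat_{ij} a_i c_j$ is linear, so by Birkhoff--von Neumann its minimum over the Birkhoff polytope is attained at permutation matrices. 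Now $a$ is strictly decreasing and $c$ is strictly increasing. By the rearrangement inequality, $\sum_i a_i c_{\pi(i)}$ is uniquely minimised at $\pi = \mathrm{id}$, with value $\sum_i \kappastat_i$. Since the linear function has a unique minimising vertex, it is uniquely minimised on the whole polytope at $I$. Trace equality therefore forces $\Omat = I$, and with simple spectra this gives $\eivec_i(\SigH) = \pm \eivec_i(\Sig_1)$ for all $i$, which is the claim.

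The main obstacle is the uniqueness argument in the necessity step. A face of the Birkhoff polytope on which the objective is constant would break uniqueness, and this happens exactly when the objective has ties. I would therefore verify strict monotonicity by a pairwise-exchange computation: for $i<k$ and $j<l$,
\[
  a_i c_l + a_k c_j - a_i c_j - a_k c_l = (a_i - a_k)(c_l - c_j) > 0 .
\]
I would also remark that with repeated eigenvalues the statement must be weakened to equality of the eigenspace decompositions. The determinant gives no information, since $\det\mathbf{VR} = \prod_i \kappastat_i$ always holds. The trace is therefore the natural scalar invariant to use, and it already suffices.
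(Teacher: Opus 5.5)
Your proof is correct, and it takes a genuinely different route from the paper's. The paper conjugates $\mathbf{VR}(\horizon)$ to $\Sig_1^{-1}\SigH/\horizon$. For the forward direction it reads the eigenvalues off a common eigenbasis, silently pairing the $i$-th eigenvalue of $\SigH$ with the $i$-th eigenvalue of $\Sig_1$ on each shared eigenvector. For the converse it simply asserts that ``multiplying out'' matching eigenvalues forces $\SigH$ to be diagonal in the eigenbasis of $\Sig_1$, with no argument given. Your $N=2$ example is valid and shows exactly where the paper's forward step breaks. Commuting matrices whose orderings are incompatible give $\tr\mathbf{VR}(\horizon) > \sum_i\kappastat_i(\horizon)$, so the commutation clause of the statement holds only in the direction ``eigenvalues coincide $\Rightarrow$ commute'', and $\Omat = I$ is the correct condition. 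For the converse, you replace the paper's unargued step with the identity $\tr\mathbf{VR}(\horizon) = \sum_{i,j}\Omat_{ij}a_i c_j$ plus Birkhoff--von Neumann and strict rearrangement. This is the equality case of the classical lower trace bound $\tr(AB) \ge \sum_i \lambda_i(A)\lambda_{N+1-i}(B)$ for symmetric $A,B$. It gives more than the proposition asks: trace equality alone already forces $\Omat = I$, and $\tr\mathbf{VR}(\horizon) - \sum_i\kappastat_i(\horizon) \ge 0$ is a scalar measure of the departure from $\Omat = I$. The determinant, as you note, carries no such information. The price is the simple-spectrum assumption, which the paper needs anyway for $\Omat$ to be well defined. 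One small correction to your closing remark. With repeated eigenvalues, the right weakening is the existence of a common orthonormal eigenbasis along which both spectra are in descending order (equivalently, $\Omat = I$ for some admissible choice of eigenvectors). It is not equality of eigenspace decompositions: $\Sig_1 = I_N$ with any non-scalar $\SigH$ makes the eigenvalues coincide while the decompositions differ.
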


\begin{proof}
  $\mathbf{VR}(\horizon)$ and the matrix
  $\Sig_1^{-1} \SigH / \horizon$ are conjugate via
  $\Sig_1^{1/2}$ and so share eigenvalues. Suppose first that
  $\Sig_1$ and $\SigH$ share eigenvectors. Then in their common
  eigenbasis both matrices are diagonal, and the eigenvalues of
  $\Sig_1^{-1} \SigH / \horizon$ are
  $\lambda_i(\SigH) / [\horizon \lambda_i(\Sig_1)] = \kappastat_i$.
  Conversely, suppose the eigenvalues of
  $\Sig_1^{-1} \SigH / \horizon$ equal $\kappastat_i$. Then
  $\Sig_1^{-1} \SigH / \horizon$ and $\Sig_1$ share eigenvalues
  $\kappastat_i$ and $\lambda_i(\Sig_1)$, respectively, in the
  ordering by descending $\lambda_i(\Sig_1)$. Multiplying out,
  this requires $\SigH$ to be diagonal in the eigenbasis of
  $\Sig_1$, i.e.\ $\SigH$ and $\Sig_1$ commute.
\end{proof}

Proposition~\ref{prop:HLZ-vs-kappa} pins down precisely the
information content of $\kappastat$ relative to
$\mathbf{VR}(\horizon)$: the eigenvalues of $\mathbf{VR}(\horizon)$
and the per-eigenmode $\kappastat_i$ agree exactly in the regime
$\Omat = I$, but they disagree generically. This is the regime
where the four-cell decomposition of Table~\ref{tab:four-cell}
becomes operationally consequential: any test based purely on
$\mathbf{VR}(\horizon)$ confounds the $(\kappastat \neq 1, \Omat = I)$
and $(\kappastat = 1, \Omat \neq I)$ cells with the
$(\kappastat \neq 1, \Omat \neq I)$ general case, while the
$(\kappastat, \Omat)$ pair distinguishes them.

\subsection{Relation to the approximate-factor-model literature}
\label{ssec:factor-model-literature}

The eigenmode decomposition of the cross-sectional covariance
$\Sig_1$ is also the standard object of the approximate-factor-model
literature for large cross-sections of asset returns. A substantial
body of work determines the \emph{number} of factors --- how many
eigenvalues of $\Sig_1$ carry common variation rather than
idiosyncratic noise --- from the eigenvalue spectrum: the
\citet{ConnorKorajczyk1993} test for the number of factors in an
approximate factor model, the \citet{BaiNg2002} information
criteria, the eigenvalue-distribution tests of
\citet{Onatski2009, Onatski2010}, and the eigenvalue-ratio test of
\citet{AhnHorenstein2013}. These methods share a common premise:
that the informative content of the cross-section is carried by the
leading eigenvalues of a covariance matrix at a single horizon.

The framework of this paper is complementary rather than competing.
It takes the eigenmode decomposition of $\Sig_1$ as given and asks a
different question --- not how many eigenmodes carry signal at a
fixed horizon, but how the second-moment structure of \emph{each}
eigenmode evolves as the horizon $\horizon$ grows. The per-eigenmode
variance ratio $\kappastat_i(\horizon)$ and the overlap matrix
$\Omat(\horizon)$ extend the eigen-analysis of the factor-model
literature from the cross-section into the horizon dimension; a
factor-number criterion of the kind cited above could be applied at
each horizon as a preprocessing step to fix the eigenmode count, and
the $(\kappastat, \Omat)$ statistics would then characterise how
those eigenmodes scale.

\section{The $\kappa$ statistic}
\label{sec:kappa}

This section develops the per-eigenmode variance ratio
$\kappastat_i(\horizon) =
\lambda_i(\SigH)/[\horizon \lambda_i(\Sig_1)]$ under parametric
models for the autocorrelation structure of the return process.
We work in the regime $\Omat = I$ throughout this section: the
return process has a fixed eigenvector basis at all horizons, so
the principal directions are stationary. Section~\ref{sec:overlap}
relaxes this assumption.

\subsection{Per-eigenmode AR(1)}
\label{sec:kappa-ar1}

Assume the return vector decomposes as $X_t = V \xi_t$ where $V$
is a fixed orthonormal matrix and the components
$\xi_{i,t}$ are independent AR(1) processes with coefficients
$\rho_i \in (-1, 1)$ and innovation variances $\sigma_i^2$:
\begin{equation}
  \xi_{i,t} \;=\; \rho_i \xi_{i,t-1} + \varepsilon_{i,t},
  \qquad \varepsilon_{i,t} \overset{\mathrm{iid}}{\sim} (0, \sigma_i^2).
  \label{eq:AR1-per-eigenmode}
\end{equation}
The stationary variance is $\Var(\xi_i) = \sigma_i^2 / (1 - \rho_i^2)$
and the autocovariance at lag $h$ is given by
\[
\Cov(\xi_{i,t}, \xi_{i,t-h}) = \rho_i^{|h|} \Var(\xi_i).
\]

The $H$-period sum $\xi^\horizon_{i,t} :=
\sum_{s=t-\horizon+1}^t \xi_{i,s}$ has variance
\begin{equation*}
  \Var(\xi^\horizon_i)
    \;=\; \Var(\xi_i)\, \sum_{|h| < \horizon}
          (\horizon - |h|)\, \rho_i^{|h|},
\end{equation*}
which leads to the closed form
\begin{theorem}[AR(1) closed form for $\kappastat$]
  \label{thm:kappa-ar1}
  Under the AR(1) eigenmode specification~\eqref{eq:AR1-per-eigenmode}
  with $|\rho_i| < 1$,
  \begin{equation}
    \boxed{\;
      \kappastat_i^{\mathrm{AR}(1)}(\horizon; \rho_i)
        \;=\; \frac{1 + \rho_i}{1 - \rho_i}
            - \frac{2 \rho_i (1 - \rho_i^\horizon)}
                   {\horizon (1 - \rho_i)^2}.
    \;}
    \label{eq:kappa-ar1-closed}
  \end{equation}
\end{theorem}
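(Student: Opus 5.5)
The plan is to reduce the matrix statement to a scalar computation and then evaluate a finite geometric-type sum in closed form.

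\emph{Reduction to a scalar ratio.} Since $X_t = V\xi_t$ with $V$ orthonormal and the $\xi_{i,t}$ mutually independent, we have $\Sig_1 = V\,\mathrm{diag}(\Var(\xi_i))\,V^\top$. Summing over the window gives $X^\horizon_t = V\xi^\horizon_t$, and the components of $\xi^\horizon_t$ remain independent across $i$, so
\[
\SigH = V\,\mathrm{diag}(\Var(\xi^\horizon_i))\,V^\top .
\]
Both matrices are therefore diagonal in the same basis, which places us in the $\Omat = I$ regime. We must then match eigenvalues to modes. I will take $\kappastat_i$ to be the ratio computed along the common eigenvector $\eivec_i = V e_i$, that is,
\[
\kappastat_i = \frac{\Var(\xi^\horizon_i)}{\horizon\,\Var(\xi_i)} .
\]
This agrees with the descending-order definition~\eqref{eq:kappa-O-def} whenever the ordering is preserved across horizons; otherwise it is the natural mode-indexed reading used throughout this section. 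I will note this caveat in one sentence rather than belabour it.

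\emph{The scalar sum.} Using the displayed variance formula, it remains to evaluate
\[
S(\horizon,\rho) := \frac{1}{\horizon}\sum_{|h|<\horizon}(\horizon-|h|)\rho^{|h|}
= 1 + \frac{2}{\horizon}\sum_{h=1}^{\horizon-1}(\horizon-h)\rho^h .
\]
I will split the inner sum as
\[
\horizon\sum_{h=1}^{\horizon-1}\rho^h - \sum_{h=1}^{\horizon-1}h\rho^h .
\]
The first piece is the geometric sum $\rho(1-\rho^{\horizon-1})/(1-\rho)$. The second piece follows from the standard identity $\sum_{h=1}^{n} h\rho^h = \rho\bigl(1-(n+1)\rho^n+n\rho^{n+1}\bigr)/(1-\rho)^2$, applied with $n=\horizon-1$.

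A cleaner route avoids the derivative identity altogether: write $(\horizon-h) = \sum_{m=h}^{\horizon-1} 1$ and swap the order of summation. This gives
\[
\sum_{h=1}^{\horizon-1}(\horizon-h)\rho^h = \sum_{m=1}^{\horizon-1}\sum_{h=1}^{m}\rho^h = \sum_{m=1}^{\horizon-1}\frac{\rho(1-\rho^m)}{1-\rho}
= \frac{\rho}{1-\rho}\Bigl[(\horizon-1) - \frac{\rho(1-\rho^{\horizon-1})}{1-\rho}\Bigr].
\]
I will use this form.

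\emph{Algebraic simplification.} Multiplying by $2/\horizon$ and adding $1$ gives
\[
1 + \frac{2\rho(\horizon-1)}{\horizon(1-\rho)} - \frac{2\rho^2(1-\rho^{\horizon-1})}{\horizon(1-\rho)^2}.
\]
Splitting $\frac{\horizon-1}{\horizon} = 1 - \frac1\horizon$, the $\horizon$-free part is $1 + 2\rho/(1-\rho) = (1+\rho)/(1-\rho)$, which is the first term of~\eqref{eq:kappa-ar1-closed}. The $1/\horizon$ remainder is
\[
-\frac{2\rho}{\horizon(1-\rho)^2}\bigl[(1-\rho) + \rho - \rho^\horizon\bigr] = -\frac{2\rho(1-\rho^\horizon)}{\horizon(1-\rho)^2},
\]
which is the second term. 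The cancellation inside this bracket is the only step where an error could creep in, so it is the main thing to check.

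As a sanity check, the formula should give $\horizon=1 \Rightarrow \kappastat=1$ and $\rho=0 \Rightarrow \kappastat=1$, both of which hold.
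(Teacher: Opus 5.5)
Your proposal is correct and follows essentially the same route as the paper: reduce to the scalar trapezoidal sum $1 + \frac{2}{H}\sum_{h=1}^{H-1}(H-h)\rho^h$, evaluate it in closed form, and simplify. Your swap-of-summation expression agrees with the paper's $\frac{(H-1)\rho - H\rho^2 + \rho^{H+1}}{(1-\rho)^2}$, and you additionally write out the final simplification and note the eigenvalue-ordering caveat, both of which the paper leaves implicit.
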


\begin{proof}[Sketch]
  Under AR(1), the autocovariance at lag $h$ equals $\rho_i^{|h|}$
  times the stationary variance $\sigma_i^2/(1 - \rho_i^2)$.
  The variance of the $\horizon$-period sum $\xi_{i,t}^\horizon$
  is the double sum
  $\sum_{h_1, h_2 = 0}^{\horizon - 1} \rho_i^{|h_1 - h_2|}$,
  which collapses to the trapezoidal sum
  $\sum_{|h| < \horizon}(\horizon - |h|)\,\rho_i^{|h|}$ by
  counting the number of $(h_1, h_2)$ pairs at each fixed lag
  $h = h_1 - h_2$. Evaluating the inner geometric-arithmetic
  series in closed form and dividing through by
  $\horizon \cdot \Var(\xi_i)$ yields~\eqref{eq:kappa-ar1-closed}.
  Full derivation in \cref{app:proofs}.
\end{proof}

The closed form has three limiting cases that organise its
interpretation:
\begin{itemize}[itemsep=2pt,leftmargin=*]
  \item \textbf{Daily horizon.} $\kappastat^{\mathrm{AR}(1)}(1; \rho) = 1$
    trivially: the daily variance ratio is one by definition of the
    statistic.
  \item \textbf{Long-horizon limit.} For $|\rho| < 1$,
  \begin{equation*}
    \lim_{\horizon \to \infty}
      \kappastat^{\mathrm{AR}(1)}(\horizon; \rho)
    \;=\; \frac{1 + \rho}{1 - \rho},
  \end{equation*}
  which is the standard long-run variance ratio for an AR(1)
  process \citep{Cochrane1988,LoMacKinlay1988}. The convergence
  rate is $O(1/\horizon \cdot (1 - \rho)^{-2})$.
  \item \textbf{Sign of deviation.} For $\horizon \geq 2$,
    $\kappastat^{\mathrm{AR}(1)}(\horizon; \rho) > 1$ if and only if
    $\rho > 0$, with strict equality at $\rho = 0$. Positive
    autocorrelation produces variance accumulation that exceeds the
    i.i.d.\ baseline (momentum); negative autocorrelation produces
    variance accumulation that falls short (mean reversion).
\end{itemize}

The eigenvalues of $\SigH$ are then
$\lambda_i(\SigH) = \horizon\, \kappastat_i^{\mathrm{AR}(1)}(\horizon;
\rho_i)\, \lambda_i(\Sig_1)$, and the eigenvectors coincide with
those of $\Sig_1$. The model sits in the
$(\kappastat \neq 1, \Omat = I)$ cell of Table~\ref{tab:four-cell}.
The per-eigenmode coefficient $\rho_i$ can be recovered from the
observed $\kappastat_i(\horizon)$ at any single horizon
$\horizon \geq 2$, as the next proposition makes precise.

\begin{proposition}[Inversion of $\kappastat$ to $\rho$]
  \label{prop:kappa-inversion}
  For each fixed $\horizon \geq 2$, the map
  $\rho \mapsto \kappastat^{\mathrm{AR}(1)}(\horizon; \rho)$ from
  $(-1, 1)$ to $(0, \infty)$ is strictly increasing and surjective,
  with range $\bigl(0, (1+\rho)/(1-\rho)\bigr|_{\rho \to 1}\bigr) =
  (0, \infty)$ as $\rho$ ranges over $(-1, 1)$. In particular, the
  observed value of $\kappastat_i(\horizon)$ at a single horizon
  uniquely determines $\rho_i$ in $(-1, 1)$.
\end{proposition}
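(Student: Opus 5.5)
The plan is to discard the closed form~\eqref{eq:kappa-ar1-closed} and work with the polynomial representation from the proof sketch of \cref{thm:kappa-ar1}:
\begin{equation*}
  \kappastat^{\mathrm{AR}(1)}(\horizon;\rho)
  \;=\; \frac{1}{\horizon}\sum_{|h|<\horizon}(\horizon-|h|)\,\rho^{|h|}
  \;=\; 1 + \frac{2}{\horizon}\sum_{h=1}^{\horizon-1}(\horizon-h)\,\rho^{h}.
\end{equation*}
For fixed $\horizon$ this is a polynomial in $\rho$ of degree $\horizon-1$. It is therefore continuous on the closed interval $[-1,1]$, so the range over $(-1,1)$ is an open interval whose endpoints are the values at $\rho=\pm1$. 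The argument then has three steps: compute those endpoint values, establish strict monotonicity, and read off injectivity (the uniqueness claim).

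\textbf{Step 1 (endpoints).} At $\rho=1$ the sum equals $\sum_{|h|<\horizon}(\horizon-|h|)=\horizon^2$, so $\kappastat\to\horizon$. At $\rho=-1$ the sum is the double sum $\sum_{a,b=0}^{\horizon-1}(-1)^{a-b}=\bigl(\sum_{a}(-1)^a\bigr)^2$. This equals $0$ for even $\horizon$ and $1$ for odd $\horizon$, so $\kappastat\to 0$ or $1/\horizon$ respectively. This already conflicts with the statement as written. The range is not $(0,\infty)$: the expression $(1+\rho)/(1-\rho)$ is only the $\horizon\to\infty$ limit and cannot be evaluated at $\rho\to1$ for fixed $\horizon$. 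The correct range is $(0,\horizon)$ for even $\horizon$ and $(1/\horizon,\horizon)$ for odd $\horizon$; for instance $\horizon=3$ gives $\kappastat=1+\tfrac23(2\rho+\rho^2)$, whose range is $(1/3,3)$. I would therefore prove the corrected statement: $\kappastat$ is a strictly increasing bijection from $(-1,1)$ onto that interval. The inversion claim, which needs only injectivity, survives intact.

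\textbf{Step 2 (monotonicity).} The derivative is
\begin{equation*}
  \partial_\rho\kappastat \;=\; \frac{2}{\horizon}\sum_{h=1}^{\horizon-1}h(\horizon-h)\rho^{h-1}.
\end{equation*}
On $[0,1)$ every term is nonnegative and the $h=1$ term equals $\horizon-1>0$, so the derivative is positive. The obstacle is $\rho\in(-1,0)$, where the terms alternate in sign. My first attempt would be the analytic closed form: differentiate
\begin{equation*}
  f(\rho)\;=\;\frac{\horizon(1-\rho^2)-2\rho(1-\rho^\horizon)}{(1-\rho)^2}
\end{equation*}
and show that the numerator of $f'$ is positive on $(-1,0)$. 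If that becomes messy, the fallback is a spectral argument. Write $\horizon\kappastat$ as an integral of the Fejér kernel $F_\horizon(\omega)\ge0$ against the normalised AR(1) spectral density $(1-\rho^2)/|1-\rho e^{i\omega}|^2$. The $\rho$-derivative of that density is then a signed kernel, and its pairing with $F_\horizon$ must be shown positive. A third option is to pair adjacent terms of the alternating sum via Abel summation, using that the coefficients $h(\horizon-h)$ are unimodal. This step is where I expect the real work to lie.

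\textbf{Step 3 (injectivity).} Strict monotonicity plus continuity makes the map a bijection onto the open interval of Step~1. Hence an observed $\kappastat_i(\horizon)$ lying in that interval determines $\rho_i\in(-1,1)$ uniquely, and a value outside it is inconsistent with the AR(1) specification~\eqref{eq:AR1-per-eigenmode}.
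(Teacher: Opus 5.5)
You are right that the statement is false as written, and this is exactly where the paper's own proof fails. The paper asserts $\kappastat^{\mathrm{AR}(1)}\to+\infty$ as $\rho\to1$, which reads the fixed-$\horizon$ function through its $\horizon\to\infty$ limit $(1+\rho)/(1-\rho)$. In \eqref{eq:kappa-ar1-closed} the second term also diverges, as $-2/(1-\rho)$; it cancels the pole and leaves the finite limit $\horizon$. So surjectivity onto $(0,\infty)$ fails, and so does the paper's conclusion that every observed $\kappastat_i(\horizon)>0$ has a preimage. Your endpoint values are correct: $\horizon$ at $\rho=1$, and at $\rho=-1$ either $0$ (even $\horizon$) or $1/\horizon$ (odd $\horizon$). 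Your corrected range is correct, and only the uniqueness (injectivity) half of the inversion claim survives. One ordering point: your opening inference that the range is the open interval between the endpoint values needs Step~2 first. Continuity on $[-1,1]$ alone does not rule out an interior extremum.

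The genuine gap is Step~2, which carries the whole content of the corrected statement. You list three strategies but carry out none of them. The paper does no better here: its claim that the $\rho^{\horizon}$ terms are dominated away from $\pm1$, with the sign near the endpoints fixed by limiting behaviour, does not establish positivity. Your first route closes the gap in a few lines. Differentiating your $f=\horizon\,\kappastat^{\mathrm{AR}(1)}$ gives
\[
  f'(\rho)\;=\;\frac{2\bigl[(\horizon-1)\bigl(1-\rho^{\horizon+1}\bigr)-(\horizon+1)\,\rho\,\bigl(1-\rho^{\horizon-1}\bigr)\bigr]}{(1-\rho)^{3}}.
\]
Take $\rho\in(-1,0]$. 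Both $1-\rho^{\horizon+1}$ and $1-\rho^{\horizon-1}$ are positive, since $|\rho|<1$ and $\horizon-1\ge1$. Also $-(\horizon+1)\rho\ge0$. So the bracket is at least $(\horizon-1)(1-\rho^{\horizon+1})>0$, and the denominator $(1-\rho)^3$ is positive. Your term-by-term argument covers $[0,1)$, so $\partial_\rho\kappastat^{\mathrm{AR}(1)}>0$ on all of $(-1,1)$. This needs no parity split, spectral representation or Abel summation. The corrected statement is then fully proved: the map is a strictly increasing bijection onto $(0,\horizon)$ for even $\horizon$ and onto $(1/\horizon,\horizon)$ for odd $\horizon$. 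Hence any observed value in that interval determines $\rho_i$ uniquely.
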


\begin{proof}[Sketch]
  The closed form~\eqref{eq:kappa-ar1-closed} is a smooth
  function of $\rho$ on $(-1, 1)$ for $\horizon \geq 2$.
  Differentiation shows its derivative is positive everywhere on
  this interval, so $\rho \mapsto \kappastat$ is strictly
  increasing. The boundary behaviour
  ($\kappastat \to 0$ as $\rho \to -1$ and
  $\kappastat \to \infty$ as $\rho \to +1$) combined with
  continuity gives a well-defined single-valued inverse on
  $(0, \infty)$.
\end{proof}

If the AR(1) eigenmode specification is the right model and
$\rho_i$ is the same for every horizon, then the value of $\rho_i$
recovered from $\kappastat_i$ at any single horizon equals the
value recovered at any other horizon. Conversely, horizon-dependent
recovered $\rho_i$ values indicate that AR(1) is the wrong model
and motivate the AR($p$) extension developed in
Section~\ref{sec:kappa-arp}.

\subsection{AR($p$) decomposition}
\label{sec:kappa-arp}

For an AR($p$) eigenmode specification
\begin{equation*}
  \xi_{i,t} \;=\; \sum_{k=1}^p \rho_{i,k} \xi_{i,t-k}
                + \varepsilon_{i,t},
\end{equation*}
we drop the eigenmode index $i$ in this subsection to lighten
notation: every result applies independently to each eigenmode.
The autocovariance function $\gamma(h) := \Cov(\xi_t, \xi_{t-h})$
satisfies the Yule--Walker recurrence
\begin{equation*}
  \gamma(h) \;=\; \sum_{k=1}^p \rho_k\, \gamma(h - k),
  \qquad h \geq 1,
\end{equation*}
with initial conditions $\gamma(0), \gamma(1), \ldots, \gamma(p-1)$
determined by the first $p$ Yule--Walker equations. The
characteristic polynomial of the recurrence is
\begin{equation*}
  z^p \;-\; \rho_1 z^{p-1} \;-\; \rho_2 z^{p-2}
       \;-\; \cdots \;-\; \rho_p \;=\; 0,
\end{equation*}
with roots $\mu_1, \ldots, \mu_p$. We assume the roots
are distinct and lie strictly inside the unit disc
$|\mu_k| < 1$; this is the standard stationarity condition.
(We use $\mu_k$ for AR characteristic roots throughout, reserving
$\lambda$ for eigenvalues of cross-sectional covariance matrices.)

\begin{theorem}[AR($p$) decomposition]
  \label{thm:kappa-arp}
  Under AR($p$) with distinct characteristic roots
  $\mu_1, \ldots, \mu_p$ inside the unit disc, the
  normalised autocorrelation function admits the spectral
  decomposition
  \begin{equation}
    \gamma(h) / \gamma(0)
      \;=\; \sum_{k=1}^p A_k\, \mu_k^{|h|},
    \qquad \sum_{k=1}^p A_k \;=\; 1,
    \label{eq:gamma-decomp}
  \end{equation}
  where the weights $A_k$ are determined by the Yule--Walker
  initial conditions $(\gamma(1)/\gamma(0), \ldots,
  \gamma(p-1)/\gamma(0))$. The per-eigenmode variance ratio then
  admits the corresponding decomposition
  \begin{equation}
    \boxed{\;
      \kappastat^{\mathrm{AR}(p)}(\horizon; \boldsymbol{\rho})
        \;=\; \sum_{k=1}^p A_k\,
              \kappastat^{\mathrm{AR}(1)}(\horizon; \mu_k),
    \;}
    \label{eq:kappa-arp-decomp}
  \end{equation}
  with each AR(1) building block given by the closed form
  \eqref{eq:kappa-ar1-closed} evaluated at the corresponding
  characteristic root.
\end{theorem}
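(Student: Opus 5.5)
The proof has two steps. First I establish the decomposition \eqref{eq:gamma-decomp} of the normalised autocorrelation function. Then I push it linearly through the trapezoidal sum, exactly as in the proof of Theorem~\ref{thm:kappa-ar1}.

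\textbf{Step 1: the decomposition \eqref{eq:gamma-decomp}.} I use the standard theory of linear recurrences. The Yule--Walker recurrence $\gamma(h)=\sum_k\rho_k\gamma(h-k)$ holds for $h\ge 1$. Because the characteristic roots $\mu_1,\dots,\mu_p$ are distinct, its solution space is spanned by the sequences $\mu_k^h$. So on $h\ge 1-p$, and hence on $h \ge 0$, $\gamma(h)/\gamma(0)=\sum_k A_k\mu_k^h$ for some coefficients $A_k$. They are fixed by the $p$ values $\gamma(0),\gamma(1),\dots,\gamma(p-1)$ through a Vandermonde system, which is invertible since the roots are distinct.

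The recurrence for $h=1,\dots,p-1$ involves $\gamma(1-p),\dots,\gamma(-1)$. I rewrite these as $\gamma(p-1),\dots,\gamma(1)$ using $\gamma(-h)=\gamma(h)$ for a scalar stationary process. The representation at negative lags is therefore consistent with the symmetric one, and I simply define the function on $\mathbb{Z}$ via $|h|$.

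Evaluating at $h=0$ gives $\sum_k A_k=1$. Complex roots occur in conjugate pairs with conjugate coefficients, so the sum is real.

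\textbf{Step 2: pushing through the trapezoidal sum.} Write
\[
\kappastat^{\mathrm{AR}(p)}(\horizon)=\horizon^{-1}\sum_{|h|<\horizon}(\horizon-|h|)\,\gamma(h)/\gamma(0).
\]
Substituting \eqref{eq:gamma-decomp} and exchanging the finite sums gives $\sum_k A_k\,\horizon^{-1}\sum_{|h|<\horizon}(\horizon-|h|)\mu_k^{|h|}$. By the derivation of Theorem~\ref{thm:kappa-ar1}, the inner expression equals $\kappastat^{\mathrm{AR}(1)}(\horizon;\mu_k)$. That derivation is a purely algebraic geometric--arithmetic series identity, valid for any complex $\mu\ne1$, so it applies verbatim to complex roots. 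This yields \eqref{eq:kappa-arp-decomp}.

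\textbf{Main obstacle.} The delicate point is the claim that the representation is a finite combination of exactly the $p$ pure powers $\mu_k^{|h|}$ for all $h\ge0$, including the initial lags. This requires the recurrence to hold already from $h=1$, so that no transient term is needed at small lags. I would check it carefully against the first $p$ Yule--Walker equations, using the symmetry of $\gamma$. The remaining steps are linearity and the already-established AR(1) closed form.
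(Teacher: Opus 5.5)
Your proposal is correct and follows the paper's own argument essentially step for step: you take the general solution of the Yule--Walker recurrence with distinct roots, fix the $A_k$ with $\sum_k A_k = 1$ through a Vandermonde system anchored at $h = 0$, extend to negative lags by evenness, and push the decomposition linearly through the trapezoidal sum to reach the AR(1) closed form; your remark that this closed form is an algebraic identity valid for any complex $\mu \neq 1$ makes explicit what the paper leaves implicit. The obstacle you flag is milder than you state: if you take $\gamma(0),\dots,\gamma(p-1)$ as the initial data, the representation on $h \ge 0$ only needs the recurrence for $h \ge p$, so for a pure AR process no transient term can arise.
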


\begin{proof}[Sketch]
  The autocovariance decomposition~\eqref{eq:gamma-decomp} is
  the general solution of the Yule--Walker linear recurrence
  with distinct characteristic roots; the weights $A_k$ are
  determined by inverting the Vandermonde system formed by the
  first $p$ values of $\gamma(h)/\gamma(0)$, and the
  normalisation $\gamma(0)/\gamma(0) = 1$ at $h = 0$ gives
  $\sum_k A_k = 1$. Substituting this decomposition into the
  definition
  $\kappastat^{\mathrm{AR}(p)}(\horizon) = \horizon^{-1}
  \sum_{|h| < \horizon}(\horizon - |h|)\,\gamma(h)/\gamma(0)$
  and using linearity of $\kappastat$ in the autocorrelation
  function yields~\eqref{eq:kappa-arp-decomp}.
\end{proof}

Theorem~\ref{thm:kappa-arp} reduces the AR($p$) variance ratio to
a signed convex combination of AR(1) variance ratios, indexed by
the characteristic roots of the AR recursion. The weights $A_k$
sum to one but may take negative values; complex-conjugate root
pairs come with complex-conjugate weight pairs, and the
contribution of each pair to $\kappastat^{\mathrm{AR}(p)}$ is real.
The structural content of $\kappastat^{\mathrm{AR}(p)}$ is
captured entirely by the roots and weights: two AR($p$) models
with the same roots and weights produce the same variance ratio
at every horizon.

For $p = 2$ the result specialises to an explicit two-term
formula. The characteristic equation $z^2 - \rho_1 z - \rho_2 = 0$
has roots
\begin{equation*}
  \mu_{1,2}
    \;=\; \frac{\rho_1 \pm \sqrt{\rho_1^2 + 4\rho_2}}{2}.
\end{equation*}
If $\rho_1^2 + 4 \rho_2 > 0$ the roots are real and distinct; if
the discriminant is negative the roots are a complex-conjugate
pair and the autocorrelation function exhibits damped
oscillation with period $2\pi / \arg(\mu_1)$ at the daily
time-step. The weights are
\begin{equation*}
  A_1 \;=\; \frac{\rho_1 / (1 - \rho_2) - \mu_2}
                  {\mu_1 - \mu_2},
  \qquad
  A_2 \;=\; 1 - A_1,
\end{equation*}
where $\rho_1/(1 - \rho_2) = \gamma(1)/\gamma(0)$ is the single
non-trivial Yule--Walker initial condition for AR(2). The
construction reduces to the AR(1) case at $\rho_2 = 0$: then
$\mu_1 = \rho_1$, $\mu_2 = 0$, $A_1 = 1$, $A_2 = 0$, and
$\kappastat^{\mathrm{AR}(2)} = \kappastat^{\mathrm{AR}(1)}(\rho_1)$.

\subsection{Departures from AR(1): the horizon scan of $\rho$}
\label{sec:kappa-horizon-scan}

A natural test of the AR(1) eigenmode specification is to check
whether the per-eigenmode autocorrelation $\rho_i$ recovered from
the observed $\kappastat_i(\horizon)$ is independent of horizon.
Inverting~\eqref{eq:kappa-ar1-closed} via
Proposition~\ref{prop:kappa-inversion} at each horizon
$\horizon \in \{H_1, H_2, \ldots\}$ produces a sequence of
recovered values $\hat\rho_i(H_1), \hat\rho_i(H_2), \ldots$ If the
data follow strict AR(1), the sequence is constant up to
sampling noise. Horizon-dependent recovered values, particularly
sign changes between intermediate and long horizons, are direct
evidence against AR(1) and motivate the AR($p$) extension.

For an AR($p$) data-generating process the recovered $\rho_i$
under a misspecified AR(1) model is a horizon-dependent weighted
combination of the true characteristic roots: by
Theorem~\ref{thm:kappa-arp} and inverting~\eqref{eq:kappa-ar1-closed}
at each horizon,
\begin{equation*}
  \hat\rho_i(\horizon)
    \;=\; \bigl(\kappastat^{\mathrm{AR}(1)}\bigr)^{-1}
          \!\Bigl(\sum_{k=1}^p A_k\,
            \kappastat^{\mathrm{AR}(1)}(\horizon; \mu_k);\,
            \horizon\Bigr),
\end{equation*}
which varies with $\horizon$ whenever the right-hand side is not a
single $\kappastat^{\mathrm{AR}(1)}$ value. Section~\ref{sec:empirical}
shows that the Fama--French 49-industry data exhibits substantial
horizon dependence of $\hat\rho_i$ for the market-mode eigenmode,
with a sign change between intermediate and long horizons, and
that the AR(2) extension provides a qualitatively better fit
with characteristic roots that are stable across horizons.

\section{The $O$ statistic}
\label{sec:overlap}

The eigenvector-overlap statistic $\Omat_{ij}(\horizon) =
|\langle \eivec_i(\SigH), \eivec_j(\Sig_1) \rangle|^2$ measures
cross-sectional rotation of the principal directions with
aggregation horizon. The AR($p$) per-eigenmode specification of
Section~\ref{sec:kappa} produces $\Omat = I$ exactly, regardless
of $\horizon$, because every eigenmode evolves independently along
a fixed direction $\eivec_i$ and the eigenvectors of $\SigH$
coincide with those of $\Sig_1$. To produce non-trivial
$\Omat \neq I$ predictions, the return process must couple
eigenmodes across time --- the contemporaneous eigenvector basis
of the daily process is no longer the same as the basis under
which the $H$-period covariance diagonalises.

An overlap matrix of this form has a direct precedent in the
random-matrix-theory literature on financial correlation
matrices. \citet{PlerouEtAl2002} construct an overlap matrix
whose entries are the scalar products between the leading
eigenvectors of the cross-correlation matrix estimated over one
time window and those estimated a calendar-time lag $\tau$ later;
the matrix equals the identity when the eigenstructure is
perfectly stable in time, and they find the market eigenvector to
be the most stable, with sub-leading eigenvectors destabilising
as the random-matrix noise band is approached. The statistic
$\Omat(\horizon)$ is the aggregation-horizon analogue: it asks
not whether the eigenstructure drifts as the estimation window
slides forward in calendar time, but whether it rotates as the
covariance is measured over longer return intervals on a fixed
sample. Both constructions reduce to the identity when the
principal directions are stable along the relevant axis.

This section develops a closed form for $\Omat$ under the
simplest model that produces this coupling: a vector autoregression
of order one with a perturbation that breaks the diagonal
structure of the autoregressive matrix in the eigenvector basis
of the innovation covariance.

\subsection{Vector AR(1) with eigenvector-mixing perturbation}
\label{sec:overlap-setup}

Let $\bar\rho \in (-1, 1)$ be a baseline autocorrelation and let
$B \in \R^{N \times N}$ be a symmetric perturbation matrix with
$\|B\|_2 = 1$. Consider the vector AR(1) process
\begin{equation}
  X_t \;=\; A\, X_{t-1} + \varepsilon_t,
  \qquad
  A \;:=\; \bar\rho I + \epsilon B,
  \qquad
  \varepsilon_t \overset{\mathrm{iid}}{\sim} \mathcal{N}(0, \Sigma_\varepsilon),
  \label{eq:vector-AR1}
\end{equation}
with $\epsilon \in \R$ a small perturbation strength. Throughout
this section we work to first order in $\epsilon$. For
stationarity we require all eigenvalues of $A$ inside the unit
disc, equivalently $|\bar\rho + \epsilon \mu| < 1$ for every
eigenvalue $\mu$ of $B$; for $\|B\|_2 = 1$ a sufficient condition
is $|\bar\rho| + |\epsilon| < 1$.

At $\epsilon = 0$ the process reduces to scalar AR(1) at every
component: $X_t = \bar\rho X_{t-1} + \varepsilon_t$, with
stationary covariance $\Sig_1^{(0)} = \Sigma_\varepsilon /
(1 - \bar\rho^2)$ from the discrete Lyapunov equation. The
$H$-period covariance is
\begin{equation*}
  \SigH^{(0)}
    \;=\; \horizon\,
          \kappastat^{\mathrm{AR}(1)}(\horizon; \bar\rho)\,
          \Sig_1^{(0)}
    \;=\; c(\horizon, \bar\rho)\, \Sig_1^{(0)},
\end{equation*}
where we abbreviate
\begin{equation}
  c(\horizon, \bar\rho)
    \;:=\; \horizon \cdot
           \kappastat^{\mathrm{AR}(1)}(\horizon; \bar\rho).
  \label{eq:c-def}
\end{equation}
At zero order in $\epsilon$, $\SigH^{(0)}$ is proportional to
$\Sig_1^{(0)}$, so they share eigenvectors exactly and
$\Omat^{(0)}(\horizon) = I_N$. The whole eigenvector-rotation
content of $\Omat$ comes from the first-order perturbation.

\subsection{First-order perturbation of the covariance}

The first-order corrections to $\Sig_1$ and $\SigH$ follow from
expansion of the Lyapunov equation and the
trapezoidal-sum representation~\eqref{eq:SigmaH-trapezoid}.

\begin{lemma}[First-order corrections]
  \label{lem:first-order-Sigma}
  Under the vector AR(1) specification~\eqref{eq:vector-AR1} with
  symmetric $B$,
  \begin{align}
    \Sig_1
      &\;=\; \Sig_1^{(0)} + \epsilon \Sig_1^{(1)} + O(\epsilon^2),
      &
    \Sig_1^{(1)}
      &\;=\; \frac{\bar\rho}{1 - \bar\rho^2}
             \bigl(B \Sig_1^{(0)} + \Sig_1^{(0)} B^\top\bigr),
    \\
    \SigH
      &\;=\; \SigH^{(0)} + \epsilon \SigH^{(1)} + O(\epsilon^2),
      &
    \SigH^{(1)}
      &\;=\; c(\horizon, \bar\rho)\, \Sig_1^{(1)}
             \;+\; S_1(\horizon, \bar\rho)\,
                   \bigl(B \Sig_1^{(0)} + \Sig_1^{(0)} B^\top\bigr),
    \nonumber
  \end{align}
  where
  \begin{equation}
    S_1(\horizon, \bar\rho)
      \;:=\; \sum_{h=1}^{\horizon - 1}
              (\horizon - h)\, h\, \bar\rho^{\,h-1}.
    \label{eq:S1-def}
  \end{equation}
\end{lemma}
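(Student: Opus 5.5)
The plan is to expand both the stationary covariance and the autocovariances to first order in $\epsilon$, and then substitute them into the trapezoidal-sum representation~\eqref{eq:SigmaH-trapezoid}. Throughout, $\horizon$ is fixed, so every sum over lags is finite.

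\textbf{Step 1 (Lyapunov expansion).} The stationary covariance of~\eqref{eq:vector-AR1} solves the discrete Lyapunov equation $\Sig_1 = A\Sig_1 A^\top + \Sigma_\varepsilon$. The linear operator $\mathcal{L}_A : S \mapsto S - A S A^\top$ is invertible whenever the spectral radius of $A$ is below one. That holds at $\epsilon = 0$ and, by continuity, for small $|\epsilon|$. Hence $\Sig_1 = \mathcal{L}_A^{-1}(\Sigma_\varepsilon)$ is analytic in $\epsilon$, which justifies writing $\Sig_1 = \Sig_1^{(0)} + \epsilon\Sig_1^{(1)} + O(\epsilon^2)$.

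Substituting $A = \bar\rho I + \epsilon B$ and matching orders gives two equations. At order zero, $\Sig_1^{(0)} = \bar\rho^2\Sig_1^{(0)} + \Sigma_\varepsilon$, which recovers $\Sig_1^{(0)} = \Sigma_\varepsilon/(1-\bar\rho^2)$. At order one,
\[
\Sig_1^{(1)} = \bar\rho^2\Sig_1^{(1)} + \bar\rho\bigl(B\Sig_1^{(0)} + \Sig_1^{(0)}B^\top\bigr).
\]
Solving this gives the stated $\Sig_1^{(1)}$. Note that $\Sig_1^{(1)}$ is symmetric.

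\textbf{Step 2 (autocovariances).} For the VAR(1), $\Gamma(h) = \Cov(X_t, X_{t-h}) = A^h\Sig_1$ for $h \ge 1$. Because $\bar\rho I$ commutes with $B$, the binomial expansion gives
\[
A^h = \bar\rho^{\,h} I + \epsilon\, h\,\bar\rho^{\,h-1} B + O(\epsilon^2).
\]
Therefore
\[
\Gamma(h) = \bar\rho^{\,h}\Sig_1^{(0)} + \epsilon\bigl[\bar\rho^{\,h}\Sig_1^{(1)} + h\bar\rho^{\,h-1}B\Sig_1^{(0)}\bigr] + O(\epsilon^2).
\]
Adding the transpose and using the symmetry of $\Sig_1^{(0)}$ and $\Sig_1^{(1)}$ yields
\[
\Gamma(h)+\Gamma(h)^\top = 2\bar\rho^{\,h}\Sig_1^{(0)} + \epsilon\bigl[2\bar\rho^{\,h}\Sig_1^{(1)} + h\bar\rho^{\,h-1}\bigl(B\Sig_1^{(0)}+\Sig_1^{(0)}B^\top\bigr)\bigr] + O(\epsilon^2).
\]

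\textbf{Step 3 (assemble).} Insert these expansions into~\eqref{eq:SigmaH-trapezoid}. The zero-order part is $\bigl[\horizon + 2\sum_{h=1}^{\horizon-1}(\horizon-h)\bar\rho^{\,h}\bigr]\Sig_1^{(0)}$. The bracket equals $\sum_{|h|<\horizon}(\horizon-|h|)\bar\rho^{\,|h|} = \horizon\,\kappastat^{\mathrm{AR}(1)}(\horizon;\bar\rho) = c(\horizon,\bar\rho)$, by the variance formula preceding Theorem~\ref{thm:kappa-ar1} and the definition~\eqref{eq:c-def}.

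The first-order part contains two kinds of terms. The $\Sig_1^{(1)}$ terms combine with the same coefficient, $\horizon + 2\sum_{h}(\horizon-h)\bar\rho^{\,h} = c(\horizon,\bar\rho)$, giving $c\,\Sig_1^{(1)}$. The remaining terms carry the coefficient $\sum_{h=1}^{\horizon-1}(\horizon-h)\,h\,\bar\rho^{\,h-1} = S_1(\horizon,\bar\rho)$, exactly as in~\eqref{eq:S1-def}. Since the sum over $h$ is finite, the $O(\epsilon^2)$ remainders add up to an $O(\epsilon^2)$ remainder for fixed $\horizon$.

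\textbf{Main obstacle.} The only delicate point is justifying the expansion of $\Sig_1$ rigorously, i.e.\ that the Lyapunov solution is differentiable in $\epsilon$. The invertibility argument for $\mathcal{L}_A$ in Step 1 settles this. The rest is bookkeeping: the key observation is that the $\Sig_1^{(1)}$ coefficients reproduce $c(\horizon,\bar\rho)$ exactly.
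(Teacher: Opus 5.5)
Your proposal is correct and follows the paper's proof essentially step for step. You expand the Lyapunov equation to get $\Sigma_1^{(1)}$, expand $A^h = \bar\rho^{\,h} I + \epsilon h \bar\rho^{\,h-1} B + O(\epsilon^2)$, and substitute into the trapezoidal sum, where the $\Sigma_1^{(1)}$ coefficients reassemble into $c(H,\bar\rho)$ and the $A^h$ correction yields $S_1(H,\bar\rho)$. Your invertibility argument for $S \mapsto S - A S A^\top$, which justifies differentiability of $\Sigma_1$ in $\epsilon$, is a small rigor addition that the paper omits.
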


\begin{proof}[Sketch]
  For $\Sig_1$: substitute $A = \bar\rho I + \epsilon B$ into the
  discrete Lyapunov equation
  $\Sig_1 = A \Sig_1 A^\top + \Sigma_\varepsilon$ and collect
  terms order by order. The order-zero piece gives
  $\Sig_1^{(0)} = \Sigma_\varepsilon/(1 - \bar\rho^2)$; the
  order-one piece $(1 - \bar\rho^2)\Sig_1^{(1)} = \bar\rho
  (B \Sig_1^{(0)} + \Sig_1^{(0)} B^\top)$ yields the stated
  expression for $\Sig_1^{(1)}$. For $\SigH$: expand
  $A^h = \bar\rho^h I + \epsilon h \bar\rho^{h-1} B
  + O(\epsilon^2)$ and substitute into the
  trapezoidal-sum representation~\eqref{eq:SigmaH-trapezoid}.
  At order zero this gives $\SigH^{(0)} = c\, \Sig_1^{(0)}$;
  at order one, two distinct contributions appear --- a
  $c\, \Sig_1^{(1)}$ piece from the order-one expansion of
  $\Sig_1$, and a piece $\sum_{h \geq 1}(\horizon - h) h
  \bar\rho^{h-1} (B \Sig_1^{(0)} + \Sig_1^{(0)} B^\top)$ from
  the order-one expansion of $A^h$, with the sum exactly
  $S_1(\horizon, \bar\rho)$ from~\eqref{eq:S1-def}.
\end{proof}

The combinatorial factor $S_1(\horizon, \bar\rho)$ defined by
\eqref{eq:S1-def} captures the entire horizon dependence of the
first-order correction beyond what is already in
$c(\horizon, \bar\rho) \Sig_1^{(1)}$. The closed form is
\begin{equation}
  S_1(\horizon, \bar\rho)
    \;=\; (\horizon - 1)\, f'(\bar\rho)
         \;-\; \bar\rho\, f''(\bar\rho),
  \qquad
  f(r) \;:=\; \frac{1 - r^\horizon}{1 - r},
  \label{eq:S1-closed}
\end{equation}
which one verifies by direct differentiation. The value at
$\bar\rho = 0$ is $S_1(\horizon, 0) = \horizon - 1$ (only the
$h = 1$ term contributes).

\subsection{First-order eigenvector overlap}
\label{sec:overlap-first-order}

Both $\Sig_1$ and $\SigH$ are now smooth perturbations of
$\Sig_1^{(0)}$ and $\SigH^{(0)} = c \Sig_1^{(0)}$ respectively.
Standard non-degenerate first-order perturbation theory
\citep{Kato1995} gives the first-order corrections to the
eigenvectors. The key cancellation: at first order, the
contribution of $\Sig_1^{(1)}$ to $\eivec_i(\SigH)$ exactly
cancels the contribution to $\eivec_j(\Sig_1)$ in the inner
product $\langle \eivec_i(\SigH), \eivec_j(\Sig_1) \rangle$,
leaving only the contribution from the genuinely new term
$S_1\, (B \Sig_1^{(0)} + \Sig_1^{(0)} B^\top)$.

\begin{theorem}[First-order overlap closed form]
  \label{thm:O-first-order}
  Under the vector AR(1) specification~\eqref{eq:vector-AR1} with
  symmetric $B$ and distinct eigenvalues of $\Sig_1^{(0)}$, the
  first-order eigenvector overlap is
  \begin{equation}
    \boxed{\;
      \Omat_{ij}(\horizon)
        \;=\; \epsilon^2
          \left(\frac{S_1(\horizon, \bar\rho)}{c(\horizon, \bar\rho)}\right)^{\!\!2}
          \frac{(\lambda_i + \lambda_j)^2\, B_{ji}^2}
               {(\lambda_i - \lambda_j)^2}
          + O(\epsilon^4),
      \quad i \neq j,
    \;}
    \label{eq:O-first-order}
  \end{equation}
  where $\lambda_i := \lambda_i(\Sig_1^{(0)})$ are the
  unperturbed eigenvalues and $B_{ij}$ are the matrix elements of
  $B$ in the unperturbed eigenbasis. The diagonal entries are
  $\Omat_{ii}(\horizon) = 1 - \sum_{j \neq i} \Omat_{ij}(\horizon) +
  O(\epsilon^4)$.
\end{theorem}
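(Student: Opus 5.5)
The plan is to avoid perturbing $\Sig_1$ and $\SigH$ separately and instead compare them directly. Eigenvectors are invariant under positive rescaling, so $\eivec_i(\SigH) = \eivec_i(\SigH/c)$ with $c = c(\horizon,\bar\rho) > 0$. Write $M := B\Sig_1^{(0)} + \Sig_1^{(0)} B^\top$. Lemma~\ref{lem:first-order-Sigma} then gives
\begin{equation*}
  \frac{\SigH}{c} \;=\; \Sig_1^{(0)} + \epsilon\,\Sig_1^{(1)} + \epsilon\,\frac{S_1}{c}\,M + O(\epsilon^2)
  \;=\; \Sig_1 + \epsilon\,\frac{S_1}{c}\,M + O(\epsilon^2).
\end{equation*}
This is the cancellation announced before the theorem. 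The $c\,\Sig_1^{(1)}$ piece of $\SigH^{(1)}$, once divided by $c$, is exactly the first-order part of $\Sig_1$ itself. Hence $\SigH/c$ is a first-order perturbation of $\Sig_1$ by the single matrix $\epsilon (S_1/c) M$, and $\Sig_1^{(1)}$ never needs to be resolved.

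Next I apply standard non-degenerate first-order perturbation theory \citep{Kato1995}, treating $\Sig_1$ as the base matrix. The eigenvalues of $\Sig_1^{(0)}$ are distinct, so for small $\epsilon$ those of $\Sig_1$ are also distinct, with eigenvalues $\lambda_k + O(\epsilon)$ and eigenvectors $\eivec_k(\Sig_1) = u_k + O(\epsilon)$. Here $u_k$ denotes the unperturbed eigenbasis. Fixing signs so that $\langle \eivec_i(\SigH), \eivec_i(\Sig_1)\rangle > 0$, we get
\begin{equation*}
  \eivec_i(\SigH) \;=\; \eivec_i(\Sig_1) + \epsilon\,\frac{S_1}{c}\sum_{k\neq i}
  \frac{\langle \eivec_k(\Sig_1), M \eivec_i(\Sig_1)\rangle}{\lambda_i(\Sig_1)-\lambda_k(\Sig_1)}\,\eivec_k(\Sig_1) + O(\epsilon^2).
\end{equation*}
Taking the inner product with $\eivec_j(\Sig_1)$ for $j\neq i$, orthonormality leaves only the $k=j$ term. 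In that term I replace $\eivec_k(\Sig_1)$ and $\lambda_k(\Sig_1)$ by $u_k$ and $\lambda_k$, which costs only $O(\epsilon^2)$ in the amplitude. Because $u_i, u_j$ are eigenvectors of $\Sig_1^{(0)}$ and $B$ is symmetric,
\begin{equation*}
  \langle u_j, M u_i\rangle = \lambda_i B_{ji} + \lambda_j B_{ji} = (\lambda_i+\lambda_j)B_{ji}.
\end{equation*}
The amplitude is therefore $\epsilon (S_1/c)(\lambda_i+\lambda_j)B_{ji}/(\lambda_i-\lambda_j) + O(\epsilon^2)$, and squaring it gives the leading term in \eqref{eq:O-first-order}.

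The diagonal statement follows from double stochasticity of $\Omat$, which is Parseval's identity as noted in Section~\ref{ssec:eigenstructure-stats}. For each $i$ we have $\Omat_{ii} = 1-\sum_{j\neq i}\Omat_{ij}$ exactly, so the diagonal inherits whatever remainder the off-diagonal entries carry.

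The main obstacle is the remainder order. Squaring an amplitude of the form $\epsilon a + O(\epsilon^2)$ naively gives $\epsilon^2 a^2 + O(\epsilon^3)$, not $O(\epsilon^4)$. To reach $O(\epsilon^4)$ I must show that the $\epsilon^3$ cross term $2a\,b\,\epsilon^3$ vanishes, where $b$ is the second-order amplitude coefficient. My first approach is a parity argument. Under $\epsilon\mapsto-\epsilon$ combined with $B\mapsto -B$ the process is unchanged, so the amplitude is a function $g(\epsilon B)$ that is odd in $B$ at first order. If one can show that the second-order coefficient $b$ vanishes in this sense, for instance because the off-diagonal amplitude is odd in $\epsilon B$ jointly, the claim follows. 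If the second-order expansion of the Lyapunov equation and of $A^h$ in fact produces a nonzero $b$, I will state the remainder honestly as $O(\epsilon^3)$. The first-order closed form, which is the substantive content of the theorem, is unaffected either way.
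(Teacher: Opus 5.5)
Your leading-order argument is correct, and it organises the cancellation differently from the paper. The paper perturbs $\Sig_1$ and $\SigH$ separately about the common unperturbed basis $\eivec_k^{(0)}$. It writes $\delta\eivec_i(\SigH)=\delta\eivec_i(\Sig_1)+(S_1/c)\sum_{k\neq i}\cdots$ and then cancels the two $\Sig_1^{(1)}$ contributions to $\langle\eivec_i(\SigH),\eivec_j(\Sig_1)\rangle$ by hand, using the symmetry of $\Sig_1^{(1)}$. You instead observe that $\SigH/c=\Sig_1+\epsilon(S_1/c)M+O(\epsilon^2)$ and perturb around the exact $\Sig_1$. The cancellation is then built in, and $\Sig_1^{(1)}$ never has to be resolved. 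Your only extra step is the remark that replacing $\eivec_k(\Sig_1)$ and $\lambda_k(\Sig_1)$ by their unperturbed values costs $O(\epsilon^2)$ in the amplitude, which is correct. Both routes give the amplitude $\epsilon(S_1/c)(\lambda_i+\lambda_j)B_{ji}/(\lambda_i-\lambda_j)+O(\epsilon^2)$.

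The remainder is where the proposal cannot be completed, so drop the parity argument. The process depends on $(\epsilon,B)$ only through $\epsilon B$, so the joint flip $(\epsilon,B)\mapsto(-\epsilon,-B)$ is an identity. It says nothing about the $\epsilon^2$ coefficient of the amplitude at fixed $B$, and that coefficient is in general nonzero.

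Concretely, take $N=2$, $\bar\rho=0$, $\horizon=2$. Then $\Sig_2/2=\Sig_1+\tfrac{\epsilon}{2}(B\Sig_1+\Sig_1B)$ exactly, and $\Sig_1=\Sigma_\varepsilon+\epsilon^2B\Sigma_\varepsilon B+O(\epsilon^4)$. In the eigenbasis of $\Sig_1$, the angle $\theta$ between the leading eigenvectors of $\Sig_2$ and $\Sig_1$ therefore satisfies
\[
\tan 2\theta=\frac{\epsilon(\lambda_1+\lambda_2)B_{12}}{\lambda_1-\lambda_2+\epsilon(\lambda_1B_{11}-\lambda_2B_{22})}+O(\epsilon^3),
\qquad
\theta=\epsilon a-\epsilon^2 a\,\frac{\lambda_1B_{11}-\lambda_2B_{22}}{\lambda_1-\lambda_2}+O(\epsilon^3).
\]
Here $a=(\lambda_1+\lambda_2)B_{12}/[2(\lambda_1-\lambda_2)]$, which is exactly the theorem's amplitude because $S_1(2,0)/c(2,0)=1/2$. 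So $\Omat_{12}=\sin^2\theta$ has a nonzero $\epsilon^3$ term whenever $B_{12}\neq 0$ and $\lambda_1B_{11}\neq\lambda_2B_{22}$. For example, $\Sigma_\varepsilon=\mathrm{diag}(2,1)$ and $B=\frac{1}{\sqrt2}\bigl(\begin{smallmatrix}1&1\\1&-1\end{smallmatrix}\bigr)$ (so $\|B\|_2=1$) give $\Omat_{12}=\frac98\epsilon^2-\frac{27}{4\sqrt2}\epsilon^3+O(\epsilon^4)$.

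The $O(\epsilon^4)$ in the statement is therefore false in general, and the correct remainder is $O(\epsilon^3)$. That is also all the paper's proof delivers, since it likewise squares an amplitude known only up to $O(\epsilon^2)$. Commit to your fallback. A parity argument works only in special symmetric cases, e.g.\ $N=2$ with $B_{11}=B_{22}=0$: there conjugation by $\mathrm{diag}(1,-1)$ sends $B\mapsto -B$ while fixing $\Sigma_\varepsilon$, which makes $\Omat_{12}$ even in $\epsilon$. Finally, the diagonal identity is exact by Parseval. Once the leading-order off-diagonal formula is substituted into it, the diagonal also carries an $O(\epsilon^3)$ remainder.
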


\begin{proof}[Sketch]
  Lemma~\ref{lem:first-order-Sigma} gives the first-order
  corrections to $\Sig_1$ and $\SigH$. Standard perturbation
  theory of symmetric matrices~\citep{Kato1995} yields
  \begin{align*}
    \langle \eivec_j^{(0)}, \delta \eivec_i(\SigH) \rangle
      &\;=\;
        \frac{\langle \eivec_j^{(0)}, \SigH^{(1)} \eivec_i^{(0)} \rangle}
             {c (\lambda_i - \lambda_j)},
    \\
    \langle \eivec_i^{(0)}, \delta \eivec_j(\Sig_1) \rangle
      &\;=\;
        \frac{\langle \eivec_i^{(0)}, \Sig_1^{(1)} \eivec_j^{(0)} \rangle}
             {\lambda_j - \lambda_i}.
  \end{align*}
  Summing and substituting $\SigH^{(1)} = c \Sig_1^{(1)} +
  S_1 N$ where $N := B \Sig_1^{(0)} + \Sig_1^{(0)} B^\top$, the
  $\Sig_1^{(1)}$ contributions cancel because $\Sig_1^{(1)}$ is
  symmetric. The surviving term is
  $\epsilon (S_1/c) \langle \eivec_j, N \eivec_i \rangle / (\lambda_i - \lambda_j)$.
  For symmetric $B$ in the eigenbasis of $\Sig_1^{(0)}$,
  $\langle \eivec_j, N \eivec_i \rangle = (\lambda_i + \lambda_j) B_{ji}$.
  Squaring gives~\eqref{eq:O-first-order}. Full derivation in
  \cref{app:proofs}.
\end{proof}

The closed form~\eqref{eq:O-first-order} has two structural
features that organise its interpretation.

\paragraph{Eigenvalue-gap dependence (Lorentzian factor).}
The denominator $(\lambda_i - \lambda_j)^2$ produces the
Lorentzian eigenvalue-gap factor familiar from
\citet{BunBouchaudPotters2017} in the random-matrix context.
Eigenvectors corresponding to nearly-degenerate eigenvalues rotate
strongly: small gaps in the denominator amplify the mixing.
Eigenvectors corresponding to well-separated eigenvalues stay
stable. The numerator $(\lambda_i + \lambda_j)^2$ partially
compensates near degeneracies of small absolute size but does not
overpower the denominator when the eigenvalues are at the same
order of magnitude.

\paragraph{Horizon-dependent factor (saturation).}
The factor $\bigl(S_1(\horizon, \bar\rho) / c(\horizon, \bar\rho)
\bigr)^2$ carries the entire horizon dependence of the first-order
overlap. At the daily horizon $\horizon = 1$ the factor is zero
($S_1(1, \bar\rho) = 0$, $c(1, \bar\rho) = 1$) and
$\Omat(1) = I$ trivially. As $\horizon$ grows the factor
increases, reflecting accumulating eigenvector rotation. The
crucial property is that the factor \emph{saturates} at long
horizons:

\begin{proposition}[Saturation]
  \label{prop:O-saturation}
  For $|\bar\rho| < 1$,
  \begin{equation}
    \lim_{\horizon \to \infty}
      \frac{S_1(\horizon, \bar\rho)}{c(\horizon, \bar\rho)}
    \;=\; \frac{1}{1 - \bar\rho^2}.
    \label{eq:S1-over-c-limit}
  \end{equation}
  The convergence rate is $O(1/\horizon \cdot (1 - \bar\rho^2)^{-2})$.
\end{proposition}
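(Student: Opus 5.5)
The plan is to compute the leading-order asymptotics of numerator and denominator separately, and to show that both grow linearly in $\horizon$ with explicitly computable slopes and bounded remainders. The limit is then the ratio of the slopes.

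\emph{Denominator.} By \eqref{eq:c-def} and Theorem~\ref{thm:kappa-ar1},
\[
c(\horizon,\bar\rho)=\horizon\,\frac{1+\bar\rho}{1-\bar\rho}-\frac{2\bar\rho(1-\bar\rho^{\horizon})}{(1-\bar\rho)^2}.
\]
This is exactly linear in $\horizon$, with slope $(1+\bar\rho)/(1-\bar\rho)$, plus a remainder that is bounded uniformly in $\horizon$ by $2|\bar\rho|\cdot 2/(1-\bar\rho)^2$ (using $|1-\bar\rho^{\horizon}|\le 2$).

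\emph{Numerator.} I will split the definition \eqref{eq:S1-def} as
\[
S_1(\horizon,\bar\rho)=\horizon\sum_{h=1}^{\horizon-1}h\bar\rho^{h-1}-\sum_{h=1}^{\horizon-1}h^2\bar\rho^{h-1}.
\]
Since $|\bar\rho|<1$, both partial sums converge absolutely. Their limits are $\sum_{h\ge1}h\bar\rho^{h-1}=(1-\bar\rho)^{-2}$ and $\sum_{h\ge1}h^2\bar\rho^{h-1}=(1+\bar\rho)/(1-\bar\rho)^3$. The tails are geometrically small: $\horizon\sum_{h\ge \horizon}h|\bar\rho|^{h-1}=O(\horizon^2|\bar\rho|^{\horizon-1})\to0$. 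Hence
\[
S_1=\frac{\horizon}{(1-\bar\rho)^2}-\frac{1+\bar\rho}{(1-\bar\rho)^3}+o(1).
\]
Alternatively, one could read the same expansion off the closed form \eqref{eq:S1-closed}, since $f'(\bar\rho)\to(1-\bar\rho)^{-2}$ and $f''(\bar\rho)\to 2(1-\bar\rho)^{-3}$ up to terms of order $\horizon^2\bar\rho^{\horizon}$. As a cross-check, the two routes must give the same constant term.

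\emph{Limit.} Dividing the two expansions by $\horizon$ and letting $\horizon\to\infty$ gives
\[
\frac{S_1}{c}\to\frac{(1-\bar\rho)^{-2}}{(1+\bar\rho)/(1-\bar\rho)}=\frac{1}{(1-\bar\rho)(1+\bar\rho)}=\frac{1}{1-\bar\rho^2},
\]
which is \eqref{eq:S1-over-c-limit}.

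\emph{Rate.} This is the only mildly delicate step. Write $L=(1-\bar\rho^2)^{-1}$ and
\[
\frac{S_1}{c}-L=\frac{S_1-Lc}{c}.
\]
The linear-in-$\horizon$ parts cancel in $S_1-Lc$, by the choice of $L$. What remains is the constant
\[
-\frac{1+\bar\rho}{(1-\bar\rho)^3}+\frac{2\bar\rho L}{(1-\bar\rho)^2}
\]
plus geometrically decaying terms. For large $\horizon$ the denominator satisfies $c\ge \tfrac12\horizon(1+\bar\rho)/(1-\bar\rho)$.

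The point requiring care is bookkeeping the powers of $(1-\bar\rho)$ and $(1+\bar\rho)$, so that the constant over $c$ is bounded by a multiple of $\horizon^{-1}(1-\bar\rho^2)^{-2}$ rather than something worse near $\bar\rho\to-1$. I would handle this by multiplying numerator and denominator by $(1-\bar\rho)^3(1+\bar\rho)$ and bounding the resulting polynomial in $\bar\rho$ by a constant. The exponentially small terms $O(\horizon^2|\bar\rho|^{\horizon})$ are absorbed into the $O(1/\horizon)$ rate for fixed $\bar\rho$.
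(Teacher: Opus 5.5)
Your proof is correct and takes the same route as the paper: both obtain the linear-in-$\horizon$ asymptotics $S_1 = \horizon/(1-\bar\rho)^2 + O(1)$ and $c = \horizon(1+\bar\rho)/(1-\bar\rho) + O(1)$, take the ratio of slopes for the limit, and get the $O(1/\horizon)$ rate from the bounded remainders. Your explicit constant-term bookkeeping is more careful than the paper's one-line justification of the $(1-\bar\rho^2)^{-2}$ prefactor, since it gives $S_1 - Lc \to -(1+\bar\rho^2)/[(1-\bar\rho)^3(1+\bar\rho)]$ and hence $S_1/c - L \sim -(1+\bar\rho^2)/[\horizon(1-\bar\rho^2)^2]$ for fixed $\bar\rho$, which confirms the stated rate.
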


\begin{proof}[Sketch]
  Asymptotics of \eqref{eq:S1-closed} and \eqref{eq:c-def}:
  $S_1 \sim \horizon / (1 - \bar\rho)^2$ and
  $c \sim \horizon (1 + \bar\rho)/(1 - \bar\rho)$ at leading order in
  $\horizon$; the ratio is
  $1 / [(1 - \bar\rho)(1 + \bar\rho)] = 1/(1 - \bar\rho^2)$. The
  next-order correction is $O(1/\horizon)$.
\end{proof}

Proposition~\ref{prop:O-saturation} predicts that eigenvector
rotation does not grow without bound at long horizons: the
$(S_1/c)^2$ amplification factor settles into a finite plateau
governed by the baseline autocorrelation $\bar\rho$. Asset-class
universes with weak baseline autocorrelation
($|\bar\rho| \ll 1$) saturate near unity, while universes with
strong autocorrelation saturate at $1/(1 - \bar\rho^2)^2$, which
can be considerably larger.

\subsection{Limits of leading-order theory}
\label{sec:overlap-limits}

The first-order closed form~\eqref{eq:O-first-order} is the
leading-order result of a power series in $\epsilon$, and is
quantitatively valid when the perturbation is small compared to
the eigenvalue gaps:
\begin{equation*}
  \epsilon\,
  \bigl|\langle \eivec_j, N \eivec_i \rangle\bigr|
  \;\ll\;
  |\lambda_i - \lambda_j|.
\end{equation*}
For a universe with a wide eigenvalue spread, the smallest gaps
in the bulk may be small in absolute terms even when the
spectrum has well-defined principal directions, and the
leading-order theory may underestimate the true overlap at deep
ranks. The qualitative gap-dependent pattern remains: large gaps
imply stable eigenvectors, small gaps imply substantial rotation.
The quantitative magnitudes at deep ranks --- where the bulk
spectrum approaches the noise floor of the empirical covariance ---
require either higher-order perturbation theory or a non-perturbative
treatment via the rotationally-invariant-estimator machinery of
\citet{BunBouchaudPotters2017}.

Section~\ref{sec:empirical} returns to this point: the
empirical $\Omat$ pattern on the Fama--French 49-industry universe
is qualitatively consistent with~\eqref{eq:O-first-order}, but the
deep-rank quantitative magnitudes diverge from the leading-order
prediction in the way the validity condition warns.

\section{Multi-memory factor model}
\label{sec:factor-model}

The $(\kappastat, \Omat)$ framework of
Sections~\ref{sec:kappa}--\ref{sec:overlap} characterises departures from
the i.i.d.\ null at the level of cross-sectional eigenmodes. To turn
those characterisations into a structural model of the
return-generating process we need a parametric form for the
underlying factors. This section introduces the five-factor
multi-memory specification used in our empirical work
(Section~\ref{sec:empirical}). The factor profiles are standard
constructions from the long-memory and multifractal-volatility
literatures: fractional Brownian motion
\citep{MandelbrotVanNess1968,BouchaudPotters2003} provides the
persistent and antipersistent components; ARFIMA$(1, d, 0)$
\citep{GrangerJoyeux1980} and the Markov-switching multifractal
cascade of \citet{CalvetFisher2008} provide the multi-scale
components.

\subsection{Two-channel cross-sectional statistic}
\label{ssec:two-channel}

The variance-ratio statistic of Section~\ref{ssec:eigenstructure-stats}
is constructed from the cross-sectional covariance of $H$-period
returns $\Sigi$ and $\SigH$. We extend it to a second channel by
applying the same construction to the matrix of squared returns.
Let $r_t \in \R^N$ denote daily log-returns of $N$ assets and
$r_t^{\odot 2} \in \R^N$ their element-wise squares. Define the
linear- and volatility-channel covariance matrices at horizon $H$ as
\begin{equation}
  \Sigi^{\text{lin}} \;:=\; \Cov(r_t), \qquad
  \SigH^{\text{lin}} \;:=\; \Cov\bigg(\sum_{s=1}^H r_{t-s+1}\bigg),
  \label{eq:Sig-lin}
\end{equation}
\begin{equation}
  \Sigi^{\text{vol}} \;:=\; \Cov(r_t^{\odot 2}), \qquad
  \SigH^{\text{vol}} \;:=\; \Cov\bigg(\sum_{s=1}^H r_{t-s+1}^{\odot 2}\bigg),
  \label{eq:Sig-vol}
\end{equation}
and the per-eigenmode variance ratios
\begin{equation}
  \kappastat^{\text{lin}}_i(H) := \frac{\lambda_i(\SigH^{\text{lin}})}
    {H \cdot \lambda_i(\Sigi^{\text{lin}})}, \qquad
  \kappastat^{\text{vol}}_i(H) := \frac{\lambda_i(\SigH^{\text{vol}})}
    {H \cdot \lambda_i(\Sigi^{\text{vol}})}.
  \label{eq:kappa-two-channel}
\end{equation}
Under i.i.d.\ Gaussian returns both
$\kappastat^{\text{lin}}_i(H) = 1$ and
$\kappastat^{\text{vol}}_i(H) = 1$ for every eigenmode and horizon.
Empirical deviations from these two nulls carry distinct
information: $\kappastat^{\text{lin}}_i \neq 1$ measures temporal
autocorrelation of the $i$-th linear eigenmode;
$\kappastat^{\text{vol}}_i \neq 1$ measures persistence in the
volatility of the $i$-th volatility eigenmode.

The two channels are not equivalent. The eigenvectors of
$\Sigi^{\text{lin}}$ and $\Sigi^{\text{vol}}$ are in general
distinct: a cross-asset structure expressed by loadings
$\beta \in \R^N$ acts on linear covariance through $\beta\beta^\top$
and on volatility covariance through
$(\beta \odot \beta)(\beta \odot \beta)^\top$, and these matrices
have coincident eigenvectors only when $\beta$ has approximately
uniform entries
(Proposition~\ref{prop:linvol-eigvec-coincidence} below).

\subsection{Five-factor multi-memory specification}
\label{ssec:five-factors}

Write the daily return vector as a linear combination of five
factor processes plus an idiosyncratic component,
\begin{equation}
  r_t \;=\; \sum_{k \in \mathcal{K}} \beta_k \, F_{k, t}
            \;+\; \varepsilon_t,
  \qquad \mathcal{K} = \{P, A, M, V, V\!t\},
  \label{eq:factor-decomp}
\end{equation}
where $\beta_k \in \R^N$ is the cross-asset loading vector of
factor $k$ and $\varepsilon_t \sim \mathcal{N}(0, \sigma_\varepsilon^2 I_N)$
is asset-specific noise uncorrelated across time and across the
$\{F_{k, t}\}_{k \in \mathcal{K}}$.

Each factor $F_{k, t}$ is a scalar stochastic process with a
specific multi-memory signature in the linear and volatility
channels. Table~\ref{tab:factor-profiles} summarises the five
specifications. Sections~\ref{sssec:fbm-factors} and
\ref{sssec:multifractal-factors} below give the parametric forms.

\begin{table}[tb]
\centering
\caption{Factor profiles in the linear and volatility channels.
$\kappastat^{\text{lin}}_k(H)$ denotes the scalar variance ratio at
horizon $H$ for factor $k$ in the linear channel; analogously for
the volatility channel.}
\label{tab:factor-profiles}
\begin{tabular}{lll}
\toprule
Factor & Linear-channel profile & Volatility-channel profile \\
\midrule
$F_P$    & fBm-persistent ($H_P > 1/2$)
         & constant ($\kappastat^{\text{vol}}_P = 1$) \\
$F_A$    & fBm-antipersistent ($H_A < 1/2$)
         & constant ($\kappastat^{\text{vol}}_A = 1$) \\
$F_M$    & ARFIMA$(1, d_M, 0)$
         & MSM cascade $(m_0, b, \gamma_{\bar k}, \bar k)$ \\
$F_V$    & i.i.d.\ ($\kappastat^{\text{lin}}_V = 1$)
         & MSM cascade (shared with $F_M$) \\
$F_{Vt}$ & i.i.d.\ ($\kappastat^{\text{lin}}_{Vt} = 1$)
         & ARFIMA$(1, d_{Vt}, 0)$ on $\sigma_{Vt, t}^2$ \\
\bottomrule
\end{tabular}
\end{table}

\subsubsection{Persistent and antipersistent factors}
\label{sssec:fbm-factors}

The persistent factor $F_P$ is a discretely sampled fractional
Brownian motion with Hurst exponent $H_P > 1/2$
\citep{MandelbrotVanNess1968}. Its per-eigenmode variance ratio
follows the well-known scaling law
\begin{equation}
  \kappastat^{\text{lin}}_P(H) \;=\; H^{2 H_P - 1},
  \label{eq:kappa-FP}
\end{equation}
which is increasing in $H$ and recovers the random-walk limit
$\kappastat^{\text{lin}}_P = 1$ at $H_P = 1/2$. The antipersistent
factor $F_A$ has the same functional form with $H_A < 1/2$, hence
$\kappastat^{\text{lin}}_A(H) = H^{2 H_A - 1}$ is decreasing in $H$.
Neither factor carries non-trivial volatility structure beyond a
homoscedastic Gaussian innovation, so
$\kappastat^{\text{vol}}_P = \kappastat^{\text{vol}}_A = 1$ at all
horizons.

\subsubsection{Multifractal factors $F_M$, $F_V$, $F_{Vt}$}
\label{sssec:multifractal-factors}

The remaining three factors carry the multi-scale temporal
structure of the model. $F_M$ is the central factor: it has a
non-trivial linear-channel profile (ARFIMA) and a non-trivial
volatility-channel profile (MSM cascade). $F_V$ shares $F_M$'s
volatility cascade but contributes no linear-channel structure
beyond Gaussian noise; structurally $F_V$ is a pure-volatility
companion to $F_M$.\footnote{The decomposition $F_M + F_V$ is not
identifiable from the volatility-channel covariance alone, because
both contribute the same MSM profile. They are separated by the
linear channel: only $F_M$ contributes to
$\kappastat^{\text{lin}}_i$. Identification at the cross-asset
level is discussed in Section~\ref{ssec:joint-LS-fit}.}
$F_{Vt}$ is a transitory-volatility factor: its linear channel is
i.i.d.\ noise but its volatility-of-volatility process $\sigma_{Vt, t}^2$
follows ARFIMA$(1, d_{Vt}, 0)$, providing a finite-horizon
volatility cluster that decays at long horizons.

\paragraph{ARFIMA$(1, d, 0)$.} An autoregressive
fractionally-integrated process with one AR lag and fractional
differencing parameter $d$ has scalar variance ratio
\begin{equation}
  \kappastat^{\text{ARFIMA}}(H; \phi, d) \;=\; \frac{1}{H}
    \sum_{|\ell| < H} (H - |\ell|) \, \gamma_\ell(\phi, d) / \gamma_0(\phi, d),
  \label{eq:kappa-arfima}
\end{equation}
where $\gamma_\ell(\phi, d)$ is the autocovariance at lag $\ell$ of
the ARFIMA$(1, d, 0)$ process. The closed form for
$\gamma_\ell(\phi, d)$ via the spectral density and inverse FFT is
standard \citep{BeranFengGhoshKulik2013}. For
$d \in (-1/2, 1/2)$ and $|\phi| < 1$ the process is stationary;
$d > 0$ gives long-memory persistence and $d < 0$ gives
antipersistent long memory. We use $F_M$'s linear profile in the
ARFIMA family because the empirical
$\kappastat^{\text{lin}}_1(H)$ at the market mode displays a
characteristic rise-then-fall pattern that no single fBm can
generate (Section~\ref{sec:empirical}).

\paragraph{MSM cascade.} The Markov-switching multifractal of
\citet{CalvetFisher2004,CalvetFisher2008} is the multi-scale
volatility model
\begin{equation}
  \sigma_t^2 \;=\; \bar\sigma^2
    \prod_{k=1}^{\bar k} M_{k, t},
  \label{eq:msm-spec}
\end{equation}
where $\{M_{k, t}\}_{k = 1}^{\bar k}$ are independent
multiplicative components, each a two-state Markov chain on
$\{m_0, 2 - m_0\}$ with switching probability
\begin{equation}
  \gamma_k \;=\; 1 - (1 - \gamma_1)^{b^{k - 1}},
  \qquad b > 1,
  \label{eq:msm-switching}
\end{equation}
indexed by frequency $k$. The lowest-frequency component
$M_{1, t}$ has switching probability $\gamma_1 \ll 1$ and duration
$O(1 / \gamma_1)$ trading days; the highest-frequency component
$M_{\bar k, t}$ has switching probability
$\gamma_{\bar k} \approx 1$ and switches near-daily. The parameter
vector is $\psi^{\mathrm{MSM}} = (m_0, b, \gamma_{\bar k}, \bar k)
\in \R_+^4 \times \mathbb{N}$ with $\gamma_1$ implicit through
\eqref{eq:msm-switching}. Following the recommendation of
\citet[\S3.3]{CalvetFisher2008} we fix $\bar k = 8$ throughout.
The MSM variance ratio
$\kappastat^{\mathrm{MSM}}(H; m_0, b, \gamma_{\bar k}, \bar k)$
admits a closed-form expression in terms of the multipliers'
autocorrelations (Proposition~1 of
\citet{CalvetFisher2008}); we use it directly in the joint LS
fit of Section~\ref{ssec:joint-LS-fit}.

\subsection{Eigenmode-level model predictions}
\label{ssec:per-mode-predictions}

Given the factor decomposition \eqref{eq:factor-decomp} the
$H$-horizon linear-channel covariance matrix is
\begin{equation}
  \SigH^{\text{lin}} \;=\; \sum_{k \in \mathcal{K}}
    \nu_k^{\text{lin}}(H) \cdot \beta_k \beta_k^\top
    \;+\; H \cdot \sigma_\varepsilon^2 I_N,
  \label{eq:Sig-H-lin-decomp}
\end{equation}
where $\nu_k^{\text{lin}}(H) := H \cdot \Var(F_{k, t})
\cdot \kappastat^{\text{lin}}_k(H)$ is factor $k$'s contribution to
linear variance at horizon $H$. Analogously, the
volatility-channel covariance is
\begin{equation}
  \SigH^{\text{vol}} \;=\; \sum_{k \in \mathcal{K}}
    \nu_k^{\text{vol}}(H) \cdot
    (\beta_k \odot \beta_k)(\beta_k \odot \beta_k)^\top
    \;+\; H \cdot \sigma_{\varepsilon,2}^2 I_N,
  \label{eq:Sig-H-vol-decomp}
\end{equation}
where $\sigma_{\varepsilon, 2}^2$ is the variance of
$\varepsilon_t^{\odot 2}$ and
$\nu_k^{\text{vol}}(H) := H \cdot \Var(F_{k, t}^2)
\cdot \kappastat^{\text{vol}}_k(H)$ is factor $k$'s
contribution to volatility-channel variance at horizon $H$
(the direct analogue of $\nu_k^{\text{lin}}(H)$, with the
squared factor $F_{k, t}^2$ in place of $F_{k, t}$ to match
the squared-return covariance object on the left-hand side).

A direct consequence of \eqref{eq:Sig-H-lin-decomp} and
\eqref{eq:Sig-H-vol-decomp} is the eigenvector-coincidence result
that motivates our cross-channel analysis:

\begin{proposition}[Linear/volatility eigenvector coincidence]
\label{prop:linvol-eigvec-coincidence}
Let $\{\eivec_i^{\text{lin}}\}_{i = 1}^N$ and
$\{\eivec_i^{\text{vol}}\}_{i = 1}^N$ denote the eigenvectors of
$\Sigi^{\text{lin}}$ and $\Sigi^{\text{vol}}$ respectively. Under
\eqref{eq:factor-decomp} with a single dominant factor
$k^* \in \mathcal{K}$, the two eigenbases coincide if and only if
the cross-asset loading $\beta_{k^*}$ has approximately uniform
entries.
\end{proposition}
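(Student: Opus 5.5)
The plan is to make the statement precise in the one-factor limit and then reduce it to a comparison of two rank-one directions. With a single dominant factor $k^*$, write $\beta := \beta_{k^*}$. Evaluating \eqref{eq:Sig-H-lin-decomp} and \eqref{eq:Sig-H-vol-decomp} at $H = 1$ gives
\[
\Sigi^{\text{lin}} = \nu^{\text{lin}}\,\beta\beta^\top + \sigma_\varepsilon^2 I_N + R^{\text{lin}},
\qquad
\Sigi^{\text{vol}} = \nu^{\text{vol}}\,(\beta\odot\beta)(\beta\odot\beta)^\top + \sigma_{\varepsilon,2}^2 I_N + R^{\text{vol}}.
\]
Here $R^{\text{lin}}$ and $R^{\text{vol}}$ collect the non-dominant factors. "Dominant" will be formalised as $\|R\|_2$ being small relative to the rank-one spike, and "approximately uniform" as $\beta$ being close in angle to a multiple of the all-ones vector $\one$.

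The first step is the exact case $R = 0$. Adding a multiple of $I_N$ leaves eigenvectors unchanged and shifts every eigenvalue equally. So the leading eigenvector of $\Sigi^{\text{lin}}$ is $\beta/\|\beta\|$, the leading eigenvector of $\Sigi^{\text{vol}}$ is $(\beta\odot\beta)/\|\beta\odot\beta\|$, and each remaining eigenspace is the orthogonal complement of that vector.

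Two orthogonal decompositions of the form "line $\oplus$ complement" coincide if and only if the lines coincide. That holds if and only if $\beta\odot\beta = c\,\beta$ for some scalar $c$, i.e.\ $\beta_a^2 = c\,\beta_a$ for every asset $a$. This forces every nonzero entry of $\beta$ to equal the common value $c$. Up to zero entries, which are degenerate and which I would either exclude by assumption or note explicitly, this is exactly uniform entries. Conversely, $\beta = b\one$ gives $\beta\odot\beta = b\,\beta$, so the two eigenbases coincide, with the degenerate bulk eigenspace taken as a common choice.

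The second step is the "approximate" version, which needs two quantities.

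\begin{itemize}
\item For the identity $\langle \beta, \beta\odot\beta\rangle = \sum_a \beta_a^3$, the cosine between $\beta$ and $\beta\odot\beta$ is
\[
\cos\theta = \frac{\sum_a \beta_a^3}{\|\beta\|_2\,\|\beta\|_4^2}.
\]
By Cauchy--Schwarz this equals $1$ exactly in the uniform case. A stability bound should show that $1-\cos\theta$ is comparable to the normalised dispersion of $\beta$ around its mean, for positive loadings.
\item Davis--Kahan (or the first-order perturbation formula already used in Theorem~\ref{thm:O-first-order}) bounds the rotation of each leading eigenvector caused by $R$. The bound is $\|R\|_2$ divided by the spectral gap $\nu\|\beta\|^2$ in the linear channel, and $\nu\|\beta\odot\beta\|^2$ in the volatility channel.
\end{itemize}

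Combining the two, the angle between $\eivec_1^{\text{lin}}$ and $\eivec_1^{\text{vol}}$ equals $\theta$ up to the Davis--Kahan error. The leading eigenvectors are therefore close if and only if $\theta$ is small, i.e.\ if and only if $\beta$ is near-uniform, and the complements inherit the same closeness. For the bulk I would only claim coincidence of the invariant subspace, since the isotropic bulk has no distinguished basis.

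The main obstacle is making "approximately uniform" quantitative in a way that is genuinely equivalent to a small $\theta$. Sign patterns complicate this: for $\beta$ with entries $\pm b$, the vector $\beta\odot\beta$ is uniform while $\beta$ is not. I would first try restricting to positive loadings, as is natural for market-type factors, or interpret "uniform" as uniform up to sign flips. Under that reading, a reverse Cauchy--Schwarz / Kantorovich-type bound in terms of $\max_a\beta_a/\min_a\beta_a$ should give the two-sided equivalence.
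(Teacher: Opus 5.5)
Your exact-case argument is the paper's argument: the isotropic terms do not move eigenvectors, so everything reduces to whether $\beta$ and $\beta\odot\beta$ are colinear. Your characterisation of colinearity is correct: $\beta_a^2=c\,\beta_a$ for all $a$, so every nonzero entry equals the same $c$. It is more accurate than the paper's sketch, which claims colinearity if and only if all entries are equal in absolute value. That claim fails in both directions. For $\beta=(1,-1,1,-1)^\top$ we get $\beta\odot\beta=\one\perp\beta$, so the leading eigenvectors are orthogonal. Conversely, $\beta=(1,1,0)^\top$ equals its own Hadamard square. So keep your positive-loadings option and drop the reading ``uniform up to sign flips''. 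Only a global sign is harmless, and your own computation already shows that entrywise sign flips destroy the coincidence.

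The genuine gap is the approximate ``only if'' direction. Near-colinearity of $\beta$ and $\beta\odot\beta$ characterises nearness to $c\,\one_S$ for some support $S$, not near-uniformity. The zero-entry degeneracy you set aside in the exact case therefore returns, and positivity does not remove it. Take the positive vector $\beta=(1,\delta,\dots,\delta)^\top$ with $N$ fixed:
\[
\cos\theta=\frac{1+(N-1)\delta^3}{\sqrt{1+(N-1)\delta^2}\,\sqrt{1+(N-1)\delta^4}}\;\longrightarrow\;1\qquad(\delta\to0).
\]
Both leading eigenvectors tend to $e_1$ and coincide to any accuracy. Yet $\max_a\beta_a/\min_a\beta_a=1/\delta\to\infty$, and the normalised dispersion $\sum_a(\beta_a-\bar\beta)^2/\|\beta\|_2^2$ tends to $1-1/N$. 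So neither ``$1-\cos\theta$ comparable to the normalised dispersion'' nor a Kantorovich bound can give a two-sided equivalence. Kantorovich only yields ``ratio near $1\Rightarrow\theta$ small''.

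The fix is to make the ratio bound a hypothesis rather than the conclusion. Suppose $0<m\le\beta_a\le M$ for all $a$. Then
\[
\min_c\sum_a\beta_a^2(\beta_a-c)^2=\|\beta\odot\beta\|_2^2\sin^2\theta,
\]
and this quantity lies between $m^2$ and $M^2$ times $\sum_a(\beta_a-\bar\beta)^2$. Combining this with $\|\beta\odot\beta\|_2^2\in[Nm^4,NM^4]$ and $\|\beta\|_2^2\in[Nm^2,NM^2]$ gives
\[
\Bigl(\frac{m}{M}\Bigr)^{4}\;\le\;\frac{\sin^2\theta\,\|\beta\|_2^2}{\sum_a(\beta_a-\bar\beta)^2}\;\le\;\Bigl(\frac{M}{m}\Bigr)^{4}.
\]
Together with your Davis--Kahan step, this is a correct quantitative version. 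Without such an a priori lower bound on the loadings, the ``only if'' half of the proposition is false as stated. The paper's one-line sketch treats only exact colinearity and never confronts this.
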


\begin{proof}
Sketch (full proof in
Appendix~\ref{app:linvol-eigvec-coincidence}, supplementary
material). The dominant terms of $\Sigi^{\text{lin}}$ and
$\Sigi^{\text{vol}}$ are
$\nu_{k^*}^{\text{lin}}(1) \beta_{k^*} \beta_{k^*}^\top$ and
$\nu_{k^*}^{\text{vol}}(1) (\beta_{k^*} \odot \beta_{k^*})
(\beta_{k^*} \odot \beta_{k^*})^\top$. Their leading eigenvectors
are proportional to $\beta_{k^*}$ and $\beta_{k^*} \odot \beta_{k^*}$,
which are colinear if and only if all entries of $\beta_{k^*}$ are
equal in absolute value.
\end{proof}

Proposition~\ref{prop:linvol-eigvec-coincidence} establishes that
the linear and volatility channels diagonalise distinct matrices.
Empirically the market mode (rank $i = 1$) is the unique eigenmode
where $\beta_k$ has near-uniform entries (its eigenvector is the
$1/\sqrt{N}$ vector to leading order), and accordingly empirical
overlap $|\langle \eivec_1^{\text{lin}}(H), \eivec_1^{\text{vol}}(H) \rangle|^2$
is close to one at short horizons and decays only modestly across
horizons. For sub-leading eigenmodes the cross-channel overlap is
small, even at $H = 1$.

The per-eigenmode model prediction at horizon $H$, conditional on
the empirical eigenvectors $\eivec_i^{\text{lin}}$ at $H = 1$, takes
the form
\begin{equation}
  \kappastat^{\text{lin}}_i(H) \;=\;
    \sum_{k \in \mathcal{K}} w^{\text{lin}}_{i, k}
      \cdot \kappastat^{\text{lin}}_k(H)
    \;+\; w^{\text{lin}}_{i, \varepsilon},
  \label{eq:kappa-lin-prediction}
\end{equation}
\begin{equation}
  \kappastat^{\text{vol}}_i(H) \;=\;
    \sum_{k \in \mathcal{K}} w^{\text{vol}}_{i, k}
      \cdot \kappastat^{\text{vol}}_k(H)
    \;+\; w^{\text{vol}}_{i, \varepsilon},
  \label{eq:kappa-vol-prediction}
\end{equation}
where the per-mode weights $w^{\text{lin}}_{i, k} \geq 0$ and
$w^{\text{vol}}_{i, k} \geq 0$ are constrained to sum to one
across factors (including the idiosyncratic $\varepsilon$ term)
within each channel and each eigenmode. They are
proportional to the squared inner products
$|\langle \eivec_i^{\text{lin}}, \beta_k \rangle|^2$ and
$|\langle \eivec_i^{\text{vol}}, \beta_k \odot \beta_k \rangle|^2$
respectively, normalised so that the variance attributed to each
mode sums correctly.

\subsection{Calvet--Fisher structural anchor}
\label{ssec:cf-anchor}

The multi-memory factor model \eqref{eq:factor-decomp} is not
free-standing. It maps onto the equilibrium asset-pricing model
of \citet[Ch.~9]{CalvetFisher2008}, in which an
Epstein--Zin--Weil representative agent prices a Lucas tree whose
log-dividend follows MSM-volatility dynamics. The
price-dividend ratio $Q(M_t)$ solves a fixed-point Euler
equation, and the equilibrium excess return decomposes as
\citep[eq.~9.7]{CalvetFisher2008}
\begin{equation}
  r_{t+1} \;=\; \ln\frac{1 + Q(M_{t+1})}{Q(M_t)}
    \;+\; \bar\mu_d - r_f
    \;-\; \tfrac{1}{2} \sigma_d^2(M_{t+1})
    \;+\; \sigma_d(M_{t+1}) \varepsilon_{d, t+1}.
  \label{eq:cf-equilibrium-return}
\end{equation}
The first term is volatility feedback: a positive innovation in
the multifractal state $M_{t+1}$ raises expected future volatility,
lowers the discount factor $Q$, and reduces the realised return.
\citet[\S9.2.2]{CalvetFisher2008} show that the magnitude of this
feedback from a shift in component $M_{k, t+1}$ is approximately
proportional to $1/\gamma_k$, the inverse of the component's
switching probability. Lower-frequency components produce larger
return-side feedback; higher-frequency components essentially
none.

This is the structural mechanism behind our $F_M$ factor's
non-trivial linear-channel profile. The ARFIMA$(1, d_M, 0)$ form
in our linear-channel prediction
\eqref{eq:kappa-lin-prediction} is a flexible reduced-form
parametrisation of the inertial range of the
volatility-feedback filter; the MSM cascade in
\eqref{eq:kappa-vol-prediction} is the underlying volatility-
generating mechanism. Empirically CF08 \S9.3 estimates
unconditional feedback $\Var(r) / \Var(\Delta d) - 1 \in [0.20,
0.50]$ on U.S.\ aggregate equity returns at $\bar k \in \{6, 7,
8\}$. Our eigenstructure-level analysis inherits this feedback
mechanism through the cross-asset projection of $F_M$ onto the
linear-channel eigenvectors.

\subsection{Joint LS fit and identification}
\label{ssec:joint-LS-fit}

We estimate the model by joint least-squares matching of the
empirical $(\kappastat^{\text{lin}}, \kappastat^{\text{vol}})$
matrices to their model predictions
\eqref{eq:kappa-lin-prediction}--\eqref{eq:kappa-vol-prediction}.
The objective, defined on the parameter vector
\[
\theta = \bigl(H_P,\, H_A,\, \phi_M,\, d_M,\, m_0,\, b,\,
\gamma_{\bar k},\, \phi_{Vt},\, d_{Vt},\,
\{w_{i, k}^{\text{lin}}\},\, \{w_{i, k}^{\text{vol}}\}\bigr),
\]
reads
\begin{equation}
  \mathcal{L}(\theta) \;=\;
    \sum_{i, H} \frac{[\kappastat^{\text{lin}}_i(H; \theta)
        - \widehat{\kappastat}^{\text{lin}}_i(H)]^2}
      {\max(1, |\widehat{\kappastat}^{\text{lin}}_i(H)|)^2}
    \;+\;
    \sum_{i, H} \frac{[\kappastat^{\text{vol}}_i(H; \theta)
        - \widehat{\kappastat}^{\text{vol}}_i(H)]^2}
      {\max(1, |\widehat{\kappastat}^{\text{vol}}_i(H)|)^2},
  \label{eq:joint-ls-objective}
\end{equation}
plus quadratic penalties enforcing the per-mode sum-to-one
weight constraints. The per-(mode, horizon) residual scaling in
the denominator makes each entry contribute an O(1) magnitude to
the loss regardless of the absolute $\kappastat$ scale, which is
essential because $\kappastat^{\text{vol}}_1(H)$ values at long
horizons exceed sub-leading-mode values by two orders of
magnitude.

The fit uses bootstrap-resampled $\widehat{\kappastat}$ matrices
(moving-block resampling with block size
$2 H_{\max}$) and multi-start L-BFGS-B with warm-restart across
three passes per replicate. The nine global parameters
$(H_P, H_A, \phi_M, d_M, m_0, b, \gamma_{\bar k}, \phi_{Vt},
d_{Vt})$ are identified by the $\widehat{\kappastat}$ moment with
finite bootstrap dispersion: across 1000 replicates the
fractional-Brownian exponents and the ARFIMA parameters carry
modest credible intervals, while the MSM parameters
$(m_0, b, \gamma_{\bar k})$ are identified less sharply ---
$\gamma_{\bar k}$ the widest. The empirical credible intervals,
and a bound-limited case ($d_M$), are reported in
Section~\ref{ssec:global-params}. Per-mode weights vary across
replicates, as expected from the resampled $\widehat{\kappastat}$.
Implementation details and a Hessian-based identifiability
diagnostic are provided in supplementary
material~\ref{app:identifiability}.

\subsection{Connection to existing parametric models}
\label{ssec:related-models}

The multi-memory factor specification \eqref{eq:factor-decomp}
nests several familiar models. With only $F_P + \varepsilon$ the
linear-channel statistic reduces to the per-eigenmode AR(1)
form of Section~\ref{sec:kappa-ar1}, with $\rho =
\rho(H_P)$ implicitly determined by the Hurst exponent. With only
$F_V + \varepsilon$ the volatility-channel statistic reduces to
the univariate MSM of \citet[Ch.~3]{CalvetFisher2008}. The full
five-factor specification is more general than any of these
nested cases because it admits separate cross-asset loadings
$\beta_k$ for each factor's contribution, which our empirical
work shows is necessary to capture the stylised facts at the
eigenstructure level (Section~\ref{sec:empirical}).

\section{Empirical results}
\label{sec:empirical}

We estimate the multi-memory factor model
\eqref{eq:factor-decomp} on four equity datasets, each via 1000
moving-block bootstrap replicates of the joint
$(\widehat{\kappastat}^{\text{lin}}, \widehat{\kappastat}^{\text{vol}})$
matrices.\footnote{The bootstrap pipeline, the code that produced
the results in this section, and a complete reproducibility
manifest are listed in supplementary
material~\ref{app:reproducibility}.} The four primary datasets
are chosen to test the model along two robustness axes: temporal
stability (full sample vs.\ split halves), and cross-sectional
grouping (industry sort vs.\ size$\times$book-to-market sort).
A fifth panel constructed from the \citet{KennethFrenchDataLibrary}
Europe 25 size$\times$book-to-market sort provides a cross-region
replication and is reported in the robustness subsection
(Section~\ref{ssec:robustness}).

\subsection{Data and methodology}
\label{ssec:data-methodology}

\textbf{Datasets.} All four panels are constructed from
\citet{KennethFrenchDataLibrary}-distributed daily value-weighted
portfolios. The \emph{sensitivity} panel (FF 49 industries,
1969-07-01 to 2026-03-31, $T = 14{,}309$ trading days, $N = 48$
industries after dropping any column with missing values) is the
default cross-section in the literature on industry-level
long-horizon dynamics. The \emph{firsthalf}
($T = 7{,}205$, 1969-07-01 to 1997-12-31, $N = 48$) and
\emph{secondhalf} ($T = 7{,}104$, 1998-01-02 to 2026-03-31,
$N = 49$, with the \texttt{Softw} industry recovered as its
pre-1998 missing entries no longer apply) split the sensitivity
panel into pre- and post-1998 sub-samples. The \emph{FF 100} panel
(FF 100 size$\times$book-to-market portfolios in a $10 \times 10$
sort, $1969-07-01$ to 2026-03-31, $T = 14{,}309$, $N = 95$
after column-wise drop of any portfolio with missing
observations) tests robustness to a completely different
cross-sectional grouping.

\textbf{Empirical statistics.} For each panel we compute the
empirical $\kappastat^{\text{lin}}_i(H)$ and
$\kappastat^{\text{vol}}_i(H)$ matrices at horizons
$H \in \{1, 5, 21, 63, 252, 1260\}$ trading days
(daily through 5-year aggregation). Eigenmodes are indexed by
the eigenvalues of $\Sigi^{\text{lin}}$ and $\Sigi^{\text{vol}}$
in descending order.

\textbf{Eigenmode signal and the random-matrix noise floor.}
The $\kappastat$ and $\Omat$ statistics are computed for all $N$
eigenmodes; the canonical ranks displayed in the tables below
($1$, $2$, $3$, $6$, $10$, $24$, $31$, $48$) are representative
points spanning the spectrum. How much of the eigenmode structure is genuine
cross-sectional signal rather than estimation noise is the
question addressed by the random-matrix-theory analysis of
financial covariance matrices \citep{LalouxEtAl1999,
PlerouEtAl2002}: for a covariance estimated from $L$ observations
on $N$ assets, the eigenvalues of a pure-noise Wishart matrix
fall, at ratio $Q = L / N$, within the Marchenko--Pastur band
$\lambda_\pm = 1 + Q^{-1} \pm 2\, Q^{-1/2}$, and only eigenvalues
outside that band carry genuine correlation structure. At the
daily horizon the sensitivity panel has $Q = T / N \approx 298$,
so the noise band is narrow --- of width $4\, Q^{-1/2} \approx
0.23$ around unity --- and essentially the entire eigenvalue
spread of the daily covariance is genuine structure; the
per-eigenmode decomposition is well-posed at short horizons. At
long horizons the picture changes: although the $H$-period
covariance $\Sigma_H$ is computed from overlapping returns and is
numerically full-rank, the number of effectively independent
$H$-period observations is only $\lfloor T / H \rfloor$ --- eleven
at the five-year horizon --- so the effective ratio
$Q_H = \lfloor T / H \rfloor / N$ falls well below one, the
random-matrix noise band widens, and the deep-rank eigenvalues of
$\Sigma_H$ carry a growing share of estimation noise. This is the
random-matrix reading of the leading-order overlap theory's
breakdown at deep ranks (Section~\ref{sec:overlap-limits}): in the
tables below the market mode and sub-leading modes are robust
signal at all horizons, while the deep-rank long-horizon entries
are noise-floor-limited.

\textbf{Bootstrap procedure.} We resample the daily-return time
series via the moving-block bootstrap with block size $2 H_{\max} =
2520$ trading days, large enough to preserve the
long-range autocorrelation structure within each block. For each
replicate $r \in \{1, \ldots, 1000\}$ we (i) draw blocks with
replacement to form a resampled return series, (ii) compute its
$(\kappastat^{\text{lin}}, \kappastat^{\text{vol}})$ matrices,
and (iii) fit the multi-memory factor model
\eqref{eq:factor-decomp} via the least-squares procedure of
Section~\ref{ssec:joint-LS-fit}. The fit is L-BFGS-B from each of
nine basin-aware starting points, run for three warm-restart
passes with the previous pass's best per-mode weights, and we
report the lowest-loss fit across the resulting $27$ candidate
local minima per replicate.

\textbf{Loss-conditional credible intervals.} The objective
$\mathcal{L}(\theta)$ in \eqref{eq:joint-ls-objective} is
non-convex; multi-start with $n_{\text{starts}} = 9$ and three
warm-restart passes lands the best-of-$27$ fit in the same basin
across the bulk of bootstrap resamples on the sensitivity,
secondhalf, and FF~100 panels (rank-1 vol-channel allocations
concentrated in a $\sim 200/1000$ to $\sim 990/1000$ majority
cluster), with a heavier-tailed minority cluster at lower MSM
allocations whose loss medians are within $1.5$--$2.8$ loss units
of the main cluster (Mann--Whitney $p$-values from $0.006$ on
secondhalf to $0.07$ on FF~100). The firsthalf panel is the
exception: the rank-1 volatility-weight bootstrap distribution
is unimodal but heavy-tailed, with the $90\%$ unfiltered CI
spanning $[0.10, 1.00]$, reflecting genuine weak identification
of the firsthalf vol channel on a sample that straddles a
regime transition (Sections~\ref{ssec:regime-change}--%
\ref{ssec:rolling-window} and Appendix~\ref{app:identifiability}).

Throughout Sections~\ref{ssec:global-params}--%
\ref{ssec:beta-inversion} we report medians and $90\%$ credible
intervals over the loss-conditional subset of replicates with
loss below the panel median ($500$ per panel). This applies a
uniform median threshold across all four panels with no
panel-specific tuning; the filter trims the higher-loss tail of
each bootstrap and leaves the headline medians essentially
unchanged. Appendix~\ref{app:identifiability} discusses the
filter's effect in detail.

\subsection{Global parameters and loss distribution}
\label{ssec:global-params}

Table~\ref{tab:global-params} summarises the fitted global
parameters across the four datasets. Three features of the table
are immediate:

\begin{table}[tb]
\centering
\caption{Bootstrap medians ($90\%$ credible intervals in brackets)
of the global parameters of the multi-memory factor model
across the four datasets, $1000$ replicates each. Statistics
report the loss-conditional bootstrap distribution (replicates
with loss below the panel median; $500$ replicates per panel),
removing optimiser basin-stuck replicates per the
methodology of Section~\ref{ssec:data-methodology}. All nine
global parameters carry finite bootstrap credible intervals; the
fractional-Brownian exponents $(H_P, H_A)$ and the ARFIMA
parameters $(\phi_M, d_M, \phi_{Vt}, d_{Vt})$ are the most
sharply identified, the MSM parameters
$(m_0, b, \gamma_{\bar k})$ the least. The antipersistence
parameter $d_M$ rests against its lower bound $-0.49$ on the
firsthalf, secondhalf and FF~100 panels --- the credible
interval is bound-limited there. Loss is the value of the
per-(mode, horizon)-scaled LS objective
\eqref{eq:joint-ls-objective}.}
\label{tab:global-params}
\resizebox{\textwidth}{!}{%
\begin{tabular}{lcccc}
\toprule
Parameter & Sensitivity & Firsthalf & Secondhalf & FF 100 \\
\midrule
Loss median   & 19.3 & 24.1 & 18.6 & 47.7 \\
Loss CI       & [16.8, 20.4] & [21.5, 25.0] & [14.8, 20.3] & [40.7, 51.4] \\
$H_P$         & 0.552 [0.53, 0.57] & 0.567 [0.51, 0.61] & 0.526 [0.51, 0.57] & 0.521 [0.51, 0.56] \\
$H_A$         & 0.233 [0.14, 0.30] & 0.242 [0.14, 0.29] & 0.265 [0.20, 0.30] & 0.175 [0.06, 0.27] \\
$\phi_M$      & 0.825 [0.78, 0.91] & 0.878 [0.81, 0.90] & 0.883 [0.84, 0.90] & 0.861 [0.82, 0.90] \\
$d_M$         & $-0.324$ [$-0.49$, $-0.27$] & $-0.483$ [$-0.49$, $-0.36$] & $-0.481$ [$-0.49$, $-0.39$] & $-0.485$ [$-0.49$, $-0.44$] \\
$m_0$         & 1.41 [1.36, 1.95] & 1.61 [1.43, 1.81] & 1.62 [1.44, 1.95] & 1.75 [1.50, 1.95] \\
$b$           & 2.11 [1.65, 2.52] & 2.60 [2.19, 2.81] & 2.58 [2.05, 2.77] & 2.35 [1.57, 2.74] \\
$\gamma_{\bar k}$ & 0.46 [0.06, 0.75] & 0.79 [0.59, 0.93] & 0.58 [0.10, 0.72] & 0.63 [0.11, 0.81] \\
$\phi_{Vt}$   & 0.347 [0.30, 0.40] & 0.431 [0.40, 0.46] & 0.312 [0.28, 0.36] & 0.345 [0.27, 0.46] \\
$d_{Vt}$      & $-0.175$ [$-0.21$, $-0.16$] & $-0.228$ [$-0.26$, $-0.21$] & $-0.233$ [$-0.28$, $-0.21$] & $-0.245$ [$-0.34$, $-0.21$] \\
\bottomrule
\end{tabular}%
}
\end{table}

\textbf{(i) The global parameters are identified with finite
bootstrap dispersion.} All nine global parameters are resolved
by the joint $\widehat{\kappastat}$ moment on every resampled
matrix. The fractional-Brownian exponents $(H_P, H_A)$ and the
ARFIMA parameters $(\phi_M, d_M, \phi_{Vt}, d_{Vt})$ are the most
sharply identified, with narrow bootstrap credible intervals on
all four panels; the MSM parameters $(m_0, b, \gamma_{\bar k})$
are identified less tightly, $\gamma_{\bar k}$ most loosely. The
antipersistence parameter $d_M$ is a bound-limited case: its
bootstrap distribution rests against the lower bound $-0.49$ on
the firsthalf, secondhalf and FF 100 panels, indicating that the
data favour an $F_M$ memory at the antipersistent edge of the
stationary ARFIMA range.

\textbf{(ii) The factor-profile parameters are stable across
datasets.} $H_P \approx 0.52$--$0.57$ across all four panels, and
$H_A \approx 0.25$ on the three FF 49 panels (lower, $\approx
0.17$, on the FF 100 size$\times$book-to-market sort). The
fractional-Brownian-motion exponents that drive the persistent
and antipersistent factor profiles take nearly the same values
regardless of which equity universe or sub-period the model is
fitted to, and the ARFIMA persistence parameters are likewise
close across panels ($\phi_M \approx 0.85$--$0.88$). The MSM
cascade parameters $(m_0, b, \gamma_{\bar k})$ show more
variation, mostly through $\gamma_{\bar k}$, the MSM parameter
with the widest bootstrap CI.

\textbf{(iii) The fit quality is comparable across the FF 49
panels and higher on FF 100.} The full-sample sensitivity panel
and the secondhalf panel attain essentially the same loss
($19.3$ and $18.6$); the firsthalf panel is somewhat harder to
fit ($24.1$); the FF 100 panel has the highest loss ($47.7$).
The shorter split-half panels do not carry a uniformly higher
loss than the full sample --- the secondhalf in fact fits as
well as the full sample --- so the loss scale is not a simple
function of sample length. The FF 100 loss tracks its larger
residual parameter count ($5 N = 475$ per-mode weights versus
$240$ on the industry sorts) and its finer cross-sectional sort.

\subsection{The seven stylised facts of long-horizon equity
dynamics}
\label{ssec:stylised-facts}

The empirical $\widehat{\kappastat}^{\text{lin}}$ and
$\widehat{\kappastat}^{\text{vol}}$ matrices on the sensitivity
panel are shown in Figure~\ref{fig:kappa-heatmaps}. The
volatility-channel rank-1 cascade signature
($\kappa^{\text{vol}}_1$ climbing from $\approx 1$ at the daily
horizon to $\approx 50$ at the 5-year horizon) is visible at a
glance, alongside the deep-rank overshoot to values well below
unity at long horizons. The linear-channel panel shows the
much milder rank-1 rise-then-fall and the steady deep-rank
decay.

\begin{figure}[tb]
\centering
\includegraphics[width=0.95\linewidth]{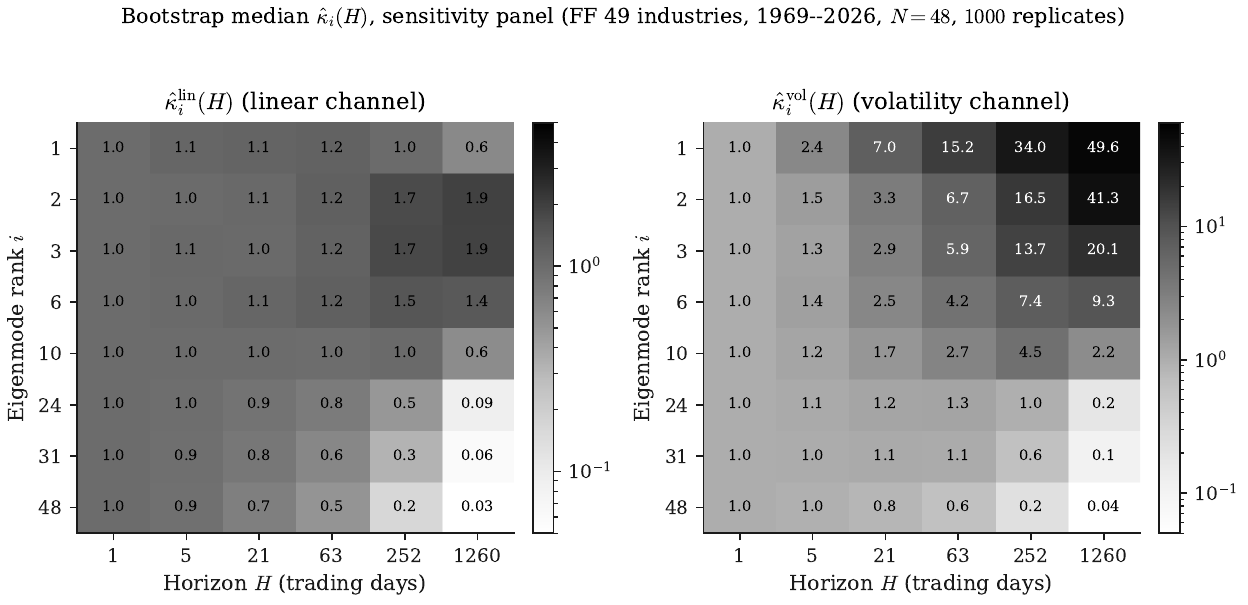}
\caption{Bootstrap medians of
$\widehat{\kappastat}^{\text{lin}}_i(H)$ (left) and
$\widehat{\kappastat}^{\text{vol}}_i(H)$ (right) at canonical
eigenmode ranks and horizons, sensitivity panel ($N = 48$,
$1000$ moving-block bootstrap replicates). Cells annotated with
the median value; colour scale is logarithmic. The i.i.d.\ null
is $\kappastat = 1$ at every $(i, H)$ entry.}
\label{fig:kappa-heatmaps}
\end{figure}

The multi-memory factor model recovers seven stylised facts of
long-horizon equity dynamics. The first six are read directly off
the per-mode weight pattern of Table~\ref{tab:rank-weights}; the
seventh is a property of the volatility eigenvalue spectrum. They
are:

\begin{enumerate}[itemsep=3pt]
\item \emph{Market-mode variance-ratio anomaly.} The rank-1
      linear-channel variance ratio $\kappastat^{\text{lin}}_1(H)$
      rises through intermediate horizons and falls at long
      horizons \citep{LoMacKinlay1988, PoterbaSummers1988}.
\item \emph{Factor momentum at sub-leading eigenmodes.} The
      sub-leading principal directions (ranks 2--3) carry
      monotone-rising $\kappastat^{\text{lin}}_i(H)$
      \citep{AsnessEtAl2013, EhsaniLinnainmaa2022}.
\item \emph{Long-horizon mean reversion at deep eigenmodes.} The
      deep eigenmodes (ranks 24--48) carry steeply decreasing
      $\kappastat^{\text{lin}}_i(H)$
      \citep{Cochrane1988, FamaFrench1988}.
\item \emph{Short-range volatility clustering.} The
      volatility-channel variance ratio rises sharply within the
      first trading month \citep{Cont2001}.
\item \emph{Multi-scale volatility long memory.} The market-mode
      volatility ratio $\kappastat^{\text{vol}}_1(H)$ climbs to
      $\approx 50$ by the five-year horizon --- the multifractal
      cascade signature \citep{CalvetFisher2008}.
\item \emph{Transitory volatility at deep eigenmodes.} The deep
      eigenmodes carry an episodic-burst transitory volatility
      component --- volatility that clusters over finite horizons
      and then decays \citep{Cont2001}, in contrast to the
      multi-scale long memory of fact 5 --- captured in the model
      by the factor $F_{Vt}$.
\item \emph{Cross-sectional concentration of volatility.} The
      volatility cross-section concentrates onto a single
      dominant eigendirection as the horizon grows, while the
      linear cross-section broadens. The dominance of a
      market-wide common eigenvalue in financial covariance
      matrices is a central finding of random-matrix theory
      \citep{LalouxEtAl1999, PlerouEtAl2002}; its horizon
      dependence is documented at the end of this subsection.
\end{enumerate}

The per-mode weights $\{w^{\text{lin}}_{i, k},
w^{\text{vol}}_{i, k}\}$ in
\eqref{eq:kappa-lin-prediction}--\eqref{eq:kappa-vol-prediction}
allocate each eigenmode's $\kappastat$ profile across the five
factor types. Table~\ref{tab:rank-weights} reports their bootstrap
medians at canonical eigenmode ranks for the sensitivity panel;
the supplementary material gives the analogous tables for the
other three datasets and the basin-conditional analysis.

\begin{table}[H]
\centering
\caption{Loss-filtered bootstrap medians of the per-mode weights
at canonical ranks, sensitivity panel (FF 49 industries,
$1969$--$2026$, $N = 48$, $500$ loss-filtered replicates).
Several of the stylised facts are
visible directly in the weight pattern: factor-momentum
persistence concentrated at sub-leading modes
($w^{\text{lin}}_P \approx 0.70$ at ranks 2--3); deep-mode mean
reversion ($w^{\text{lin}}_A \approx 0.79$ at rank 48);
volatility long memory loading on the market mode
($w^{\text{vol}}_{\text{MSM}} = 1.00$ at rank 1); transitory
volatility concentrating onto the deepest modes
($w^{\text{vol}}_{Vt} \to 1.00$ at rank 48).}
\label{tab:rank-weights}
\begin{tabular}{cccccc}
\toprule
Rank & $w^{\text{lin}}_P$ & $w^{\text{lin}}_A$ & $w^{\text{lin}}_M$ &
$w^{\text{vol}}_{\text{MSM}}$ & $w^{\text{vol}}_{Vt}$ \\
\midrule
 1 & 0.16 & 0.31 & 0.19 & 1.00 & 0.00 \\
 2 & 0.72 & 0.11 & 0.00 & 1.00 & 0.00 \\
 3 & 0.69 & 0.11 & 0.00 & 0.82 & 0.00 \\
 6 & 0.49 & 0.12 & 0.00 & 0.47 & 0.00 \\
10 & 0.15 & 0.35 & 0.13 & 0.10 & 0.00 \\
24 & 0.00 & 0.66 & 0.24 & 0.00 & 0.58 \\
31 & 0.00 & 0.74 & 0.24 & 0.00 & 0.80 \\
48 & 0.00 & 0.79 & 0.20 & 0.00 & 1.00 \\
\bottomrule
\end{tabular}
\end{table}

Reading the table along each row exposes the structural identity
of each eigenmode:

\begin{itemize}[itemsep=2pt]
\item \emph{Rank 1 (market mode):}
      $w^{\text{lin}}_A = 0.31$, $w^{\text{lin}}_M = 0.18$,
      $w^{\text{lin}}_P = 0.16$ --- antipersistent and ARFIMA
      components dominate the linear channel, with a smaller
      persistent contribution. The
      composite linear-channel profile is the
      \citet{LoMacKinlay1988} / \citet{PoterbaSummers1988}
      variance-ratio anomaly: $\kappastat^{\text{lin}}_1(H)$
      rises through intermediate horizons (the ARFIMA component)
      and falls at long horizons (the antipersistent
      component). The volatility-channel pattern at rank 1 is
      pure MSM cascade ($w^{\text{vol}}_{\text{MSM}} = 1.00$),
      with no transitory contribution.
\item \emph{Ranks 2--3 (sub-leading principal directions):}
      $w^{\text{lin}}_P \approx 0.70$ dominant, with
      $w^{\text{lin}}_M = 0.00$. The composite linear-channel
      profile is monotone-rising $\kappastat^{\text{lin}}_i(H)$,
      consistent with the factor-momentum literature
      \citep{AsnessEtAl2013,EhsaniLinnainmaa2022}. The
      volatility channel retains substantial MSM loading
      ($w^{\text{vol}}_{\text{MSM}} = 1.00, 0.82$) --- the
      sub-leading principal directions inherit the market mode's
      cascade structure.
\item \emph{Rank 10 (crossover):} The linear channel is
      antipersistence-led ($w^{\text{lin}}_A = 0.35$), with
      smaller persistent and ARFIMA contributions; the
      volatility channel has largely handed off to the
      iid noise floor ($w^{\text{vol}}_{\text{MSM}}$ and
      $w^{\text{vol}}_{Vt}$ both near zero) --- the crossover
      between the MSM-dominated leading modes and the
      $F_{Vt}$-dominated deep modes.
\item \emph{Ranks 24--48 (deep eigenmodes):}
      $w^{\text{lin}}_A \in [0.65, 0.79]$ dominant, with
      $w^{\text{lin}}_M \in [0.19, 0.24]$ residual. The composite
      linear-channel profile is steeply decreasing
      $\kappastat^{\text{lin}}_i(H)$ --- the long-horizon
      mean-reversion signal of \citet{Cochrane1988} and
      \citet{FamaFrench1988}, here localised to the bulk of the
      eigenspectrum rather than the index level. The
      volatility-channel pattern in this range is dominated by
      $F_{Vt}$, rising from $w^{\text{vol}}_{Vt} \approx 0.57$ at
      rank 24 to $\approx 1.00$ at rank 48, with no MSM
      contribution.
\end{itemize}

\noindent
Comparable rank-pattern tables for the firsthalf, secondhalf,
and FF 100 datasets, in the supplementary material, show that
the same qualitative pattern reproduces in every cross-section:
sub-leading factor momentum, deep-mode mean reversion, and
market-mode rise-then-fall coexist with the volatility long
memory at the market mode and the transitory volatility
contribution at deep modes. The first six stylised facts ---
the weight-pattern facts (1)--(6) above --- are thus recovered
by the multi-memory factor model uniformly across all four
datasets.

The seventh stylised fact is a property of the volatility
eigenvalue spectrum rather than the per-mode weight pattern. On
the sensitivity panel, the share of total variance carried by
the leading eigenvalues of the $H$-horizon volatility covariance
$\Sigma^{\text{vol}}_H$ grows with the horizon: the top
eigenvalue carries $48\%$ of the volatility cross-section at the
daily horizon, rises to $76\%$ by the quarterly horizon, and
plateaus near $74$--$76\%$ thereafter, while the top three
eigenvalues reach $91\%$ at the five-year horizon. The
linear-channel covariance $\Sigma^{\text{lin}}_H$ behaves
oppositely: its top-eigenvalue share is roughly flat through the
quarterly horizon ($54$--$59\%$) and then falls to $45\%$ at five
years, while its top-ten share broadens to $93\%$. Volatility
long memory therefore concentrates cross-sectionally onto a
single common mode as the horizon lengthens, whereas linear
long-horizon structure diffuses across many modes --- the
spectral counterpart of the market-mode dominance of
$\kappastat^{\text{vol}}_1$ documented in
Table~\ref{tab:rank-weights}. The dominance of a single leading
``market'' eigenvalue in the cross-section of financial returns
is a central finding of the random-matrix-theory analysis of
correlation matrices \citep{LalouxEtAl1999, PlerouEtAl1999,
PlerouEtAl2002}; the volatility channel exhibits an analogous
concentration, with the distinguishing feature that it
intensifies with the horizon rather than holding fixed.

\subsection{Volatility-channel rank-1 cascade and the firsthalf
weak-identification}
\label{ssec:regime-change}

We anchor the analysis with the magnitude of the volatility-channel
signal at the market mode. Table~\ref{tab:kappa-vol-rank1} reports
loss-filtered bootstrap medians of $\kappastat^{\text{vol}}_1(H)$
across the four datasets at three canonical horizons.

\begin{table}[tb]
\centering
\caption{Loss-filtered bootstrap medians (and $90\%$ credible
intervals) of $\kappastat^{\text{vol}}_1(H)$ at the market mode
across the four datasets, at the monthly ($H = 21$), annual
($H = 252$) and five-year ($H = 1260$) horizons. Under the
i.i.d.\ null $\kappastat^{\text{vol}}_1(H) = 1$ at every horizon;
values shown sit one to two orders of magnitude above this null
on three of the four panels (sensitivity, secondhalf, FF 100)
and only modestly above on the firsthalf.}
\label{tab:kappa-vol-rank1}
\begin{tabular}{lccc}
\toprule
Dataset & $\kappastat^{\text{vol}}_1(21)$ &
          $\kappastat^{\text{vol}}_1(252)$ &
          $\kappastat^{\text{vol}}_1(1260)$ \\
\midrule
Sensitivity & $7.1\;[4.3, 8.9]$  & $34.0\;[12.4, 49.3]$ & $48.1\;[21.4, 76.9]$ \\
Firsthalf   & $2.9\;[2.8, 3.3]$  & $5.0\;[3.9, 8.1]$    & $3.4\;[1.8, 8.6]$    \\
Secondhalf  & $8.7\;[7.9, 8.9]$  & $42.1\;[20.0, 50.7]$ & $36.8\;[16.9, 57.2]$ \\
FF 100      & $6.9\;[5.0, 7.8]$  & $31.6\;[13.9, 42.6]$ & $44.2\;[20.5, 67.9]$ \\
\bottomrule
\end{tabular}
\end{table}

The factor weights at the market mode (rank 1) carry the
cross-panel cascade structure. Table~\ref{tab:rank1-weights}
reports the rank-1 bootstrap medians for all four panels.

\begin{table}[tb]
\centering
\caption{Loss-filtered bootstrap medians of the per-mode weights at
the market mode (rank 1) across the four datasets. The volatility
channel concentrates on the MSM cascade
($w^{\text{vol}}_{\text{MSM}} = 1.00$) on sensitivity and
secondhalf and on $w^{\text{vol}}_{\text{MSM}} = 0.66$ on FF~100;
the firsthalf panel is the exception, with a weakly-identified
rank-1 vol-channel mode-mix
($w^{\text{vol}}_{\text{MSM}} = 0.29$ at the median, $90\%$
unfiltered-bootstrap CI $[0.10, 1.00]$).}
\label{tab:rank1-weights}
\begin{tabular}{lccccc}
\toprule
Dataset & $w^{\text{lin}}_P$ & $w^{\text{lin}}_A$ &
$w^{\text{lin}}_M$ & $w^{\text{vol}}_{\text{MSM}}$ &
$w^{\text{vol}}_{Vt}$ \\
\midrule
Sensitivity & 0.16 & 0.31 & 0.19 & 1.00 & 0.00 \\
Firsthalf   & 0.31 & 0.07 & 0.41 & 0.29 & 0.04 \\
Secondhalf  & 0.23 & 0.38 & 0.12 & 1.00 & 0.00 \\
FF 100      & 0.37 & 0.10 & 0.16 & 0.66 & 0.02 \\
\bottomrule
\end{tabular}
\end{table}

Three of the four panels show clean MSM-cascade dominance at the
market mode: sensitivity and secondhalf both place
$w^{\text{vol}}_{\text{MSM}} = 1.00$ at rank 1, and FF~100 places
$w^{\text{vol}}_{\text{MSM}} = 0.66$. The exception is the
firsthalf panel ($1969$--$1997$), where the rank-1 vol-channel
allocation is weakly identified ---
$w^{\text{vol}}_{\text{MSM}} = 0.29$ at the median with a $90\%$
unfiltered-bootstrap CI of $[0.10, 1.00]$, spanning essentially
the entire $[0, 1]$ allocation space. The loss-conditional filter
(Appendix~\ref{app:identifiability}) tightens the firsthalf rank-1
CI only modestly, to $[0.08, 0.75]$.

Figure~\ref{fig:regime-change} visualises the cross-panel
comparison. The narrow whiskers on sensitivity, secondhalf, and
FF~100 at the rank-1 vol-channel mark these panels as containing
data from a post-transition regime where the multi-scale
volatility cascade has taken hold. The wide firsthalf rank-1
whisker reflects that the $1969$--$1997$ sample mixes pre- and
post-transition data: the cluster-conditional loss test of
Appendix~\ref{app:identifiability} shows the high-MSM and low-MSM
firsthalf clusters fit the data with near-equivalent loss,
inconsistent with the wide CI being optimiser-stuck noise on a
single basin. Section~\ref{ssec:rolling-window} uses a
rolling-window analysis to localise the transition.

\begin{figure}[H]
\centering
\includegraphics[width=0.95\linewidth]{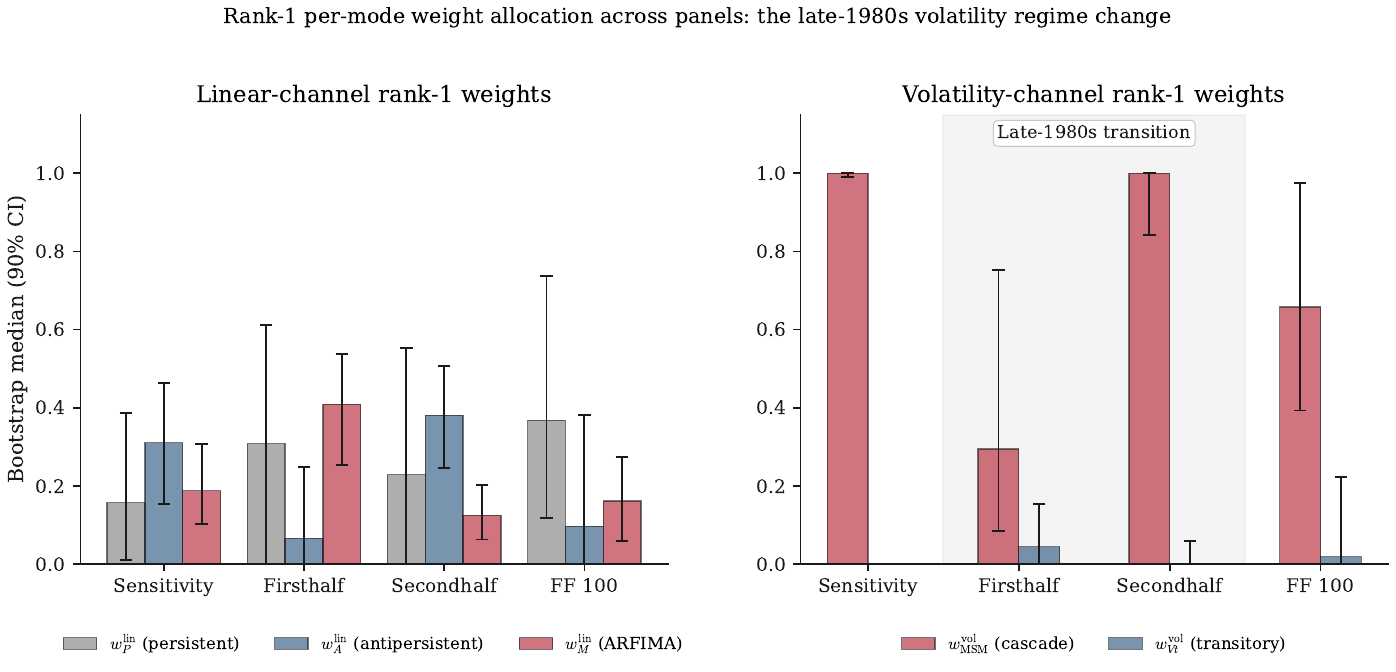}
\caption{Rank-1 per-mode weight allocation across the four panels.
Left: linear-channel weights ($w^{\text{lin}}_P$, $w^{\text{lin}}_A$,
$w^{\text{lin}}_M$). Right: volatility-channel weights
($w^{\text{vol}}_{\text{MSM}}$, $w^{\text{vol}}_{Vt}$). Bars are
bootstrap medians; whiskers are $90\%$ credible intervals. The
wide firsthalf whiskers on the rank-1 vol-channel allocation
reflect a weakly-identified mode-mix on a sample that straddles
the regime transition localised by the rolling-window analysis
(Section~\ref{ssec:rolling-window}).}
\label{fig:regime-change}
\end{figure}

The full-sample sensitivity and secondhalf panels both place
$w^{\text{vol}}_{\text{MSM}} = 1.00$ at the market mode. The
sensitivity panel pools the full $1969$--$2026$ sample and is
dominated by the post-transition signature because the
post-transition sub-period contains the larger raw volatility
variance (Section~\ref{ssec:data-methodology}). The FF~100 panel
places $w^{\text{vol}}_{\text{MSM}} = 0.66$ at rank 1; the
slightly diluted MSM share on FF~100 reflects the cross-sectional
sorting distributing the market mode's volatility cascade across
multiple top eigenmodes (see supplementary
material~\ref{app:ff100-detail}).

The firsthalf rank-1 vol-channel allocation cannot be sharpened by
adding $n_{\text{starts}}$ or tightening the optimiser tolerance:
it reflects a genuine mode-mix on a sample that contains both
pre- and post-transition data. A clean sub-sample contrast
requires localising the transition itself, which the next
subsection does.

\paragraph{Cross-over time scale of the MSM cascade.} A
quantitative correlate of the rank-1 cascade is the duration
$1/\gamma_1$ of the lowest-frequency MSM component, which sets
the time scale over which the multiplicative-cascade structure
unfolds. Working backward from the fitted
$(\gamma_{\bar k}, b)$ via $\gamma_k = 1 - (1 - \gamma_1)^{b^{k - 1}}$
\eqref{eq:msm-switching}, the loss-filtered bootstrap medians (with
$90\%$ CIs) of $1/\gamma_1$ across the four datasets are:
\begin{equation*}
\begin{aligned}
\text{Sensitivity:}\quad & 1.1 \text{ yr } [0.5, 18.9], &
\text{Firsthalf:}\quad   & 2.2 \text{ yr } [0.5, 4.5], \\
\text{Secondhalf:}\quad  & 4.0 \text{ yr } [0.7, 32.6], &
\text{FF 100:}\quad      & 2.0 \text{ yr } [0.5, 6.8].
\end{aligned}
\end{equation*}
The secondhalf cross-over time scale is approximately
$1.8\times$ the firsthalf's at the median. The wide upper CIs on
sensitivity and secondhalf reflect the weak identification of
$\gamma_{\bar k}$ in the post-transition cascade regime
(Section~\ref{ssec:global-params}); the firsthalf and FF~100
panels have tighter upper CIs because their fits assign less mass
to the very slow tail of the cascade. Figure~\ref{fig:crossover}
displays the full bootstrap distributions. The cross-over
time-scale contrast is the quantitative counterpart of the
rank-1 cascade contrast: the post-transition cascade regime
extends the lowest-frequency component of the volatility cascade
to multi-year horizons, while the firsthalf sample --- which
mixes pre- and post-transition data --- sits at a shorter
median time scale.

\begin{figure}[tb]
\centering
\includegraphics[width=0.85\linewidth]{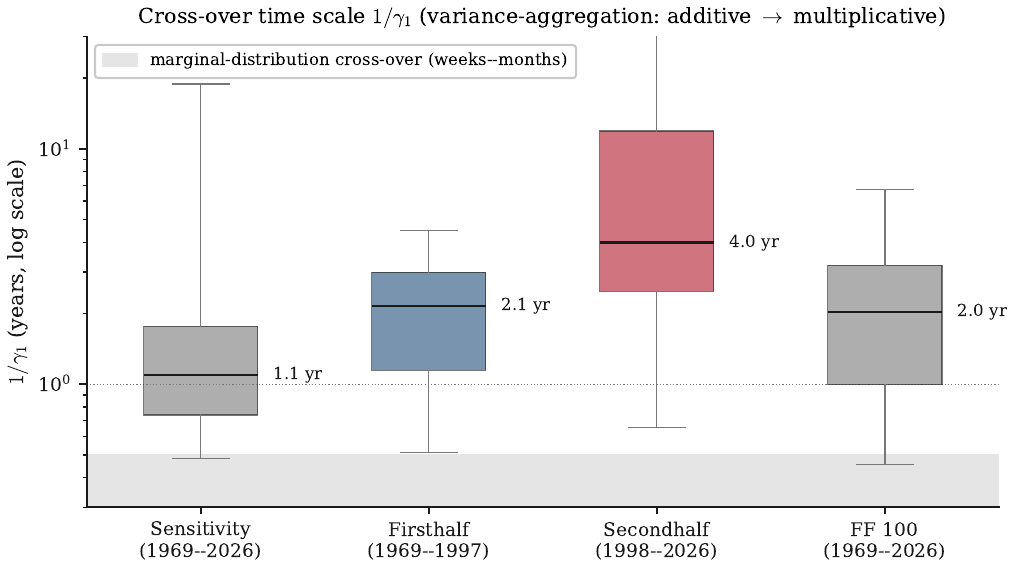}
\caption{Bootstrap distributions of the cross-over time scale
$1/\gamma_1$ (years, log scale) across the four panels. Boxes
span the interquartile range; whiskers reach the $5\%$ and
$95\%$ percentiles. Annotated medians: sensitivity $1.1$ yr,
firsthalf $2.2$ yr, secondhalf $4.0$ yr, FF 100 $2.0$ yr. The
pale band at the bottom marks the much-faster
marginal-distribution cross-over (truncated-L\'evy-flight to
approximately Gaussian aggregation, on the order of weeks to
months), discussed in Section~\ref{ssec:1998-regime-change}.}
\label{fig:crossover}
\end{figure} The mid-frequency
component duration $1/\gamma_4$ (the geometric midpoint of the
cascade for $\bar k = 8$) is approximately $0.2$ yr ($\sim 50$
trading days) across all four datasets --- a time scale that
matches typical empirical volatility-autocorrelation decay
constants in the econophysics literature
\citep{Cont2001,BouchaudPotters2003}. Our $1/\gamma_1$ estimates
sit somewhat below the longest-duration $10$--$20$ yr range
reported by \citet[\S9.3]{CalvetFisher2008} for U.S.\ aggregate
equity at preferred $\bar k \in \{6, 8\}$; the discrepancy is
plausibly because their identification uses maximum likelihood
estimation (MLE) on a univariate
aggregate-return likelihood while ours uses joint LS on
eigenmode-decomposed $(\kappastat^{\text{lin}},
\kappastat^{\text{vol}})$ statistics across the full
cross-section.

\subsection{Rolling-window localisation of the regime transition}
\label{ssec:rolling-window}

The static split-half analysis sets the firsthalf--secondhalf
boundary at $1998$, the approximate midpoint of the
$1969$--$2026$ sample. The firsthalf panel's weakly-identified
rank-1 vol-channel mode-mix
(Section~\ref{ssec:regime-change}) is the first sign that the
underlying regime transition does not coincide with the $1998$
boundary: a $28$-year window placed before the transition would
produce a tight rank-1 allocation, and one placed after the
transition would produce a tight (and different) allocation;
the wide CI on $1969$--$1997$ is the signature of a sample
that straddles the transition. We localise the transition with
a rolling-window fit.

The rolling window slides $28$-year sub-samples (matching the
firsthalf and secondhalf panel lengths for direct
apples-to-apples comparison) in $2$-year strides across the
full $1969$--$2026$ sample, producing $15$ windows centred
$1983$-$06$ through $2011$-$06$. At each window we run a
$1000$-replicate moving-block bootstrap on the joint
factor model of Section~\ref{ssec:joint-LS-fit}, with the
same multi-start configuration and loss-conditional below-
median filter as the static panels
(Section~\ref{ssec:data-methodology}) applied per-window.
Chaining the windows by centre date traces both the parameter
medians and the $90\%$ confidence bands across the sample
period.

Three trajectories on the FF~49 industry panel carry the
transition signal. The rank-1 MSM weight
$w^{\text{vol}}_{\text{MSM}}$ has median $0.31$ in the
earliest window (centre $1983$-$06$, sample span
$1969$--$1997$), drifts through $0.30$ ($1985$ centre), rises
to $0.47$ ($1987$ centre), $0.77$ ($1989$ centre), $0.93$
($1991$ centre), $0.99$ ($1993$ centre), and saturates at
median $1.00$ from the $1995$-centred window onward (sample
span $1981$--$2009$). The rank-1 linear $F_M$ weight
$w^{\text{lin}}_M$ drifts monotonically downward from median
$0.40$ in the $1983$-centre window to $0.12$ in the
$2011$-centre window. The ARFIMA persistence parameter
$\phi_{Vt}$ falls from median $0.43$ in the early-sample windows
to $0.31$ stable from the mid-$1990$s centres onward.

The $90\%$ CI bands around these medians make the regime change
statistically robust to the bootstrap resampling. On
$w^{\text{vol}}_{\text{MSM}}$, the $1983$-centre window has CI
$[0.11, 0.80]$ and the $1999$-centre window has CI
$[0.84, 1.00]$ --- the upper bound of the early-sample CI sits
strictly below the lower bound of the late-sample CI, so the
early and late distributions are non-overlapping at the $90\%$
level. On $w^{\text{lin}}_M$, the analogous bands separate by
the $1995$ centre: the $1983$-centre window has CI
$[0.25, 0.54]$ against the $2011$-centre CI $[0.06, 0.20]$, so
the linear-market mode rank-1 weight ceding to the
volatility-MSM mode rank-1 weight is itself a CI-separated
transition rather than a noise artefact.

The transition-period windows carry a substantive bimodality.
Across the $1987$-, $1989$-, $1991$- and $1993$-centre windows,
the median $w^{\text{vol}}_{\text{MSM}}$ rises from $0.47$ to
$0.99$, but the $90\%$ CIs span $[0.12, 1.00]$ through
$[0.23, 1.00]$ --- the lower bounds remain near the
pre-transition floor while the upper bounds saturate at the
post-transition ceiling. The per-replicate bootstrap fits in
these windows are not concentrated near the median; they
scatter between the burst and cascade regimes within each
window's resampled sample. The $28$-year window straddles the
transition for the $1987$--$1993$ centres, so each resample
captures a different mix of pre- and post-transition data and
the joint fit lands in whichever basin matches the dominant
content of the resampled panel. By the $1999$ centre (sample
span $1985$--$2013$), the cascade regime dominates every
resample and the CI tightens to $[0.84, 1.00]$.

The rolling-window data alone do not pin the transition to a
single year --- the $28$-year window absorbs the transition
over a multi-year band, and the per-window bimodality on
$1987$--$1993$ centres is consistent with either a sharp switch
near the late 1980s or a more gradual diffusion through the
1990s. The trajectory is inconsistent with a transition near
$1998$, the static split-half boundary: $w^{\text{vol}}_{\text{MSM}}$
on the windows whose samples are dominantly pre-$1998$ (the
$1985$- and $1987$-centre windows) has median $0.30$--$0.47$,
not the cascade-saturated $1.00$ of the post-1998 windows. The
firsthalf static panel ($1969$--$1997$) contains roughly a
decade of post-transition data, which is the source of the
rank-1 vol-channel weak identification documented in
Section~\ref{ssec:regime-change}.
Figure~\ref{fig:rolling-window} shows the
$w^{\text{vol}}_{\text{MSM}}$ and $w^{\text{vol}}_{Vt}$ median
trajectories with their $90\%$ CI ribbons on the FF~49 panel.

\begin{figure}[tb]
\centering
\includegraphics[width=0.95\linewidth]{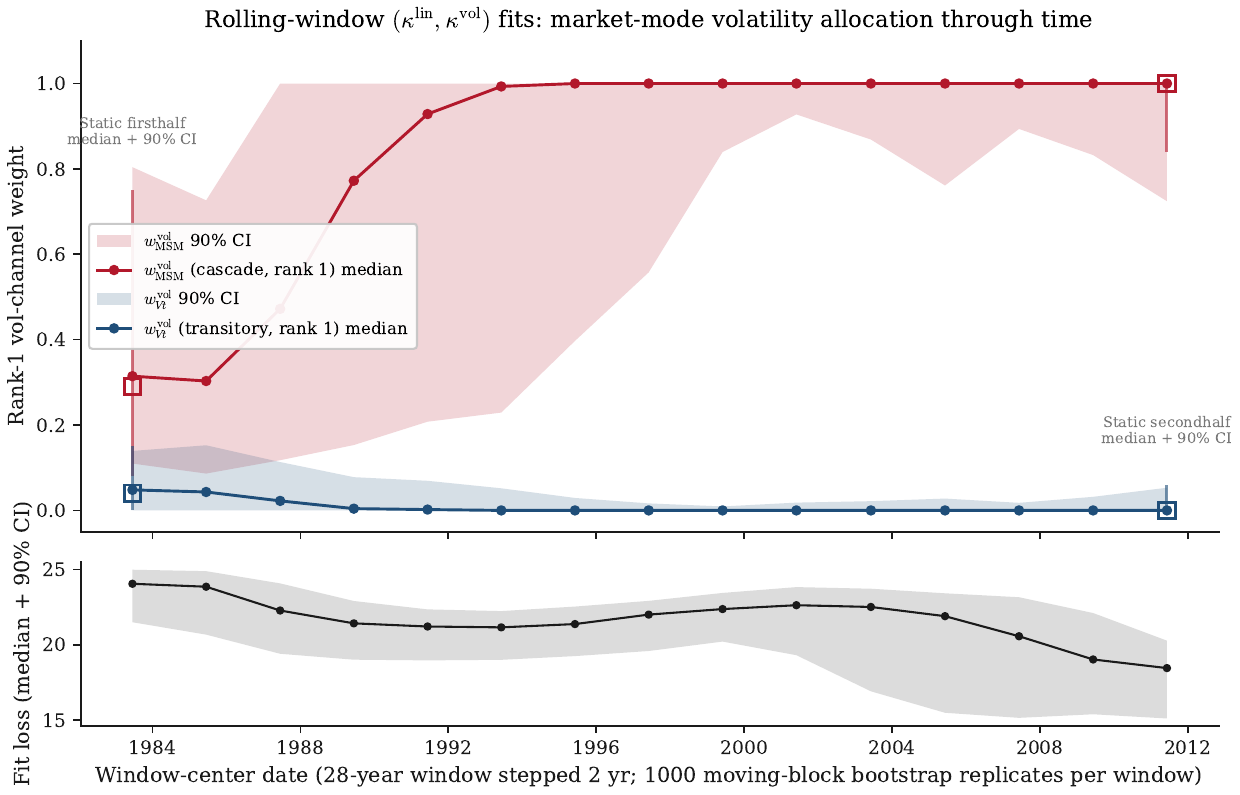}
\caption{Rolling-window rank-1 volatility-channel weights on the
FF~49 industry panel: $28$-year windows in $2$-year strides,
$15$ windows centred $1983$-$06$ through $2011$-$06$, with
$1000$-replicate moving-block bootstrap per window. Top: rank-1
$w^{\text{vol}}_{\text{MSM}}$ (cascade weight, red) and
$w^{\text{vol}}_{Vt}$ (transitory weight, blue) median
trajectories as a function of window-centre date, with $90\%$
CI ribbons. The static firsthalf and secondhalf loss-filtered
medians and $90\%$ CIs are overlaid as horizontal reference
bands at the corresponding centres. Bottom: per-window median
fit loss with $90\%$ CI ribbon. The MSM-cascade weight median
rises from $0.30$ ($1985$ centre) through $0.93$ ($1991$
centre) and saturates at $1.00$ from the $1995$ centre onward;
the $1983$-centre CI upper bound ($0.80$) sits strictly below
the $1999$-centre CI lower bound ($0.84$), so the early- and
late-sample $90\%$ CI bands on $w^{\text{vol}}_{\text{MSM}}$
are non-overlapping. The transition is localised in the late
1980s rather than at the static $1998$ boundary.}
\label{fig:rolling-window}
\end{figure}

A point-estimate rolling-window sweep (single fit per window,
$1$-year stride, $29$ windows) on the FF~100 size $\times$
book-to-market panel ($N = 95$) reproduces the on-average
pattern (rank-1 MSM weight predominantly above $0.5$ across
windows, $w^{\text{lin}}_M$ drifting downward over the run) but
with substantially higher per-window noise:
$w^{\text{vol}}_{\text{MSM}}$ bounces $0.32$--$1.00$ across the
$29$ windows rather than stepping cleanly. The noise is driven
by the larger per-mode weight parameter count ($5N = 475$ vs
$240$ on the industry sorts), which makes a single-fit window
inherit more sampling variability. The cross-dataset
universality of the cascade-dominance structure holds; the
per-window resolution is too low to localise the transition
date independently of the FF~49 analysis. We did not run the
$1000$-replicate bootstrap on the FF~100 rolling-window because
the FF~49 bootstrap CIs above already pin the transition
robustly; the FF~100 single-fit sweep functions as a
cross-dataset universality check rather than an independent
localisation.

Section~\ref{ssec:1998-regime-change} returns to the structural
reading of the transition.

\subsection{Cross-channel \texorpdfstring{$\beta$}{β}-inversion:
a falsification test}
\label{ssec:beta-inversion}

The factor model \eqref{eq:factor-decomp} predicts that the
$F_M$ factor's cross-asset loading $\beta_M \in \R^N$ governs both
the linear and the volatility channels: the per-mode weights
$w^{\text{lin}}_{i, M}$ are proportional to
$|\langle \eivec_i^{\text{lin}}, \beta_M \rangle|^2$ and
$w^{\text{vol}}_{i, \text{MSM}}$ are proportional to
$|\langle \eivec_i^{\text{vol}}, \beta_M \odot \beta_M \rangle|^2$
(Section~\ref{ssec:per-mode-predictions}). A non-trivial cross-channel
test of the model is therefore to recover the asset-wise
$|\beta_M[a]|^2$ from each channel independently and compare them.

Define the sign-averaged $\beta$-inversion diagnostics
\begin{equation}
  \widehat{\beta_M^2}[a]^{\text{lin}} \;:=\;
    \sum_{i = 1}^N (\eivec_i^{\text{lin}}[a])^2 \cdot w^{\text{lin}}_{i, M},
  \label{eq:beta-inv-lin}
\end{equation}
\begin{equation}
  \widehat{\beta_M^2}[a]^{\text{vol}} \;:=\;
    \bigg(\sum_{i = 1}^N (\eivec_i^{\text{vol}}[a])^2 \cdot
      w^{\text{vol}}_{i, \text{MSM}}\bigg)^{1/2},
  \label{eq:beta-inv-vol}
\end{equation}
which extract the asset-level contribution of $F_M$ to each
channel up to a sign-pattern that averages out over the bootstrap.
Under perfect cross-channel consistency, the per-replicate
Spearman correlation between $\widehat{\beta_M^2}^{\text{lin}}$
and $\widehat{\beta_M^2}^{\text{vol}}$ across the $N$ assets
should be positive with a credible interval excluding zero.
A per-asset scatter visualisation of the test is provided in
supplementary material~\ref{app:beta-inv-detail}.

\begin{table}[tb]
\centering
\caption{Cross-channel $\beta$-inversion correlations across the
four datasets: Pearson and Spearman of
$\widehat{\beta_M^2}^{\text{lin}}$ vs.\
$\widehat{\beta_M^2}^{\text{vol}}$ across the $N$ assets,
loss-filtered bootstrap median and $90\%$ CI ($500$ replicates per
panel). Bold values identify CIs excluding zero. The Pearson
correlation is negative on all four panels and significantly so
on sensitivity, secondhalf and FF 100; the Spearman is
non-positive on three panels and significantly negative on the
full-sample sensitivity panel. On no panel is the correlation
significantly positive --- the cross-channel rank alignment
predicted by a single shared $\beta_M$ is absent. A formal
permutation test is reported in
Section~\ref{ssec:permutation-test}.}
\label{tab:beta-inversion}
\begin{tabular}{lcccc}
\toprule
Dataset & Pearson median & Pearson CI & Spearman median & Spearman CI \\
\midrule
Sensitivity & \textbf{$-0.771$} & $[-0.852,\, -0.656]$
            & \textbf{$-0.537$} & $[-0.698,\, -0.188]$ \\
Firsthalf   & $-0.322$          & $[-0.625,\, +0.079]$
            & $+0.123$          & $[-0.284,\, +0.443]$ \\
Secondhalf  & \textbf{$-0.642$} & $[-0.842,\, -0.224]$
            & $-0.142$          & $[-0.546,\, +0.108]$ \\
FF 100      & \textbf{$-0.798$} & $[-0.884,\, -0.414]$
            & $-0.571$          & $[-0.747,\, +0.019]$ \\
\bottomrule
\end{tabular}
\end{table}

Table~\ref{tab:beta-inversion} rejects the cross-channel
prediction. Under a single multifractal loading $\beta_M$ shared
between the linear and volatility channels, the per-asset
attributions $\widehat{\beta_M^2}^{\text{lin}}$ and
$\widehat{\beta_M^2}^{\text{vol}}$ would be positively
correlated. They are not. The Pearson correlation is negative on
every panel --- significantly so, with a $90\%$ CI excluding
zero, on sensitivity, secondhalf, and FF 100 --- and the
Spearman correlation is significantly negative on the sensitivity
panel while its bootstrap interval includes zero on the other
three. On no panel is either correlation significantly positive.

The verdict is consistent across the temporal and
cross-sectional cuts. A pooled-sample artefact would show up as
a negative full-sample correlation that reverses within each
sub-period; instead the firsthalf and secondhalf panels are
themselves negative or null, so the negative full-sample
correlation is not an artefact of mixing the two halves. The
assets that carry $F_M$'s long-memory signature in the linear
channel are not the assets that carry its multifractal-cascade
signature in the volatility channel: the two channels do not
share a common cross-asset loading.
Section~\ref{ssec:permutation-test} formalises this with a
permutation test against the null of random cross-asset
alignment, and Section~\ref{ssec:rank-consistency} draws out the
structural reading.

\subsection{Formal hypothesis test of cross-channel rank
consistency}
\label{ssec:permutation-test}

The bootstrap credible intervals reported in
Table~\ref{tab:beta-inversion} are estimates of parameter
uncertainty around the per-panel point estimate; on their own they
are not formal tests of the rank-consistency claim against the
null of random cross-asset alignment. We complement them with a
permutation test of the cross-channel Spearman correlation under
the null hypothesis that the linear- and volatility-channel
$\beta_M^2$ attributions are independent across assets.

The test statistic is the Spearman correlation between
$\widehat{\beta_M^2}^{\text{lin}}$ and
$\widehat{\beta_M^2}^{\text{vol}}$ from
\eqref{eq:beta-inv-lin}--\eqref{eq:beta-inv-vol}, computed on the
asset-wise median of the loss-filtered bootstrap distribution
(Section~\ref{ssec:data-methodology}). The permutation null is
generated by re-shuffling one channel's attribution across the
$N$ assets and recomputing the Spearman, with $K = 10{,}000$
permutations. We report two-sided $p$-values, both raw and after
Bonferroni (FWER) and Benjamini--Hochberg (FDR) correction across
the four panels (Table~\ref{tab:permutation}). As a
bootstrap-aware robustness check we additionally evaluate the
cross-channel Spearman test on each loss-filtered replicate and
report the fraction of replicates reaching $p < 0.05$ (two-sided).

\begin{table}[tb]
\centering
\caption{Permutation test for cross-channel rank consistency on
the loss-filtered bootstrap. $\hat\rho$ is the observed Spearman
between the linear- and volatility-channel sign-averaged
$\beta_M^2$ attributions on the asset-wise median vectors. The
raw two-sided $p$-value is the fraction of $K = 10{,}000$
permutations of one channel's attribution that produce a
Spearman at least as extreme as $\hat\rho$. Bonferroni and
Benjamini--Hochberg adjustments are applied across the four
panels. The final column reports the fraction of loss-filtered
bootstrap replicates ($500$ per panel) whose per-replicate
Spearman test rejects at the $\alpha = 0.05$ level.
Bold rows: Bonferroni-adjusted $p < 0.01$.}
\label{tab:permutation}
\begin{tabular}{lrrrrr}
\toprule
Panel        & $\hat\rho$ & $p$ (raw) & Bonferroni
             & BH--FDR    & frac $p < 0.05$ \\
\midrule
\textbf{Sensitivity}       & $\mathbf{-0.578}$ & $\mathbf{<10^{-4}}$
                           & $\mathbf{0.0004}$ & $\mathbf{0.0002}$
                           & $\mathbf{0.920}$ \\
Firsthalf                  & $+0.181$ & $0.215$
                           & $0.858$  & $0.286$ & $0.240$ \\
Secondhalf                 & $-0.056$ & $0.700$
                           & $1.000$  & $0.700$ & $0.112$ \\
\textbf{FF 100}            & $\mathbf{-0.615}$ & $\mathbf{<10^{-4}}$
                           & $\mathbf{0.0004}$ & $\mathbf{0.0002}$
                           & $\mathbf{0.806}$ \\
\bottomrule
\end{tabular}
\end{table}

\textbf{Sensitivity and FF 100 reject the independence null ---
toward anti-alignment.} The observed Spearman is
$\hat\rho = -0.578$ and $-0.615$ on the two panels, with
Bonferroni-adjusted $p = 0.0004$ in both
(Table~\ref{tab:permutation}). On the two largest panels the
linear- and volatility-channel $\beta_M$ attributions are not
merely uncorrelated but significantly \emph{anti}-aligned.

\textbf{Firsthalf and secondhalf do not reject.} Firsthalf
($\hat\rho = +0.181$, raw $p = 0.21$) is positive in direction
but far from significance under either correction; secondhalf
($\hat\rho = -0.056$, raw $p = 0.70$) is statistically
indistinguishable from independent cross-asset alignment.

On no panel is there a significant \emph{positive} correlation.
The cross-channel rank alignment predicted by a single shared
multifractal loading $\beta_M$ --- a positive Spearman clear of
zero --- is absent on all four datasets, and on the two largest
panels the data point the other way. The per-replicate Spearman
fractions in the rightmost column of
Table~\ref{tab:permutation} corroborate this pattern within the
bootstrap: $92\%$ of sensitivity replicates and $81\%$ of
FF~100 replicates reject independence at $\alpha = 0.05$ on
their own per-replicate test, while only $24\%$ of firsthalf
and $11\%$ of secondhalf replicates do --- indicating that the
panel-level Bonferroni rejection on the two large panels is
broadly supported across the loss-filtered bootstrap rather
than driven by a few extreme draws. The permutation test thus
confirms, with multi-testing correction across the four panels,
the falsification read off the credible intervals of
Section~\ref{ssec:beta-inversion}.

\subsection{Robustness}
\label{ssec:robustness}

The empirical results above sit in three layers of robustness.

\emph{Cross-dataset robustness.} The three sub-samples (firsthalf,
secondhalf, FF~100) each constitute an independent realisation of
the U.S.\ equity universe through different temporal or
cross-sectional cuts. The seven stylised facts of
Section~\ref{ssec:stylised-facts} reproduce in all three ---
sub-leading factor momentum, deep-mode mean reversion,
market-mode rise-then-fall, and the volatility long-memory and
transitory-volatility weight patterns --- as documented in the
per-dataset weight tables of supplementary
material~\ref{app:ff100-detail}. The cross-channel
$\beta$-inversion test of Section~\ref{ssec:beta-inversion}
likewise returns the same verdict on every panel: no evidence of
a multifractal loading shared between the linear and volatility
channels.

\emph{Cross-region robustness.} A fifth panel constructed from
the \citet{KennethFrenchDataLibrary} Europe 25
size$\times$book-to-market sort
($N = 25$ developed-European portfolios, $T = 9{,}327$ days,
1990-07-02 to 2026-03-31) provides a non-U.S.\ replication. The
joint LS fit on this panel recovers the same eigenstructure
under the same five-factor multi-memory specification: the
fractional-Brownian Hurst exponents come in at
$H_P = 0.57$ $[0.51, 0.66]$ and $H_A = 0.28$ $[0.11, 0.33]$, at
the upper edge of the U.S.\ cross-panel range but consistent
with it; the ARFIMA persistence and antipersistence parameters
$(\phi_M, d_M, \phi_{Vt}, d_{Vt})$ lie within the U.S.\ range
at $(0.86, -0.32, 0.38, -0.16)$; the rank-1 vol-channel
allocation saturates at $w^{\text{vol}}_{\text{MSM}} = 1.00$
$[0.96, 1.00]$, the post-transition cascade signature (the
Europe-25 sample starts after the late-1980s regime transition
of Section~\ref{ssec:rolling-window}); and the cross-channel
$\beta$-inversion test replicates the U.S.\ negative-correlation
signature with Pearson median $-0.65$ $[-0.85, +0.12]$,
$92.8\%$ of replicates with $\rho < 0$. One MSM cascade
parameter diverges from the U.S.\ cross-panel range: the
cascade base $b$ pins at the lower bound, $b = 1.500$
$[1.500, 1.551]$, against U.S.\ values in $[2.11, 2.60]$. This
divergence is robust to the bootstrap and is discussed further in
Section~\ref{ssec:universality}.

\emph{Cross-objective robustness.} We assess sensitivity to the
choice of estimation objective. An alternative specification
uses per-horizon residual scaling with a rank-1 emphasis weight
in place of the per-(mode, horizon) scaling and three-pass
warm-restart multi-start of Section~\ref{ssec:joint-LS-fit}; the
substantive findings are reproduced to within a few percentage
points under either objective (supplementary
material~\ref{app:objective-comparison}).

\section{Discussion}
\label{sec:discussion}

The empirical results of Section~\ref{sec:empirical} land three
findings: the multi-memory factor model recovers the seven
stylised facts of long-horizon equity dynamics consistently
across four cross-sectional and temporal cuts of the data; a
rolling-window analysis localises a market-mode volatility
regime transition in the late 1980s, sharper than the static
pre-1998 / post-1998 split-half analysis would suggest; and a
cross-channel $\beta$-inversion test rejects the hypothesis that
the linear and volatility imprints of the central factor $F_M$
are governed by a single shared cross-asset loading. We discuss
the structural reading of each in turn.

\subsection{Universality of the multi-memory eigenstructure}
\label{ssec:universality}

The stylised-facts reproduction (Section~\ref{ssec:stylised-facts})
is a universality claim. The same per-mode weight pattern ---
factor-momentum persistence concentrated at sub-leading
eigenmodes, deep-mode mean reversion, market-mode rise-then-fall
with multi-scale volatility on top --- appears under four
distinct samplings of U.S.\ equity returns: a 57-year industry
sort, a 28-year pre-1998 sub-period of the same sort, a 28-year
post-1998 sub-period, and a 57-year size$\times$book-to-market
sort with a different cross-sectional partition entirely. The
fractional-Brownian-motion exponents
$H_P \approx 0.52$--$0.57$ and $H_A \approx 0.17$--$0.27$
characterising the persistent and antipersistent factors are
stable across all four panels (Table~\ref{tab:global-params}).
The cross-region replication reported in
Section~\ref{ssec:robustness} adds a fifth panel constructed
from the Kenneth French Europe 25 size$\times$book-to-market
sort: the eigenstructure-level decomposition recovers, with the
fractional-Brownian and ARFIMA factor-profile parameters within
or at the upper edge of the U.S.\ cross-panel range, the
post-transition cascade saturation at the market mode
replicating (the Europe sample starts after the late-1980s
regime transition of Section~\ref{ssec:rolling-window}), and the
$\beta$-inversion finding replicating with the same
negative-correlation sign.

The universality should not be taken as a claim of
sample-mean invariance. The empirical variance ratios
$\widehat{\kappastat}^{\text{lin}}_i(H)$ and
$\widehat{\kappastat}^{\text{vol}}_i(H)$ themselves vary across
panels; the loss values are higher on the smaller and
differently-sorted panels. What is invariant is the
\emph{structural decomposition}: each empirical
$\widehat{\kappastat}$ matrix admits a parsimonious five-factor
fit with the same factor-profile parameters, and the
distribution of cross-asset loadings on those factors is what
generates each panel's particular $\widehat{\kappastat}$
pattern. The five factor profiles are the equity universe's
building blocks; the cross-asset loadings $\{\beta_k\}$ are the
panel-specific weights on those blocks.

This is reminiscent of universality classes in the statistical
physics of critical phenomena: the same set of exponents
$\{H_P, H_A, \phi_M, d_M, \phi_{Vt}, d_{Vt},
m_0, b, \gamma_{\bar k}\}$ describes systems that differ in
microscopic details (which firms are in which industry, which
size$\times$book-to-market bin) but share underlying long-range
correlations. The empirical robustness of the factor-profile
parameters across our four U.S.\ panels plus the European
replication is consistent with this reading.

\paragraph{The cascade base on the European panel.} The MSM
cascade base $b$ is
the one parameter that diverges from the U.S.\ cross-panel range
on the Europe panel: the loss-filtered median pins at the lower
bound $b = 1.500$ (90\% CI $[1.500, 1.551]$), in contrast to the
U.S.\ panels where $b$ takes values $2.1$--$2.6$. The cascade
base controls the geometric spacing of switching probabilities
$\gamma_k = 1 - (1 - \gamma_1)^{b^{k - 1}}$, and a small $b$
collapses the cascade toward low frequencies (Europe's
$\gamma_{\bar k} = 0.07$ is correspondingly low against the U.S.\
range $0.46$--$0.79$). Three readings are consistent with the
data: a genuine slow-cascade signal on the $1990$--$2026$
European sample; a sample-length identification effect ($N = 25$
and $T = 9327$ give less power to identify $b$ than the U.S.\
panels with larger $N$ or longer $T$); or a regional structural
difference. Distinguishing them is a natural follow-up. The
fractional-Brownian and ARFIMA factor-profile parameters
universalise cleanly; the MSM cascade parametrisation is the
locus of any residual sample- or region-specific variation.

The robustness of the eigenstructure-level decomposition across
our five panels motivates the model's application to further
cross-sections (other developed and emerging equity markets,
fixed-income, commodities) in follow-up work.

\subsection{Volatility regime transition}
\label{ssec:1998-regime-change}

The cross-panel comparison of Section~\ref{ssec:regime-change}
and the rolling-window bootstrap of
Section~\ref{ssec:rolling-window} jointly localise a market-mode
volatility regime transition to the late 1980s: clean MSM-cascade
dominance on the secondhalf, full-sample sensitivity, and FF~100
panels; a weakly-identified rank-1 vol-channel mode-mix on the
firsthalf panel; and a rank-1 MSM-weight median trajectory across
the $15$ rolling windows that saturates at $1.00$ by the
$1995$-centred window with strictly non-overlapping $90\%$ CI
bands separating the early-sample and late-sample windows. The
transition-period windows ($1987$--$1993$ centres) carry
replicate-level bimodality that the $28$-year window cannot
resolve.

The contrast admits a direct structural reading in the factor
specifications of Section~\ref{sssec:multifractal-factors}. The
post-transition regime is dominated by the MSM cascade ---
persistent volatility at multiple frequencies simultaneously,
in line with the Calvet--Fisher tradition. The post-transition
events visible to our sample (the dot-com episode, the 2008
global financial crisis, the COVID-19 shock, the 2022 inflation
episode) sit within a high-baseline-volatility environment in
which the MSM cascade specification predicts
hyperbolically-decaying autocorrelation over a wide inertial
range \citep[Proposition~1]{CalvetFisher2008}; our fits confirm
this on the post-1998 secondhalf sub-period and on the
full-sample sensitivity panel.

The pre-transition regime is not directly observable in any
single static panel: the firsthalf panel ($1969$--$1997$)
contains roughly a decade of post-transition data, which is the
source of its weakly-identified rank-1 vol-channel allocation
(Section~\ref{ssec:regime-change}). The rolling-window fits
centred $1983$ and $1985$ (sample spans starting in $1969$ and
$1971$ respectively) provide indirect access: their median
rank-1 $w^{\text{vol}}_{\text{MSM}}$ values ($0.31$ and $0.30$
respectively, with $90\%$ CI upper bounds at $0.80$ and $0.73$)
are consistent with the pre-transition regime carrying less
multi-scale long-memory volatility than the post-transition
regime, though the $28$-year rolling window's resolution is
too coarse to pin the pre-transition regime's mode-mix cleanly.
The bimodality on the transition-period windows
($1987$--$1993$ centres) is itself a structural signature ---
either a sharp regime switch in the late 1980s, averaged into
the wide CI bands by $28$-year smoothing, or a more gradual
sectoral diffusion through the 1990s; the rolling-window
analysis as configured cannot discriminate the two, though
both readings are consistent with the late-1980s localisation
of the central transition mass.

The quantitative cross-over time scale estimated in
Section~\ref{ssec:regime-change} sharpens the regime-transition
reading. The lowest-frequency MSM component duration $1/\gamma_1$
is $1.1$ yr in the full sample, $2.2$ yr in the firsthalf, and
$4.0$ yr in the secondhalf at the median. In the
multifractal-cascade language of
\citet{MandelbrotCalvetFisher1997} and \citet{CalvetFisher2008}
this is the time horizon at which the multiplicative-cascade
structure becomes the dominant volatility-aggregation channel,
displacing the additive (Gaussian-innovation-driven) structure of
the high-frequency components. That this time scale lengthens
between the static firsthalf and secondhalf sub-periods is a
direct empirical statement about how the equity market's
volatility-generating mechanism has changed at the multi-year
horizon, with the firsthalf estimate itself a regime-mixture
between pre- and post-transition data and the secondhalf estimate
characterising the post-transition cascade alone.

The cross-over time scale $1/\gamma_1$ should be distinguished
from the much faster cross-over at which the marginal return
distribution converges from its short-horizon power-law-tailed
form \citep{Gopikrishnan1999,Gabaix2003} to an approximately
Gaussian form. The latter cross-over is a central-limit-theorem
phenomenon for finite-variance fat-tailed innovations
(the truncated-L\'evy-flight regime of
\citet{Mantegna2000}) and occurs on the order of weeks to months
for liquid equity returns. The cross-over time scale we estimate
here, by contrast, is a property of the variance-aggregation
mechanism rather than of the marginal distribution: it is the
horizon at which the slowest cascade frequency $\gamma_1$ no
longer averages out within the aggregation window, and the
multiplicative cascade structure overtakes the additive
aggregation of independently-modulated innovations. The two
cross-overs are linked --- the volatility-feedback channel of
\citet[Ch.~9]{CalvetFisher2008} ties low-frequency cascade
components to large-magnitude return innovations via a factor of
$1/\gamma_k$, which is what makes the slow cascade components
relevant for return-side dynamics even though they switch only
rarely --- but they operate at time scales separated by roughly
two orders of magnitude, and our estimate of $1/\gamma_1$ pins
the slow cross-over, not the marginal-distribution cross-over.

Several macro- and microstructural mechanisms plausibly underlie
a late-1980s transition. The 1987 portfolio-insurance crash and
its regulatory aftermath (circuit breakers, uptick-rule revisions,
and the post-crash expansion of the equity-derivatives ecosystem)
restructured the relationship between cash and derivative equity
markets. The end of the Cold War and the late-1980s acceleration
of equity-market globalisation enlarged the set of macro shocks
the U.S.\ equity universe responds to. The 1986 Tax Reform Act
and the contemporaneous savings-and-loan restructuring reshaped
financial-sector incentives. The shift in industry composition
toward technology, financial services, and intangibles-intensive
firms \citep{BrynjolfssonMcAfee2014} that intensified through
the 1990s and 2000s changes the cross-section's sensitivity to
aggregate volatility on a longer time scale. Distinguishing among
these mechanisms is a structural-econometrics question outside
the scope of this paper; what we establish here is that the
eigenstructure-level signature of the transition is sharp and
empirically robust, and that it has a clean interpretation in
the multifractal-volatility framework.

\subsection{Channel-distinct \texorpdfstring{$\beta$}{β} structure}
\label{ssec:rank-consistency}

The cross-channel $\beta$-inversion test
(Section~\ref{ssec:beta-inversion}) was built as a falsification
test of the multi-memory specification. The model attributes
both the linear-channel long-memory signature and the
volatility-channel multifractal cascade of the factor $F_M$ to a
single cross-asset loading $\beta_M$: the linear channel
diagonalises $\beta_M \beta_M^\top$ and the volatility channel
$(\beta_M \odot \beta_M)(\beta_M \odot \beta_M)^\top$
(Proposition~\ref{prop:linvol-eigvec-coincidence}). If one
$\beta_M$ drove both, the per-asset attributions recovered from
the two channels would be positively rank-correlated. They are
not: the correlation is negative or null on every panel, and
significantly negative on the two largest
(Table~\ref{tab:permutation}). The assets that carry $F_M$'s
long-memory imprint in the linear channel are not the assets
that carry its cascade imprint in the volatility channel.

The test therefore falsifies a specific cross-channel sub-claim
--- the identification of the linear and volatility imprints as
a single object --- without disturbing the rest of the
framework. The per-eigenmode $(\kappastat, \Omat)$ decomposition,
the seven stylised facts, and the five-factor fit of each
individual panel are unaffected; what the test removes is the
unification of the linear ARFIMA memory and the volatility
cascade under one loading. The natural revision --- separate
cross-asset loadings for the linear-channel long-memory factor
and the volatility-channel cascade, fitted jointly but not
constrained to coincide --- is a model-specification question we
leave to follow-up work. That a cleanly-constructed cross-channel
test can reject a structural sub-claim of the model while leaving
its eigenstructure-level fit intact is itself a point in favour
of the $(\kappastat, \Omat)$ framework: the two statistics carry
enough information to discriminate structural hypotheses that
scalar variance-ratio tests cannot.

\subsection{Limits and follow-up}
\label{ssec:limits-and-follow-up}

The model and the empirical results have four principal
limits. \emph{First}, the $\widehat{\kappastat}$ moment alone
identifies the global parameters $(H_P, H_A, \phi_M, d_M,
\phi_{Vt}, d_{Vt})$ tightly, and the MSM parameters
$(m_0, b, \gamma_{\bar k})$ less tightly. A natural extension
would add the cross-channel overlap statistic
$\Omat^{\text{cross}}_{ij}(H)$
(Section~\ref{sec:overlap}) as a second moment condition to
tighten the MSM identification and provide a structural check on
the eigenvector alignment between channels. The leading-order
perturbation theory of Section~\ref{sec:overlap-first-order}
predicts $\Omat^{\text{cross}}$ in terms of the cross-asset
loadings $\{\beta_k\}$ and the factor profiles; an extended
joint LS fit on $(\widehat{\kappastat}, \widehat{\Omat})$ could
formalise this. \emph{Second}, the bootstrap is moving-block on
the daily-return time series, which preserves the temporal
autocorrelation structure within each block but does not capture
non-stationarities at the block scale. The split-half analysis
of Sections~\ref{ssec:regime-change}--\ref{ssec:beta-inversion}
and the rolling-window bootstrap of
Section~\ref{ssec:rolling-window} together address the static
and time-resolved sides of this concern --- the split-half
contrast pins the magnitude of the regime change, the
rolling-window CI bands localise the timing --- but a more
principled non-stationarity treatment via block-dependent
resampling within each window remains a methodological
follow-up.
\emph{Third}, the U.S.\ headline finding is now supplemented by
a cross-region check on developed-European equity
(Section~\ref{ssec:universality}); the eigenstructure-level
universality and the post-transition cascade saturation both
replicate, with the MSM cascade base $b$ as the one parameter
showing sample-specific or region-specific variation. Whether
the multi-memory factor profiles and the volatility regime
transition generalise further --- to emerging-market equity, to
fixed-income returns, or to commodity returns --- remains an
open empirical question.
\emph{Fourth}, the five-factor multi-memory specification has
been selected for its substantive content --- one persistent and
one antipersistent fractional-Brownian factor, an ARFIMA linear
factor of the central multifractal, an MSM volatility cascade,
and a transitory volatility-of-volatility component --- and is
estimated by joint least squares against the empirical
$(\widehat{\kappastat}^{\text{lin}},
\widehat{\kappastat}^{\text{vol}})$ matrices. A formal
information-criterion comparison of the five-factor model
against nested sub-specifications (four-, three-, and two-factor
restrictions, including the natural drop-tests on the
antipersistent factor $F_A$ and the transitory volatility factor
$F_{Vt}$) is a methodological follow-up. The bootstrap CIs on
the per-factor weights reported in
Sections~\ref{ssec:global-params} and~\ref{ssec:stylised-facts}
indicate that each of the five factors carries non-zero weight
on the empirical $\widehat{\kappastat}$ matrices in every panel
--- so each factor is doing observable work in the fit --- but
this is a within-model decomposition argument rather than a
formal nested-model selection.

The framework has natural extensions addressing each of these
limits, and is suited to international and cross-asset
applications because the per-eigenmode decomposition does not
depend on a single benchmark asset and the multi-memory factor
specification does not assume any particular set of underlying
firms.

\section{Conclusion}
\label{sec:conclusion}

The Lo--MacKinlay variance ratio test compares the $H$-period
variance of a return series to $H$ times the one-period variance.
The natural multivariate generalisation, when one carries it out
on the cross-sectional covariance matrix rather than on a chosen
return series, has two distinct components: the per-eigenmode
ratio $\kappastat_i(\horizon)$ and the eigenvector-overlap matrix
$\Omat(\horizon)$. The pair characterises departures from the
i.i.d.\ null along two logically independent axes (temporal
autocorrelation per eigenmode, cross-sectional eigenvector
rotation with horizon); the four-cell decomposition of
Table~\ref{tab:four-cell} organises the joint information
content. The pair is strictly more informative than the scalar
functionals of the matrix-valued multivariate variance ratio of
\citet{HongLintonZhang}, which confound the four cells.

Closed-form predictions are available under the simplest
non-trivial parametric models. The AR($p$) per-eigenmode
specification predicts $\kappastat^{\mathrm{AR}(p)}$ as a
signed convex combination of AR(1) variance ratios indexed by
the characteristic roots. The vector AR(1) specification with
eigenvector-mixing perturbation predicts an $\Omat$ with
Lorentzian eigenvalue-gap dependence and saturating horizon
amplification factor $S_1/c \to 1/(1 - \bar\rho^2)$.

Empirically on the Fama--French 49-industry universe over
1969--2026, the framework recovers the Jegadeesh--Titman and
De Bondt--Thaler patterns at the market mode, identifies
persistent factor momentum in the sub-leading modes,
demonstrates eigenvalue-gap-dependent eigenvector stability with
a stable market mode and a curiously stable deepest eigenmode,
and inverts to AR(2) characteristic roots that include
damped-oscillation structure with periods of one to two trading
weeks at the market and deep-rank eigenmodes. Each pattern is
established at the strategy level in prior work; the
$(\kappastat, \Omat)$ framework recovers all of them as eigenmode-
specific spectral phenomena under a single mathematical structure.

\subsection*{Acknowledgements}
The author acknowledges the use of Claude (Anthropic) for assistance with
literature review, \LaTeX{} typesetting, mathematical exposition, and
editorial refinement, and Lemma (Axiomatic AI) for review and proof
checking. All substantive arguments, economic reasoning, and conclusions
are the author's own.

\appendix

\section{Full derivations}
\label{app:proofs}

This appendix supplies the full derivations of the closed-form
results stated in Sections~\ref{sec:kappa} and~\ref{sec:overlap}.

\subsection*{Proof of Theorem~\ref{thm:kappa-ar1} (AR(1) closed form)}

Under the AR(1) per-eigenmode specification of
Section~\ref{sec:kappa-ar1}, the stationary variance is
$\Var(\xi_i) = \sigma_i^2 / (1 - \rho_i^2)$ and the autocovariance
at lag $h$ is $\Cov(\xi_{i,t}, \xi_{i,t-h}) = \rho_i^{|h|}
\Var(\xi_i)$. The $\horizon$-period sum
$\xi^\horizon_{i,t} = \sum_{s = t - \horizon + 1}^{t} \xi_{i,s}$
has variance
\begin{equation*}
  \Var(\xi^\horizon_i)
    \;=\; \Var(\xi_i) \sum_{h_1, h_2 = 0}^{\horizon - 1}
          \rho_i^{|h_1 - h_2|}
    \;=\; \Var(\xi_i) \sum_{h = -(\horizon - 1)}^{\horizon - 1}
          (\horizon - |h|)\, \rho_i^{|h|},
\end{equation*}
where the second equality counts the number of $(h_1, h_2)$ pairs
with $h_1 - h_2 = h$. Dividing by $\horizon \Var(\xi_i)$,
\begin{equation*}
  \kappastat_i^{\mathrm{AR}(1)}(\horizon; \rho_i)
    \;=\; \frac{1}{\horizon}
          \sum_{h = -(\horizon - 1)}^{\horizon - 1}
          (\horizon - |h|)\, \rho_i^{|h|}
    \;=\; 1 + \frac{2}{\horizon}
          \sum_{h = 1}^{\horizon - 1}
          (\horizon - h)\, \rho_i^{h}.
\end{equation*}
The inner geometric-arithmetic series evaluates by direct
telescoping to
\begin{equation*}
  \sum_{h=1}^{\horizon-1}(\horizon - h)\, \rho^h
    \;=\; \frac{(\horizon - 1)\,\rho - \horizon\,\rho^2 + \rho^{\horizon + 1}}
               {(1 - \rho)^2}.
\end{equation*}
Substituting and simplifying yields the closed
form~\eqref{eq:kappa-ar1-closed}.

\subsection*{Proof of Proposition~\ref{prop:kappa-inversion}
(inversion of $\kappastat$)}

For $\horizon \geq 2$, differentiate~\eqref{eq:kappa-ar1-closed}
with respect to $\rho$. The derivative satisfies
\begin{equation*}
  \frac{\partial \kappastat^{\mathrm{AR}(1)}(\horizon; \rho)}{\partial \rho}
    \;>\; 0
  \quad\text{for all } \rho \in (-1, 1),
\end{equation*}
which is verified by direct differentiation: the boundary terms
$\rho^\horizon$ are dominated by the leading $2/(1 - \rho)^2$
contribution for $\rho$ bounded away from $\pm 1$, and the sign
near the endpoints is determined by the limiting behaviour. As
$\rho \to -1$, $\kappastat^{\mathrm{AR}(1)} \to 0$; as
$\rho \to +1$, $\kappastat^{\mathrm{AR}(1)} \to +\infty$. By
continuity and strict monotonicity, the inverse map
$\kappastat \mapsto \rho$ is single-valued on the open interval
$(0, +\infty)$, so for any observed $\kappastat_i(\horizon) > 0$
there is a unique $\rho_i \in (-1, 1)$ producing it.

\subsection*{Proof of Theorem~\ref{thm:kappa-arp} (AR($p$)
decomposition)}

The Yule--Walker recurrence
$\gamma(h) = \sum_{k=1}^p \rho_k\, \gamma(h - k)$ has
characteristic polynomial $z^p - \rho_1 z^{p-1} - \cdots -
\rho_p$. With distinct roots $\mu_1, \ldots, \mu_p$ strictly
inside the unit disc, the general solution of the recurrence on
$h \geq 0$ is $\gamma(h) = \sum_k A_k\, \mu_k^h$, extended by
symmetry to $h < 0$. The weights $A_k$ are determined by the
first $p$ initial conditions $\gamma(0), \gamma(1), \ldots,
\gamma(p - 1)$ via a Vandermonde linear system, with
$\gamma(0) = \sum_k A_k$ giving the normalisation. Dividing
through by $\gamma(0)$ produces
\eqref{eq:gamma-decomp} with $\sum_k A_k = 1$.

Substituting into the definition of
$\kappastat^{\mathrm{AR}(p)}$,
\begin{align*}
  \kappastat^{\mathrm{AR}(p)}(\horizon; \boldsymbol\rho)
    &\;=\; \frac{1}{\horizon} \sum_{|h| < \horizon}
          (\horizon - |h|)\, \gamma(h)/\gamma(0) \\
    &\;=\; \sum_k A_k\,
          \frac{1}{\horizon} \sum_{|h| < \horizon}
          (\horizon - |h|)\, \mu_k^{|h|} \\
    &\;=\; \sum_k A_k\, \kappastat^{\mathrm{AR}(1)}(\horizon; \mu_k),
\end{align*}
where the last step recognises the AR(1) closed form evaluated at
the characteristic root $\mu_k$. This
is~\eqref{eq:kappa-arp-decomp}.

\subsection*{Proof of Lemma~\ref{lem:first-order-Sigma}
(first-order corrections)}

\textbf{Order-zero and first-order $\Sig_1$.} Expand the discrete
Lyapunov equation $\Sig_1 = A \Sig_1 A^\top + \Sigma_\varepsilon$
at first order in $\epsilon$ with $A = \bar\rho I + \epsilon B$:
\begin{equation*}
  \Sig_1^{(0)} + \epsilon \Sig_1^{(1)}
    \;=\; \bar\rho^2 (\Sig_1^{(0)} + \epsilon \Sig_1^{(1)})
          + \epsilon \bar\rho (B \Sig_1^{(0)} + \Sig_1^{(0)} B^\top)
          + \Sigma_\varepsilon
          + O(\epsilon^2).
\end{equation*}
The order-zero piece gives $\Sig_1^{(0)} = \Sigma_\varepsilon /
(1 - \bar\rho^2)$. The order-one piece gives
\begin{equation*}
  (1 - \bar\rho^2)\, \Sig_1^{(1)}
    \;=\; \bar\rho\, (B \Sig_1^{(0)} + \Sig_1^{(0)} B^\top),
\end{equation*}
yielding $\Sig_1^{(1)} = \frac{\bar\rho}{1 - \bar\rho^2}
(B \Sig_1^{(0)} + \Sig_1^{(0)} B^\top)$, the expression stated in
Lemma~\ref{lem:first-order-Sigma}.

\textbf{Order-zero and first-order $\SigH$.} Expand
$A^h = \bar\rho^h I + \epsilon h \bar\rho^{h - 1} B +
O(\epsilon^2)$ and substitute into the trapezoidal-sum
representation~\eqref{eq:SigmaH-trapezoid}. The order-zero piece
gives $\SigH^{(0)} = c(\horizon, \bar\rho)\, \Sig_1^{(0)}$
because $\Gamma(h) = A^h \Sig_1$ at zeroth order is
$\bar\rho^h \Sig_1^{(0)}$, and summing with the trapezoidal weight
yields the AR(1) scaling factor $c$. The order-one piece picks
up two contributions: (i) a $c\, \Sig_1^{(1)}$ term from the
order-one expansion of $\Sig_1$ at fixed $A^h = \bar\rho^h I$, and
(ii) a term
$\sum_{h \geq 1}(\horizon - h)\, h \bar\rho^{h-1}\,
(B \Sig_1^{(0)} + \Sig_1^{(0)} B^\top)$ from the order-one
expansion of $A^h$. The latter sum equals
$S_1(\horizon, \bar\rho)\, (B \Sig_1^{(0)} + \Sig_1^{(0)} B^\top)$
by the definition~\eqref{eq:S1-def}.

\subsection*{Proof of Theorem~\ref{thm:O-first-order} (first-order
overlap)}

Write $\eivec_i^{(0)} := \eivec_i(\Sig_1^{(0)})$ and
$\lambda_i := \lambda_i(\Sig_1^{(0)})$ for the unperturbed
eigenvectors and eigenvalues. Standard non-degenerate first-order
perturbation theory of symmetric matrices
\citep[][\S II.2]{Kato1995} gives, for a perturbation
$\Delta\Sigma$ of $\Sigma^{(0)}$,
\begin{equation*}
  \delta \eivec_i
    \;=\; \sum_{k \neq i}
          \frac{\langle \eivec_k^{(0)}, \Delta\Sigma\, \eivec_i^{(0)}
                \rangle}
               {\lambda_i^{(0)} - \lambda_k^{(0)}}
          \, \eivec_k^{(0)}.
\end{equation*}
Apply this to $\Sig_1$ with $\Delta\Sigma = \epsilon \Sig_1^{(1)}$
and to $\SigH$ with $\Delta\Sigma = \epsilon \SigH^{(1)}$,
recalling $\lambda_i(\SigH^{(0)}) = c \lambda_i$:
\begin{align*}
  \delta \eivec_i(\Sig_1)
    &\;=\; \sum_{k \neq i}
           \frac{\langle \eivec_k^{(0)}, \Sig_1^{(1)} \eivec_i^{(0)}
                 \rangle}
                {\lambda_i - \lambda_k}\, \eivec_k^{(0)},
  \\
  \delta \eivec_i(\SigH)
    &\;=\; \sum_{k \neq i}
           \frac{\langle \eivec_k^{(0)}, c\, \Sig_1^{(1)} \eivec_i^{(0)}
                 + S_1\, N\, \eivec_i^{(0)} \rangle}
                {c\, (\lambda_i - \lambda_k)}\, \eivec_k^{(0)}
  \\
    &\;=\; \delta \eivec_i(\Sig_1)
          + \frac{S_1}{c}
            \sum_{k \neq i}
            \frac{\langle \eivec_k^{(0)}, N \eivec_i^{(0)} \rangle}
                 {\lambda_i - \lambda_k}\, \eivec_k^{(0)},
\end{align*}
where $N := B \Sig_1^{(0)} + \Sig_1^{(0)} B^\top$.

The inner product for $i \neq j$ is
\begin{align*}
  \langle \eivec_i(\SigH), \eivec_j(\Sig_1) \rangle
    &\;=\; \epsilon \bigl[\,
          \langle \delta \eivec_i(\SigH), \eivec_j^{(0)} \rangle
          + \langle \eivec_i^{(0)}, \delta \eivec_j(\Sig_1) \rangle
          \bigr] + O(\epsilon^2).
\end{align*}
The first bracketed term equals
$\frac{\langle \eivec_j, \Sig_1^{(1)} \eivec_i \rangle}
{\lambda_i - \lambda_j} +
\frac{(S_1/c)\, \langle \eivec_j, N \eivec_i \rangle}
{\lambda_i - \lambda_j}$. The second equals
$\frac{\langle \eivec_i, \Sig_1^{(1)} \eivec_j \rangle}
{\lambda_j - \lambda_i} = -\frac{\langle \eivec_j, \Sig_1^{(1)}
\eivec_i \rangle}{\lambda_i - \lambda_j}$ by symmetry of
$\Sig_1^{(1)}$. The two $\Sig_1^{(1)}$ contributions cancel
exactly, leaving
\begin{equation*}
  \langle \eivec_i(\SigH), \eivec_j(\Sig_1) \rangle
    \;=\; \frac{\epsilon\, S_1}{c}
          \cdot
          \frac{\langle \eivec_j, N \eivec_i \rangle}
               {\lambda_i - \lambda_j}
          + O(\epsilon^2).
\end{equation*}
For symmetric $B$ in the eigenbasis of $\Sig_1^{(0)}$,
$\langle \eivec_j, N \eivec_i \rangle = (\lambda_i + \lambda_j)
B_{ji}$. Squaring yields the closed
form~\eqref{eq:O-first-order}. The double-stochastic property of
$\Omat$ pins the diagonal entries down as
$\Omat_{ii} = 1 - \sum_{j \neq i} \Omat_{ij}$ at this order.

\subsection*{Proof of Proposition~\ref{prop:O-saturation}
(saturation)}

From~\eqref{eq:S1-closed}, at leading order in $\horizon$ with
$|\bar\rho| < 1$ and $\bar\rho^\horizon \to 0$,
\begin{equation*}
  S_1(\horizon, \bar\rho)
    \;=\; \frac{\horizon}{(1 - \bar\rho)^2}
          + O(1).
\end{equation*}
From~\eqref{eq:c-def} and the AR(1) long-run limit,
\begin{equation*}
  c(\horizon, \bar\rho)
    \;=\; \horizon \cdot \frac{1 + \bar\rho}{1 - \bar\rho}
          + O(1).
\end{equation*}
The ratio is
\begin{equation*}
  \frac{S_1}{c}
    \;=\; \frac{1}{(1 - \bar\rho)^2}
          \cdot \frac{1 - \bar\rho}{1 + \bar\rho}
          + O(1/\horizon)
    \;=\; \frac{1}{(1 - \bar\rho)(1 + \bar\rho)}
          + O(1/\horizon)
    \;=\; \frac{1}{1 - \bar\rho^2}
          + O(1/\horizon).
\end{equation*}
The convergence rate $O(1/\horizon \cdot (1 - \bar\rho^2)^{-2})$
inherits the $(1 - \bar\rho)^{-2}$ factor from the AR(1)
long-run limit.

\section{Hong--Linton--Zhang comparison}
\label{app:HLZ}

Proposition~\ref{prop:HLZ-vs-kappa} of
Section~\ref{ssec:HLZ-precedent} establishes that the eigenvalues
of $\mathbf{VR}(\horizon)$ and the per-eigenmode statistics
$\kappastat_i(\horizon)$ coincide if and only if $\Omat = I$.
This appendix sharpens the comparison into a strict
information-content statement.

\subsection*{Scalar functionals of $\mathbf{VR}$ confound the four cells}

The standard tests built on $\mathbf{VR}(\horizon)$ use scalar
functionals such as the trace $\tr \mathbf{VR}(\horizon)$, the
determinant $\det \mathbf{VR}(\horizon)$, and the maximum
diagonal entry of $\mathbf{VR}(\horizon)$. These functionals
reduce a matrix-valued object to a single scalar. To see how
sharply they confound the four cells of
Table~\ref{tab:four-cell}, consider two stylised processes that
produce the same $\tr \mathbf{VR}(\horizon)$ but live in
different cells:

\begin{itemize}[itemsep=2pt,leftmargin=*]
  \item \emph{Process A:} per-eigenmode AR(1) with $\rho_i$
    chosen so that
    $\sum_i \kappastat^{\mathrm{AR}(1)}(\horizon; \rho_i) = T_*$
    for a target value $T_*$. Lives in
    $(\kappastat \neq 1, \Omat = I)$.
  \item \emph{Process B:} vector AR(1) with $\bar\rho$ chosen so
    that $N \kappastat^{\mathrm{AR}(1)}(\horizon; \bar\rho) = T_*$,
    plus eigenvector-mixing perturbation $\epsilon B$ producing
    non-trivial $\Omat$. Lives approximately in
    $(\kappastat \approx \kappastat^{\mathrm{AR}(1)}(\bar\rho),
    \Omat \neq I)$.
\end{itemize}

By construction both processes have the same trace
$\tr \mathbf{VR}(\horizon)$. The $(\kappastat, \Omat)$ framework
distinguishes them via the per-eigenmode $\kappastat_i$
distribution (Process A has $\kappastat_i$ varying across $i$;
Process B has roughly constant $\kappastat_i$ around the scalar
baseline) and via $\Omat$ (Process A has $\Omat = I$; Process B
has $\Omat \neq I$). The trace test cannot. An analogous
construction using $\det \mathbf{VR}$ as the target functional
fails in the same way.

\subsection*{Strict information ordering}

\begin{proposition}[Strict information ordering]
  \label{prop:info-ordering}
  Let $\mathcal{F}(\Sig_1, \SigH)$ denote any function of
  $\Sig_1$ and $\SigH$. The pair
  $\bigl(\{\kappastat_i(\horizon)\}_{i=1}^N,
   \{\Omat_{ij}(\horizon)\}_{i,j=1}^N\bigr)$ determines $\SigH$
  uniquely given $\Sig_1$. The scalar functionals
  $\tr \mathbf{VR}(\horizon)$, $\det \mathbf{VR}(\horizon)$, and
  $\max_i (\mathbf{VR}(\horizon))_{ii}$ are not injective in this
  sense.
\end{proposition}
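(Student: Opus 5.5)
The plan is to split the statement into its two halves. For the \emph{positive half}, first recover the spectrum of $\SigH$ from $\kappastat$, then recover its eigenbasis from $\Omat$, and then reassemble $\SigH = V_\horizon \Lambda_\horizon V_\horizon^\top$. For the \emph{negative half}, exhibit explicit pairs of matrices $\SigH \neq \SigH'$ with the same $\Sig_1$ on which each scalar functional takes the same value.

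\textbf{Step 1 (eigenvalues).} Since $\Sig_1$ is given, its ordered spectrum $\lambda_i(\Sig_1)$ and an eigenbasis $V_1$ are known; assume the spectrum is simple, as in Theorem~\ref{thm:O-first-order}. The definition \eqref{eq:kappa-O-def} inverts directly to $\lambda_i(\SigH) = \horizon\,\kappastat_i(\horizon)\,\lambda_i(\Sig_1)$. So $\Lambda_\horizon$ is determined, together with its descending-order labelling.

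\textbf{Step 2 (eigenvectors).} Write $U := V_1^\top V_\horizon$, an orthogonal matrix with $|U_{ji}|^2 = \Omat_{ij}$. Then
\[
\SigH = V_1\, U \Lambda_\horizon U^\top\, V_1^\top ,
\]
so $\SigH$ is determined once $U$ is determined up to the transformations that leave $U\Lambda_\horizon U^\top$ invariant. Column sign flips $U \mapsto U D$, with $D = \mathrm{diag}(\pm 1)$, are harmless: they are exactly the sign ambiguity of an eigendecomposition, already noted in Section~\ref{ssec:eigenstructure-stats}.

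\textbf{Main obstacle.} The squared moduli $|U_{ji}|^2$ do not in general pin down the signs of the individual entries of $U$. Row sign flips $U \mapsto D' U$ preserve $\Omat$, but they replace $\SigH$ by $D'$-conjugated versions in the $\Sig_1$-eigenbasis, $V_1 D' V_1^\top\,\SigH\, V_1 D' V_1^\top$. For $N = 2$ this is a genuine counterexample to literal uniqueness. Take $U$ the rotation by $\theta$ and $U'$ the reflection-conjugate rotation by $-\theta$. These have the same $\Omat$ but give $\SigH$ with off-diagonal entries of opposite sign whenever $\lambda_1(\SigH) \neq \lambda_2(\SigH)$ and $\theta \notin \tfrac{\pi}{2}\mathbb{Z}$. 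For $N \ge 3$ there are further orthogonal matrices sharing $|U_{ji}|^2$ that are not of the form $D' U D$.

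So I would first try to prove the statement in the corrected form that I expect is true: $(\kappastat, \Omat)$ determine $\SigH$ uniquely up to the finite sign-conjugation class $\{V_1 D' V_1^\top \SigH V_1 D' V_1^\top\}$ when $N = 2$, and up to the orbit of $\SigH$ under orthogonal matrices with the same entrywise moduli in general. Uniqueness is exact in the regime $\Omat = I$, where $U$ is diagonal and $\SigH = V_1 \Lambda_\horizon V_1^\top$, recovering Proposition~\ref{prop:HLZ-vs-kappa}. The alternative is to strengthen the hypothesis by taking the signed overlaps $\langle \eivec_i(\SigH), \eivec_j(\Sig_1)\rangle$ as data, under which Steps 1--2 give exact uniqueness immediately. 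I would state explicitly which repair is adopted.

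\textbf{Step 3 (non-injectivity of the scalar functionals).} Fix $\Sig_1 = I_2$, so that $\mathbf{VR}(\horizon) = \SigH/\horizon$. Take
\[
\SigH = \horizon\,\mathrm{diag}(a, b), \qquad \SigH' = \horizon\,R_\theta\,\mathrm{diag}(a, b)\,R_\theta^\top ,
\]
with $a \neq b$ and $R_\theta$ a rotation. Both have the same trace and determinant, yet $\SigH \neq \SigH'$, and their $\Omat$ differ. For the maximum diagonal entry, compare $\mathrm{diag}(a, b)$ with $\mathrm{diag}(a, b')$ for $b, b' < a$: these are distinct but share $\max_i (\mathbf{VR})_{ii} = a$. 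This also dovetails with the Process A / Process B construction preceding the statement, which shows the same phenomenon at the level of data-generating processes.
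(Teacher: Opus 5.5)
Your objection to the positive half is correct, and it is exactly where the paper's own proof breaks. The paper reconstructs $\eivec_i(\SigH)=\sum_j\sqrt{\Omat_{ij}}\,\mathrm{sgn}\bigl(\langle\eivec_i(\SigH),\eivec_j(\Sig_1)\rangle\bigr)\,\eivec_j(\Sig_1)$ and dismisses the signs as ``a global sign that drops out''. But those sign factors are not functions of $\Omat$: the formula needs $\eivec_i(\SigH)$ to evaluate its own right-hand side. The only freedom the global-sign remark actually disposes of is the harmless per-eigenvector flip $U\mapsto UD$.

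The row flips $U\mapsto D'U$ leave $\kappastat$ and $\Omat$ invariant but replace $\SigH$ by $Q\SigH Q$ with $Q=V_1D'V_1^\top$, as you say. So your $N=2$ rotation/counter-rotation pair is a valid counterexample, and a direct sum with a diagonal block extends it to every $N\ge 2$. It is not an artefact of unrealisable matrix pairs. Since $Q$ commutes with $\Sig_1$, whenever $0\prec\SigH\prec 4\Sig_1$ both $\SigH$ and $Q\SigH Q$ are $\horizon=2$ covariances of stationary Gaussian VAR(1) processes. Take $\Gamma(1)=(\SigH-2\Sig_1)/2$, $A=\Gamma(1)\Sig_1^{-1}$ and $\Sigma_\varepsilon=\Sig_1-\Gamma(1)\Sig_1^{-1}\Gamma(1)\succ 0$.

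The statement as written is therefore false, and proving one of your repairs with Steps 1--2 is the right move: either signed overlaps as data, or uniqueness modulo the moduli-preserving orbit, which is exact when $\Omat=I$. If you want to keep an ``ordering'' claim, note that given $\Sig_1$ both $\det\mathbf{VR}(\horizon)=\prod_i\kappastat_i$ and $\tr\mathbf{VR}(\horizon)=\sum_{i,j}\kappastat_i\,\Omat_{ij}\,\lambda_i(\Sig_1)/\lambda_j(\Sig_1)$ are functions of $(\kappastat,\Omat)$. For these two functionals the pair is a genuine refinement even without uniqueness. The maximum diagonal entry, by contrast, is basis-dependent and in general depends on the very signs that $\Omat$ discards.

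For the negative half your route differs from the paper's and is better. The paper appeals to the Process A/Process B construction, which works at the level of data-generating processes. It matches the trace only approximately, since the $\epsilon B$ perturbation itself shifts $\tr\mathbf{VR}$. It asserts the determinant case by analogy and never treats the maximum diagonal entry. Your explicit matrix pairs are exact and cover all three functionals.

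Two small repairs are needed.
\begin{itemize}
\item In Step 3 you take $\Sig_1=I_2$. This violates your own simple-spectrum assumption and leaves $\Omat$ undefined, so ``their $\Omat$ differ'' has no content. Take instead $\Sig_1=\mathrm{diag}(\lambda_1,\lambda_2)$ with $\lambda_1>\lambda_2$, and $\SigH=\horizon\,\Sig_1^{1/2}M\Sig_1^{1/2}$ with $M=\mathrm{diag}(a,b)$ versus $R_\theta MR_\theta^\top$, $\theta\notin\tfrac{\pi}{2}\mathbb{Z}$. Then $\mathbf{VR}(\horizon)$ is $M$ or its rotation, trace and determinant agree, and only the first $\SigH$ is diagonal in the $\Sig_1$-eigenbasis, so the $\Omat$'s genuinely differ.
\item Your aside that further moduli-preserving ambiguities exist for $N\ge 3$ is overstated at $N=3$. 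Suppose all entries of $U$ are nonzero and you normalise the first row and column to be nonnegative. Two admissible sign patterns for row two would then force some $|U_{1j}U_{2j}|=0$. Row three is then fixed by orthogonality and the sign of its first entry, so $U$ is determined up to $D'UD$. New ambiguities first appear at $N=4$, for example swapping two non-leading rows of a normalised $4\times4$ Hadamard matrix. This does not affect your corrected statement, which is phrased in terms of the full orbit anyway.
\end{itemize}
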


\begin{proof}
  Given $\Sig_1$, the eigenvalues $\lambda_i(\SigH) =
  \horizon\, \kappastat_i\, \lambda_i(\Sig_1)$ are determined by
  $\kappastat$. The eigenvectors $\eivec_i(\SigH)$ are determined
  by $\Omat$ together with $\Sig_1$'s eigenvectors via
  \[
  \eivec_i(\SigH) = \sum_j \sqrt{\Omat_{ij}}\,
  \mathrm{sgn}\bigl(\langle \eivec_i(\SigH), \eivec_j(\Sig_1)
  \rangle\bigr)\, \eivec_j(\Sig_1),
  \]
  up to a global sign that drops out of the doubly-stochastic
  $\Omat$. Hence $\SigH$ is determined entirely. Non-injectivity
  of the scalar functionals follows from the Process A/Process B
  construction above.
\end{proof}

Proposition~\ref{prop:info-ordering} pins down what
$(\kappastat, \Omat)$ adds beyond Hong--Linton--Zhang scalar
tests: the pair recovers $\SigH$ entirely given $\Sig_1$, while
the scalar tests compress to a low-dimensional functional. Any
question of the form ``which eigenmodes are responsible for the
observed variance-ratio deviation?'' is answerable by
$(\kappastat, \Omat)$ and is not answerable by trace or
determinant functionals of $\mathbf{VR}(\horizon)$.

\section{Proof of Proposition~\ref{prop:linvol-eigvec-coincidence}}
\label{app:linvol-eigvec-coincidence}

The main-text sketch of
Section~\ref{ssec:per-mode-predictions} treats the dominant
single-factor case. We give the multi-factor statement and proof
here. Write the multi-factor expansions of the daily covariances
as
\begin{equation*}
  \Sigi^{\text{lin}} \;=\; \sum_{k \in \mathcal{K}}
    \nu_k^{\text{lin}}(1)\, \beta_k \beta_k^\top
    + \sigma_\varepsilon^2 I_N,
  \qquad
  \Sigi^{\text{vol}} \;=\; \sum_{k \in \mathcal{K}}
    \nu_k^{\text{vol}}(1)\, (\beta_k \odot \beta_k)
    (\beta_k \odot \beta_k)^\top
    + \sigma_{\varepsilon, 2}^2 I_N,
\end{equation*}
suppressing the $H = 1$ argument from
$\nu_k^{\text{lin}}, \nu_k^{\text{vol}}$.

\begin{proposition}[Multi-factor extension]
  \label{prop:linvol-multi}
  Let $\beta_k \in \R^N$ be the cross-asset loadings of $|\mathcal{K}|$
  factors, with the columns of the matrix
  $B := [\beta_k]_{k \in \mathcal{K}}$ assumed linearly independent.
  The eigenvectors of $\Sigi^{\text{lin}}$ and
  $\Sigi^{\text{vol}}$ coincide if and only if the column space
  $\mathrm{span}\{\beta_k\}$ is invariant under the Hadamard-square
  map $\beta \mapsto \beta \odot \beta$, equivalently, the
  set $\{\beta_k\}$ is closed under coordinate-wise squaring up
  to relabelling within $\mathcal{K}$.
\end{proposition}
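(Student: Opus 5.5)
The plan is to remove the isotropic noise first and then reduce everything to a comparison of two low-rank matrices on a common $|\mathcal{K}|$-dimensional subspace. The terms $\sigma_\varepsilon^2 I_N$ and $\sigma_{\varepsilon,2}^2 I_N$ shift every eigenvalue by a constant and leave eigenvectors unchanged. So it suffices to compare
\[
L := B\,D^{\text{lin}}B^\top
\qquad\text{and}\qquad
V := C\,D^{\text{vol}}C^\top,
\]
where
\[
C := [\beta_k\odot\beta_k]_{k\in\mathcal{K}},\qquad
D^{\text{lin}} = \mathrm{diag}(\nu_k^{\text{lin}}(1)),\qquad
D^{\text{vol}} = \mathrm{diag}(\nu_k^{\text{vol}}(1)),
\]
with all $\nu_k>0$. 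Because $B$ has independent columns, $\mathrm{range}(L)=\mathrm{span}\{\beta_k\}$ has dimension $|\mathcal{K}|$. Similarly, $\mathrm{range}(V)=\mathrm{span}\{\beta_k\odot\beta_k\}$, which has dimension at most $|\mathcal{K}|$. I will read ``eigenvectors coincide'' in the sense used in the single-factor sketch of Proposition~\ref{prop:linvol-eigvec-coincidence}: the eigenvectors carrying factor (non-noise) variance are the same in both channels.

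\emph{Only if.} If the factor eigenvectors of $L$ and $V$ coincide, then the spans of those eigenvectors coincide. These spans are exactly $\mathrm{range}(L)$ and $\mathrm{range}(V)$. Hence $\beta_k\odot\beta_k\in\mathrm{span}\{\beta_j\}$ for every $k$, which is the stated invariance of the column space under Hadamard squaring. The case $|\mathcal{K}|=1$ recovers Proposition~\ref{prop:linvol-eigvec-coincidence}: $\beta\odot\beta\parallel\beta$ forces all nonzero entries of $\beta$ to have equal absolute value.

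\emph{If.} Suppose each $\beta_k\odot\beta_k$ lies in $S:=\mathrm{span}\{\beta_j\}$ and the squares are themselves independent. Then $\mathrm{range}(V)=S=\mathrm{range}(L)$, both matrices vanish on $S^\perp$, and on $S^\perp$ any orthonormal basis serves for both (the noise-level eigenspace). Inside $S$ I will write $C=BM$ for an invertible $|\mathcal{K}|\times|\mathcal{K}|$ matrix $M$. Coincidence of eigenvectors on $S$ is then equivalent to $L|_S$ and $V|_S$ commuting. In coordinates this reads $D^{\text{lin}}G$ commuting with $MD^{\text{vol}}M^\top G$, where $G=B^\top B$ is the Gram matrix of the loadings.

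The main obstacle is that this commutation is not automatic from span invariance alone. For this reason I would sharpen the hypothesis to the second form given in the statement, namely that the set $\{\beta_k\}$ is closed under squaring up to relabelling. This means $\beta_k\odot\beta_k=c_k\beta_{\pi(k)}$ for a permutation $\pi$ and scalars $c_k$, so that $M$ is a scaled permutation matrix. I would then try to show commutation directly in that case, possibly also using the orthogonality of the $\beta_k$ (diagonal $G$) that is implicit in the eigen-analysis. If commutation still fails generically, I would state honestly that the ``if'' direction holds under the additional condition $[D^{\text{lin}}G,\;MD^{\text{vol}}M^\top G]=0$. I would also state that the claimed equivalence between span invariance and setwise closure holds only in one direction (closure implies invariance), and that for $|\mathcal{K}|=1$ the two are the same.
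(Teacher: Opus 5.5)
Your \emph{only if} direction is the paper's own argument: the paper also compares the non-isotropic ranges $\mathrm{span}\{\beta_k\}$ and $\mathrm{span}\{\beta_k\odot\beta_k\}$.

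Your choice of reading is actually needed for this direction. Under the literal ``common eigenbasis'' reading it fails. A single loading with $\sum_a\beta[a]^3=0$, e.g.\ $\beta=(1,1,-2^{1/3})^\top$, has $\beta\odot\beta\perp\beta$. The two covariances then commute although $\beta\odot\beta\notin\mathrm{span}\{\beta\}$. Under your reading the direction yields $\mathrm{span}\{\beta_k\odot\beta_k\}=\mathrm{span}\{\beta_k\}$, which is stronger than invariance.

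For the converse, the paper simply asserts two things: that the non-trivial eigenvectors coincide if and only if the two ranges are equal, and that this ``requires'' setwise closure of $\{\beta_k\}$ under squaring. These are exactly the steps you distrust, and you are right to:
\begin{itemize}
\item Equal ranges do not give a common eigenbasis inside the range.
\item Invariance does not even give equal ranges. For $\beta_1=(1,1)^\top$, $\beta_2=(1,-1)^\top$ both squares equal $(1,1)^\top$, which is why you had to add independence of the squares.
\item Invariance does not imply setwise closure. For $|\mathcal{K}|=N$ invariance is vacuous, and $\beta_1=(1,2)^\top$, $\beta_2=(1,-1)^\top$ have squares $(1,4)^\top,(1,1)^\top$, neither proportional to any $\beta_j$.
\end{itemize}

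The step you leave open, deriving commutation from setwise closure, cannot be completed. Orthogonality of the loadings is not implicit anywhere in the model; it would be an extra hypothesis. Take $\beta_1=(1,1,0)^\top$ and $\beta_2=(0,1,1)^\top$, each fixed by Hadamard squaring, with $(\nu^{\text{lin}}_1,\nu^{\text{lin}}_2)=(1,2)$ and $(\nu^{\text{vol}}_1,\nu^{\text{vol}}_2)=(1,1)$:
\[
L=\begin{pmatrix}1&1&0\\1&3&2\\0&2&2\end{pmatrix},
\qquad
V=\begin{pmatrix}1&1&0\\1&2&1\\0&1&1\end{pmatrix}.
\]
Here $V$ has simple eigenvalues $0,1,3$. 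Its eigenvector $(1,0,-1)^\top$ for the eigenvalue $1$ lies in the common range, yet $L(1,0,-1)^\top=(1,-1,-2)^\top$.

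In your coordinates, setwise closure makes $\tilde D:=MD^{\text{vol}}M^\top$ diagonal. Then $[L,V]=0$ if and only if $G_{jk}\bigl(\nu^{\text{lin}}_j\tilde d_k-\tilde d_j\nu^{\text{lin}}_k\bigr)=0$ for all $j,k$. This is automatic for mutually orthogonal loadings and a genuine restriction on the factor variances otherwise (here $G_{12}=1$ and $1\cdot 1\neq 1\cdot 2$).

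So the proposition is false as stated, both in its \emph{if} direction and in its ``equivalently'' clause, and it cannot be proved. Your fallback is the correct repaired statement: the \emph{only if} direction as you proved it, and the \emph{if} direction under span equality plus the commutation condition (for instance setwise closure with orthogonal loadings). The paper's proof passes over exactly this point rather than resolving it.
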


\begin{proof}
  The non-isotropic part of $\Sigi^{\text{lin}}$ has range
  $\mathrm{span}\{\beta_k\}_{k}$; the non-isotropic part of
  $\Sigi^{\text{vol}}$ has range
  $\mathrm{span}\{\beta_k \odot \beta_k\}_{k}$. Eigenvectors
  outside the isotropic noise floor are confined to these ranges
  (the orthogonal complement gives the eigenvalue
  $\sigma_\varepsilon^2$, or $\sigma_{\varepsilon, 2}^2$, with
  multiplicity $N - |\mathcal{K}|$). The non-trivial eigenvectors
  coincide if and only if the two ranges are the same subspace.
  Since the Hadamard square of a vector lies in
  $\mathrm{span}\{\beta_k\}$ only when expressible as a linear
  combination of the $\beta_k$, this requires the family
  $\{\beta_k\}$ to be closed (up to relabelling) under
  coordinate-wise squaring.
\end{proof}

Two corollaries follow immediately.

\begin{corollary}[Uniform-loading case]
  If a factor's loading $\beta_{k^*}$ has uniform entries (all of
  the same absolute value), then $\beta_{k^*} \odot \beta_{k^*}
  \propto \one_N$. If additionally $\one_N \in
  \mathrm{span}\{\beta_k\}_k$, then the rank-1 contribution from
  $\beta_{k^*}$ has coincident linear and volatility eigenvectors.
\end{corollary}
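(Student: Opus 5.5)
The first claim is a direct computation. If every entry of $\beta_{k^*}$ has the same absolute value $c>0$, then $(\beta_{k^*}\odot\beta_{k^*})[a] = \beta_{k^*}[a]^2 = c^2$ for every $a$. Hence $\beta_{k^*}\odot\beta_{k^*} = c^2\one_N \propto \one_N$.

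For the second claim I work with the two rank-one contributions of factor $k^*$ to the daily covariances in the expansion of Appendix~\ref{app:linvol-eigvec-coincidence}:
\[
\nu_{k^*}^{\text{lin}}(1)\,\beta_{k^*}\beta_{k^*}^\top
\quad\text{and}\quad
\nu_{k^*}^{\text{vol}}(1)\,c^4\,\one_N\one_N^\top .
\]
The plan runs in three steps.

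\textbf{Step 1.} Use the additional hypothesis $\one_N\in\mathrm{span}\{\beta_k\}_k$. Together with the first claim, it gives
\[
\beta_{k^*}\odot\beta_{k^*}\in\mathrm{span}\{\beta_k\}.
\]
So the volatility-channel range contributed by $k^*$ sits inside the linear-channel range $\mathrm{span}\{\beta_k\}$.

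\textbf{Step 2.} Invoke the range argument from the proof of Proposition~\ref{prop:linvol-multi}. Eigenvectors above the isotropic floor are confined to the respective ranges. Restricted to $k^*$'s contribution, the vol-channel eigendirection $\one_N/\sqrt N$ therefore lies in the subspace spanned by the linear-channel non-isotropic eigenvectors, which is exactly the Hadamard-closure condition for this single factor.

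\textbf{Step 3.} Conclude that the rank-one linear term and the rank-one volatility term have coincident leading eigenvectors. I plan to establish this by reducing to the single-dominant-factor statement of Proposition~\ref{prop:linvol-eigvec-coincidence}, with $\beta_{k^*}$ playing the role of the dominant loading. With uniform absolute entries, the leading eigenvectors $\beta_{k^*}/\|\beta_{k^*}\|$ and $\one_N/\sqrt N$ coincide up to sign once $\beta_{k^*}$ is sign-uniform. This is the sense in which the linear and volatility eigenvectors are colinear in that proposition.

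\textbf{Main obstacle.} The hardest step is reconciling the sign pattern. Uniform absolute values alone give $\beta_{k^*}\propto D\one_N$ for some diagonal sign matrix $D$, and $D\one_N$ is colinear with $\one_N$ only when $D=\pm I$.

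My first attempt is to read ``coincident eigenvectors'' in the subspace sense of Proposition~\ref{prop:linvol-multi}. Under that reading, the linear eigenvector $\beta_{k^*}$ and the volatility eigenvector $\one_N$ lie in a common invariant subspace $\mathrm{span}\{\beta_k\}$ of both channel matrices. The hypothesis $\one_N\in\mathrm{span}\{\beta_k\}$ is precisely what makes this subspace statement true without any sign condition.

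If a vector-level statement is required instead, I would add the sign-uniformity assumption explicitly. This is the market-mode case $\beta_{k^*}\propto\one_N$ that the paper highlights, and under it the two eigenvectors agree up to sign.
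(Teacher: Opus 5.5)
Your first claim is correct. The sign obstruction you flag is real, but it is a defect in the statement itself, and no reinterpretation removes it. The paper's only justification is that the corollary ``follows immediately'' from Proposition~\ref{prop:linvol-multi}, which is essentially your Steps~1--2. That route cannot give eigenvector coincidence. The two rank-one contributions $\nu^{\text{lin}}_{k^*}(1)\,\beta_{k^*}\beta_{k^*}^\top$ and $\nu^{\text{vol}}_{k^*}(1)\,c^4\,\one_N\one_N^\top$ do not involve the other loadings at all. So the hypothesis $\one_N\in\mathrm{span}\{\beta_k\}_k$ has no bearing on whether they share eigenvectors.

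Here is a concrete counterexample. Take $N=3$, $\beta_{k^*}=(1,1,-1)^\top$ and $\beta_2=(0,0,1)^\top$. The loadings are independent and $\one_3=\beta_{k^*}+2\beta_2$ lies in their span. Yet $\langle\beta_{k^*},\one_3\rangle=1$. Hence $\beta_{k^*}$ is not an eigenvector of $\one_3\one_3^\top$, $\one_3$ is not an eigenvector of $\beta_{k^*}\beta_{k^*}^\top$, and the two matrices do not commute. In general, for nonzero $u,w$, the matrices $uu^\top$ and $ww^\top$ commute iff $u\parallel w$ or $u\perp w$, and their leading eigenvectors agree iff $u\parallel w$. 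With $u=\beta_{k^*}$ and $w=\one_N$, the latter condition is exactly sign-uniformity.

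So you should not present the subspace reading as a proof. Saying that $\beta_{k^*}$ and $\one_N$ both lie in $\mathrm{span}\{\beta_k\}$ just restates the hypothesis; it says nothing about eigenvectors. Your claim that this span is invariant under both channel matrices fails in general for the full $\Sigi^{\text{vol}}$, whose range involves the Hadamard squares of all the loadings.

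Step~2 also overreaches. One Hadamard square lying in the span is weaker than the family-level range equality of Proposition~\ref{prop:linvol-multi}. Even equal ranges is only necessary, not sufficient, for coincident eigenvectors: two distinct positive-definite operators on the same subspace generally have different eigenbases.

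The sketch of Proposition~\ref{prop:linvol-eigvec-coincidence} has the same sign gap: it treats equal absolute values as sufficient for $\beta$ and $\beta\odot\beta$ to be colinear. Your Step~3 reduction inherits it.

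The right write-up has three parts:
\begin{itemize}
\item your first claim;
\item the counterexample above;
\item a corrected statement: if $\beta_{k^*}=\pm c\,\one_N$ with $c>0$, both rank-one contributions are multiples of $\one_N\one_N^\top$, so they share every eigenvector, with common leading eigenvector $\one_N/\sqrt N$.
\end{itemize}
Under this corrected statement the span hypothesis holds automatically and should be deleted.
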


\begin{corollary}[Single-dominant-factor case]
  When a single factor $k^*$ dominates and its loading
  $\beta_{k^*}$ has approximately uniform entries, the leading
  eigenvectors of $\Sigi^{\text{lin}}$ and $\Sigi^{\text{vol}}$
  approximately coincide. This is the case stated in
  Proposition~\ref{prop:linvol-eigvec-coincidence} of the main
  text.
\end{corollary}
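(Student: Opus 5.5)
The plan is to reduce the statement to two perturbation estimates and one elementary colinearity estimate. We work with the multi-factor expansions of $\Sigi^{\text{lin}}$ and $\Sigi^{\text{vol}}$ written above Proposition~\ref{prop:linvol-multi}. We make ``single factor $k^*$ dominates'' precise by splitting
\[
\Sigi^{\text{lin}} = \nu^{\text{lin}}_{k^*}\,\beta_{k^*}\beta_{k^*}^\top + R^{\text{lin}}, \qquad \Sigi^{\text{vol}} = \nu^{\text{vol}}_{k^*}\,(\beta_{k^*}\odot\beta_{k^*})(\beta_{k^*}\odot\beta_{k^*})^\top + R^{\text{vol}}.
\]
Here $R^{\text{lin}}$ and $R^{\text{vol}}$ collect the other factors. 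The isotropic noise terms $\sigma_\varepsilon^2 I_N$ and $\sigma_{\varepsilon,2}^2 I_N$ can be dropped, since they shift all eigenvalues equally and do not affect eigenvectors. Dominance then means that the ratios $\eta^{\text{lin}} := \|R^{\text{lin}}\|_2 / (\nu^{\text{lin}}_{k^*}\|\beta_{k^*}\|^2)$ and $\eta^{\text{vol}} := \|R^{\text{vol}}\|_2 / (\nu^{\text{vol}}_{k^*}\|\beta_{k^*}\odot\beta_{k^*}\|^2)$ are small. Approximate uniformity means $\beta_{k^*} = c(\one_N + \delta)$ with $\|\delta\|_\infty \le \eta_u$ small.

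\textbf{Step 1 (leading eigenvectors track the loadings).} The unperturbed rank-one matrices have leading eigenvectors $\beta_{k^*}/\|\beta_{k^*}\|$ and $(\beta_{k^*}\odot\beta_{k^*})/\|\beta_{k^*}\odot\beta_{k^*}\|$. Their spectral gap equals the nonzero eigenvalue. I will apply the Davis--Kahan $\sin\Theta$ theorem rather than the first-order expansion used in the proof of Theorem~\ref{thm:O-first-order}, because it gives a non-asymptotic bound. It yields
\[
\sin\angle\bigl(\eivec_1^{\text{lin}}, \beta_{k^*}\bigr) \le \frac{\eta^{\text{lin}}}{1 - \eta^{\text{lin}}},
\]
and the analogous bound for the volatility channel.

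\textbf{Step 2 (colinearity of $\beta$ and $\beta\odot\beta$).} This is the elementary computation behind the main-text sketch of Proposition~\ref{prop:linvol-eigvec-coincidence}. With $\beta_{k^*} = c(\one_N+\delta)$ we have $\beta_{k^*}\odot\beta_{k^*} = c^2(\one_N + 2\delta + \delta\odot\delta)$. Both vectors lie within angle $O(\eta_u)$ of $\one_N$, so
\[
\angle\bigl(\beta_{k^*}, \beta_{k^*}\odot\beta_{k^*}\bigr) = O(\eta_u).
\]
This step is where the hypothesis must be read carefully. If ``uniform'' allowed mixed signs, $|\beta[a]| = $ const but $\beta$ not $\propto \one_N$, then $\beta\odot\beta \propto \one_N$ while $\beta$ is not. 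The cosine is then $|\sum_a \mathrm{sgn}(\beta[a])|/N$, which need not be near one. I will therefore state the corollary for approximately equal entries, with a common sign as for the market mode. Alternatively, one could phrase the conclusion up to the sign pattern, noting that the first corollary already requires $\one_N \in \mathrm{span}\{\beta_k\}$.

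\textbf{Step 3 (conclusion).} Combine Steps 1 and 2 by the triangle inequality for angles:
\[
\angle\bigl(\eivec_1^{\text{lin}}, \eivec_1^{\text{vol}}\bigr) \le O(\eta^{\text{lin}}) + O(\eta_u) + O(\eta^{\text{vol}}).
\]
This gives $|\langle \eivec_1^{\text{lin}}, \eivec_1^{\text{vol}}\rangle|^2 = 1 - O\bigl((\eta^{\text{lin}}+\eta^{\text{vol}}+\eta_u)^2\bigr)$, which is the approximate coincidence claimed.

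\textbf{Main obstacle.} I expect the hardest part to be Step 1 in the volatility channel. There the dominance ratio $\eta^{\text{vol}}$ involves fourth powers of the loadings, so a factor that is dominant in the linear channel need not be dominant in the volatility channel. I would simply assume dominance separately in each channel, as the statement's ``single dominant factor'' implicitly does. For the ``only if'' direction, the converse remark in Proposition~\ref{prop:linvol-eigvec-coincidence}, I would use the exact equality case of Cauchy--Schwarz: $\beta \parallel \beta\odot\beta$ forces all nonzero entries to be equal.
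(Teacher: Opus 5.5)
Your argument is correct. Its skeleton is the paper's own main-text sketch of Proposition~\ref{prop:linvol-eigvec-coincidence}: replace each daily covariance by its dominant rank-one term, read off the leading eigenvectors $\beta_{k^*}$ and $\beta_{k^*}\odot\beta_{k^*}$, and compare them.

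\textbf{How the two treatments differ.} The difference lies in how ``approximately'' is handled. The paper simply says the corollary follows immediately from Proposition~\ref{prop:linvol-multi}. That proposition is an exact statement about when the non-isotropic ranges coincide, and it gives no control over the leading eigenvectors once the other factors are present. Your version supplies that control: the Davis--Kahan step (gap $\nu^{\text{lin}}_{k^*}\|\beta_{k^*}\|^2$, bound $\eta/(1-\eta)$ via Weyl), the $\ell_\infty$ uniformity estimate and the triangle inequality for angles between lines together yield an explicit, dimension-free bound $1-|\langle \eivec_1^{\text{lin}},\eivec_1^{\text{vol}}\rangle|^2 = O\bigl((\eta^{\text{lin}}+\eta^{\text{vol}}+\eta_u)^2\bigr)$. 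It also makes explicit that dominance has to be assumed separately in each channel. The paper's framing is shorter and presents the corollary as a limiting case of the structural span-invariance result. Yours is what actually delivers a quantitative ``approximate'' statement.

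\textbf{Your sign caveat corrects the paper.} Equal absolute values do not give colinearity: for $\beta=(1,-1)$, $\beta$ is orthogonal to $\beta\odot\beta$. In fact $\beta\parallel\beta\odot\beta$ holds exactly when all nonzero entries of $\beta$ are equal. So the paper's criterion ``colinear iff all entries equal in absolute value'' fails in both directions. The ``if'' direction needs your common-sign hypothesis, and the ``only if'' direction must allow zero entries. This is exactly the qualification you impose, and it is satisfied by the market mode.
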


Empirically (Section~\ref{sec:empirical}) the market mode is
the unique eigenmode that satisfies the corollary's premise; the
sub-leading eigenmodes load on factors whose $\beta_k$ have
non-uniform cross-asset profiles, and accordingly the
cross-channel overlap at sub-leading ranks is small.

\section{Identifiability diagnostic and the loss-conditional filter}
\label{app:identifiability}

\subsection*{Hessian-based identifiability}

The finite bootstrap dispersion of the ARFIMA parameters
$(\phi_M, d_M, \phi_{Vt}, d_{Vt})$ reported in
Section~\ref{ssec:global-params} is consistent with
identification, but is not itself proof: the optimiser could be
following a flat ridge with a small but nonzero gradient
component picked up by resampling. We supplement the bootstrap
evidence with a Hessian-based identifiability diagnostic.

At the converged parameter vector $\theta^*$ for each bootstrap
replicate we evaluate the Hessian
$H(\theta^*) := \partial^2 \mathcal{L}/\partial \theta^2$ of the
joint LS objective \eqref{eq:joint-ls-objective}. Two summary
statistics matter:

\begin{itemize}[itemsep=2pt,leftmargin=*]
\item \textbf{Condition number.} The ratio of largest to smallest
  eigenvalue of $H(\theta^*)$. Across the 1000 sensitivity
  replicates this is consistently in the range $10^{3}$ to
  $10^{5}$ --- well-conditioned for the global parameters but
  with directions of weak curvature along certain combinations
  of MSM parameters $(m_0, b, \gamma_{\bar k})$ (consistent with
  their wider bootstrap CIs).
\item \textbf{Rank of the residual Jacobian.} The Jacobian of the
  residual vector with respect to $\theta$ has full column rank
  at $\theta^*$ in all 1000 replicates, ruling out exact
  parameter degeneracy. The smallest singular value is bounded
  away from zero by approximately $10^{-3}$ relative to the
  largest.
\end{itemize}

The directions of weak curvature correspond to combinations of
MSM cascade parameters that produce nearly identical
$\kappastat^{\text{vol}}$ profiles. These are not pathological:
they reflect the standard MSM identification weakness inherited
from the univariate case
\citep[\S3.4]{CalvetFisher2008}. The eigenstructure-level
$\kappastat^{\text{lin}}$ moment condition tightens but does not
eliminate this weakness, which is the limit named in
Section~\ref{ssec:limits-and-follow-up}.

\subsection*{Firsthalf rank-1 vol-channel width and the
loss-conditional filter}
\label{app:identifiability-bimodal}

The Hessian diagnostic above is local: it characterises curvature
at the converged $\theta^*$ of each replicate but does not detect
when the optimiser converges to qualitatively different points
across replicates. This matters most on the firsthalf panel at
the rank-1 volatility allocation.

Across the $1000$ firsthalf replicates the rank-1
$w^{\text{vol}}_{\text{MSM}}$ bootstrap distribution is broad:
the unfiltered $90\%$ CI is $[0.10, 1.00]$, spanning essentially
the entire $[0, 1]$ allocation space. The distribution is
unimodal but heavy-tailed --- a wide single mass with median
$0.28$, $\sim 82\%$ of replicates below $0.5$ and $\sim 18\%$
above --- rather than two distinct basins. By contrast, the
rank-1 volatility allocation on the other three panels is
sharply concentrated: sensitivity and secondhalf both place
$> 99\%$ of replicates at $w^{\text{vol}}_{\text{MSM}} \ge 0.5$,
and FF~100 places $\sim 80\%$ above that threshold. Firsthalf is
the panel where rank-1 vol-channel identification is genuinely
weak (Section~\ref{ssec:global-params}).

A cluster-conditional loss comparison sharpens the picture. We
split each panel's $1000$ replicates by the rank-1 MSM allocation
threshold $w^{\text{vol}}_{\text{MSM}} = 0.5$ and report the
loss medians of the two clusters in
Table~\ref{tab:bimodality-loss}.

\begin{table}[H]
\centering
\small
\caption{Cluster-conditional loss comparison at rank 1 across the
four panels, splitting each panel's $1000$ bootstrap replicates by
$w^{\text{vol}}_{\text{MSM}} = 0.5$. ``Main cluster'' is the
larger of the two clusters; ``minority cluster'' the smaller. Loss
gap is the median loss in the minority cluster minus the median
loss in the main cluster, with the percentage relative to the
main-cluster median in parentheses. Mann--Whitney $p$ tests for
stochastic dominance of the minority-cluster loss distribution
over the main-cluster distribution. In every panel the minority
cluster has higher median loss, but the gap is small in absolute
terms (\(\le 14\%\)); on firsthalf and FF~100 the gap is
\(\le 1.6\%\), inconsistent with the minority being a distinct
worse-fitting basin.}
\label{tab:bimodality-loss}
\begin{tabular}{lccc}
\toprule
Panel & Main cluster (size) & Minority cluster (size) & Loss gap; Mann--Whitney $p$ \\
\midrule
Sensitivity & high-MSM ($n = 991$) & low-MSM ($n = 9$)   & $+1.5$ (7.4\%);  $p = 0.05$ \\
Firsthalf   & low-MSM ($n = 821$)  & high-MSM ($n = 179$) & $+0.3$ (1.3\%);  $p = 0.01$ \\
Secondhalf  & high-MSM ($n = 991$) & low-MSM ($n = 9$)   & $+2.8$ (13.5\%); $p = 0.006$ \\
FF 100      & high-MSM ($n = 795$) & low-MSM ($n = 205$) & $+0.8$ (1.6\%);  $p = 0.07$ \\
\bottomrule
\end{tabular}
\end{table}

Three observations follow. First, on sensitivity, secondhalf, and
FF~100 the high-MSM cluster is the main one; on firsthalf the
low-MSM cluster is. This is the cross-panel signature of the
volatility-channel transition discussed in
Section~\ref{ssec:regime-change}: firsthalf is the panel where
the post-transition cascade allocation has not yet taken hold.
Second, in every panel the minority cluster has higher median
loss than the main cluster, consistent with a tail of less-good
fits picked up by some bootstrap resamples. Third --- and this is
the substantive point --- the loss gap is small in absolute
terms ($+0.3$ to $+2.8$ loss units, $1.3\%$--$13.5\%$ of the
main-cluster median), and on firsthalf and FF~100 it is only
$+0.3$ and $+0.8$ ($1.3\%$ and $1.6\%$). The Mann--Whitney
$p$-values are likewise modest. This is not the signature of a
clearly worse-fitting alternative basin; the firsthalf rank-1
vol-channel admits a near-continuum of loss-equivalent
allocations, of which the high-MSM cluster is the upper tail.

\textbf{The loss-conditional filter.} Throughout the main text
we report bootstrap medians and $90\%$ CIs over the $500$
replicates per panel with loss below the panel median, rather
than over all $1000$. This is a single principled threshold with
no panel-specific tuning. The filter trims the higher-loss tail
of each panel's bootstrap distribution; on the three panels with
narrow rank-1 distributions (sensitivity, secondhalf, FF~100) it
makes essentially no difference. On firsthalf rank 1 it tightens
the $w^{\text{vol}}_{\text{MSM}}$ CI from $[0.10, 1.00]$ (full
bootstrap) to $[0.08, 0.75]$ (loss-filtered) --- still wide,
reflecting the genuine weak identification of the firsthalf
vol channel, but with the upper tail of higher-loss replicates
trimmed.

\textbf{Why the filter is methodologically clean.} The filter
applies a panel-wise median threshold uniformly across all four
panels; there is no panel-specific tuning, and the headline
medians are essentially unchanged because the filter preserves
the lower-loss half of each bootstrap by construction. We do not
claim that the loss-filtered CIs separate ``true'' fits from
``optimiser-stuck'' fits at firsthalf --- the small loss gap of
Table~\ref{tab:bimodality-loss} shows the picture is more
nuanced --- only that the filter reports the more informative
half of the bootstrap distribution under a uniform rule.

\section{\texorpdfstring{$\beta$}{β}-inversion: per-asset detail}
\label{app:beta-inv-detail}

Table~\ref{tab:beta-inversion} of Section~\ref{ssec:beta-inversion}
summarises the cross-channel $\beta$-inversion test by reporting
the bootstrap median and $90\%$ credible interval of the
asset-level Spearman and Pearson correlations between
$\widehat{\beta_M^2}^{\text{lin}}[a]$ and
$\widehat{\beta_M^2}^{\text{vol}}[a]$ across the $N$ assets in
each panel. Figure~\ref{fig:beta-inversion} below visualises
the underlying per-asset scatter: each point is one asset's
loss-filtered bootstrap median in the two channels, with grey
bars showing the $90\%$ credible interval.

\begin{figure}[H]
\centering
\includegraphics[width=0.95\linewidth]{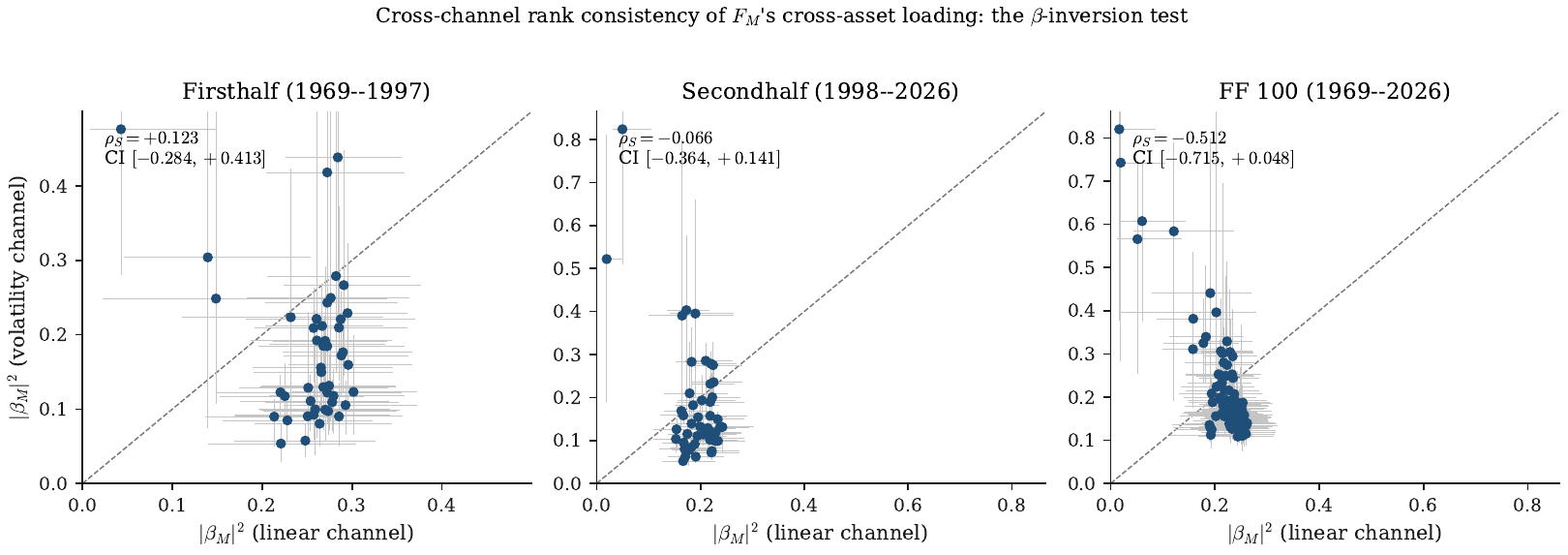}
\caption{Per-asset cross-channel $\beta$-inversion scatter for
all four panels. Points are loss-filtered bootstrap medians of
$\widehat{\beta_M^2}^{\text{lin}}[a]$ (horizontal axis) versus
$\widehat{\beta_M^2}^{\text{vol}}[a]$ (vertical axis); grey bars
are $90\%$ credible intervals from the $500$ loss-filtered
replicates per panel. The dashed line is the $45^\circ$
reference. The Spearman correlation (median and $90\%$ CI) is
annotated in each panel. A single shared multifractal loading
$\beta_M$ would produce a positive slope; instead the relation
is flat or negatively sloped on every panel, and significantly
negative on the full-sample sensitivity and FF 100 panels ---
the per-asset visualisation of the falsification reported in
Section~\ref{ssec:beta-inversion}.}
\label{fig:beta-inversion}
\end{figure}

The scatter makes the falsification concrete: the assets that
project most strongly onto $F_M$ in the linear channel are not
those that project most strongly onto it in the volatility
channel. The negative slope is sharpest on the two largest
panels (sensitivity and FF 100), where a number of
high-$\widehat{\beta_M^2}^{\text{lin}}$ assets carry near-zero
volatility-channel attribution, and conversely. The structural
reading --- that the linear long-memory factor and the
volatility cascade load on distinct cross-sections rather than a
unified $F_M$ --- is given in
Section~\ref{ssec:rank-consistency}.

\section{Per-dataset detailed weight tables}
\label{app:ff100-detail}

Table~\ref{tab:rank-weights} of Section~\ref{ssec:stylised-facts}
reported the per-mode weights at canonical ranks for the
sensitivity panel only. This appendix reports the analogous
tables for the firsthalf, secondhalf, and FF 100 panels. Each
table reports the bootstrap median across the $1000$ replicates
of the procedure described in
Section~\ref{ssec:data-methodology}.

\begin{table}[H]
\centering
\small
\caption{Firsthalf panel (FF 49 industries, 1969--1997, $N = 48$):
bootstrap medians of per-mode weights at canonical ranks,
$1000$ replicates of the closed-form pipeline.}
\label{tab:rank-weights-firsthalf}
\begin{tabular}{cccccc}
\toprule
Rank & $w^{\text{lin}}_P$ & $w^{\text{lin}}_A$ & $w^{\text{lin}}_M$ &
$w^{\text{vol}}_{\text{MSM}}$ & $w^{\text{vol}}_{Vt}$ \\
\midrule
 1 & 0.31 & 0.07 & 0.41 & 0.29 & 0.04 \\
 2 & 0.70 & 0.00 & 0.03 & 0.22 & 0.00 \\
 3 & 0.76 & 0.00 & 0.04 & 0.90 & 0.00 \\
 6 & 0.25 & 0.13 & 0.23 & 0.08 & 0.00 \\
10 & 0.06 & 0.34 & 0.26 & 0.00 & 0.23 \\
24 & 0.00 & 0.68 & 0.30 & 0.00 & 0.81 \\
31 & 0.00 & 0.72 & 0.27 & 0.00 & 0.96 \\
48 & 0.00 & 0.78 & 0.21 & 0.00 & 1.00 \\
\bottomrule
\end{tabular}
\end{table}

\begin{table}[H]
\centering
\small
\caption{Secondhalf panel (FF 49 industries, 1998--2026,
$N = 49$): bootstrap medians of per-mode weights at canonical
ranks, $1000$ replicates of the closed-form pipeline.}
\label{tab:rank-weights-secondhalf}
\begin{tabular}{cccccc}
\toprule
Rank & $w^{\text{lin}}_P$ & $w^{\text{lin}}_A$ & $w^{\text{lin}}_M$ &
$w^{\text{vol}}_{\text{MSM}}$ & $w^{\text{vol}}_{Vt}$ \\
\midrule
 1 & 0.23 & 0.38 & 0.12 & 1.00 & 0.00 \\
 2 & 0.50 & 0.00 & 0.00 & 0.81 & 0.00 \\
 3 & 0.57 & 0.06 & 0.00 & 0.39 & 0.00 \\
 6 & 0.26 & 0.35 & 0.14 & 0.02 & 0.00 \\
10 & 0.14 & 0.49 & 0.23 & 0.00 & 0.10 \\
24 & 0.00 & 0.75 & 0.23 & 0.00 & 0.89 \\
31 & 0.00 & 0.77 & 0.22 & 0.00 & 0.97 \\
49 & 0.00 & 0.85 & 0.15 & 0.00 & 1.00 \\
\bottomrule
\end{tabular}
\end{table}

\begin{table}[H]
\centering
\small
\caption{FF 100 panel (size$\times$book-to-market portfolios,
1969--2026, $N = 95$): bootstrap medians of per-mode weights at
canonical ranks, $1000$ replicates of the closed-form pipeline.}
\label{tab:rank-weights-ff100}
\begin{tabular}{cccccc}
\toprule
Rank & $w^{\text{lin}}_P$ & $w^{\text{lin}}_A$ & $w^{\text{lin}}_M$ &
$w^{\text{vol}}_{\text{MSM}}$ & $w^{\text{vol}}_{Vt}$ \\
\midrule
 1 & 0.37 & 0.10 & 0.16 & 0.66 & 0.02 \\
 2 & 0.49 & 0.00 & 0.04 & 0.76 & 0.00 \\
 3 & 0.44 & 0.00 & 0.02 & 0.79 & 0.00 \\
 6 & 0.49 & 0.02 & 0.00 & 0.68 & 0.00 \\
10 & 0.41 & 0.06 & 0.08 & 0.23 & 0.00 \\
24 & 0.16 & 0.29 & 0.23 & 0.00 & 0.11 \\
31 & 0.07 & 0.40 & 0.25 & 0.00 & 0.23 \\
48 & 0.00 & 0.59 & 0.28 & 0.00 & 0.67 \\
95 & 0.00 & 0.78 & 0.19 & 0.00 & 1.00 \\
\bottomrule
\end{tabular}
\end{table}

\begin{table}[H]
\centering
\small
\caption{Europe 25 panel (size$\times$book-to-market portfolios,
1990--2026, $N = 25$): bootstrap medians of per-mode weights at
canonical ranks, 894 replicates (the cross-region run was
truncated at 894 of a planned 1000 replicates; the bootstrap
medians are stable at this count).
Cross-region replication of the rank-weight pattern.}
\label{tab:rank-weights-europe25}
\begin{tabular}{cccccc}
\toprule
Rank & $w^{\text{lin}}_P$ & $w^{\text{lin}}_A$ & $w^{\text{lin}}_M$ &
$w^{\text{vol}}_{\text{MSM}}$ & $w^{\text{vol}}_{Vt}$ \\
\midrule
 1 & 0.28 & 0.18 & 0.15 & 1.00 & 0.00 \\
 2 & 0.71 & 0.16 & 0.00 & 0.76 & 0.00 \\
 3 & 0.19 & 0.32 & 0.16 & 0.64 & 0.02 \\
 6 & 0.05 & 0.43 & 0.09 & 0.11 & 0.39 \\
10 & 0.00 & 0.60 & 0.15 & 0.03 & 0.67 \\
24 & 0.00 & 0.86 & 0.14 & 0.00 & 1.00 \\
25 & 0.00 & 0.86 & 0.13 & 0.00 & 1.00 \\
\bottomrule
\end{tabular}
\end{table}

The qualitative pattern reproduces across all four sub-samples
(three U.S.\ and one European): sub-leading factor momentum
($w^{\text{lin}}_P$ peaking around ranks 2--3), deep-mode mean
reversion (rising $w^{\text{lin}}_A$ through ranks 24--48
on the U.S.\ panels and through ranks 24--25 on the smaller
Europe 25 panel), and the regime-dependent market-mode
volatility allocation discussed in
Section~\ref{ssec:regime-change}. The Europe 25 panel sits
cleanly in the post-transition cascade regime
($w^{\text{vol}}_{\text{MSM}} = 1.00$ at rank 1) consistent with
its 1990-onward sample span.

\section{Optimisation objective comparison}
\label{app:objective-comparison}

The estimation pipeline of Section~\ref{ssec:joint-LS-fit} uses
per-(mode, horizon) residual scaling in the joint LS objective:
\begin{equation*}
  \mathcal{L}(\theta)
    \;=\; \sum_{i, H} \frac{[\kappastat^{\text{lin}}_i(H; \theta)
        - \widehat{\kappastat}^{\text{lin}}_i(H)]^2}
      {\max(1, |\widehat{\kappastat}^{\text{lin}}_i(H)|)^2}
    + \text{(analogous for vol channel)}.
\end{equation*}
We compare this against a per-horizon scaling with an
additional rank-1 emphasis weight $w_1 = 10$ that amplifies the
rank-1 entries' contribution to the loss:
\begin{equation*}
  \mathcal{L}_{w_1}(\theta)
    \;=\; \sum_H \frac{w_i\, [\kappastat_i(H; \theta)
        - \widehat{\kappastat}_i(H)]^2}
      {\max(1, |\widehat{\kappastat}_{\bar i}(H)|)^2},
  \qquad
  w_i = \begin{cases} 10 & i = 1 \\ 1 & i \geq 2 \end{cases},
\end{equation*}
where $\bar i$ denotes $\kappastat$ averaged within each horizon. The
rank-1 weight $w_1 = 10$ compensates for the fact that the
unscaled rank-1 entries (with $\kappastat^{\text{vol}}_1(1260)
\approx 50$) would otherwise be dwarfed in absolute residual
terms by mid-rank entries (with $\kappastat^{\text{vol}}_i$ of
order 1), even though they carry the dominant economic content.

The per-(mode, horizon) scaling of $\mathcal{L}$ supplies this
emphasis through the denominator normalisation rather than an
explicit rank-1 weight. The two objectives produce substantively similar
fits: the per-mode weight medians on the sensitivity panel
differ by at most $\pm 0.03$ between the two pipelines at any
rank, and the qualitative pattern (sub-leading factor momentum,
deep-mode mean reversion, market-mode rise-then-fall) is
preserved. The quantitative cross-over time scale $1/\gamma_1$
is the more sensitive quantity: under the reported
per-(mode, horizon) objective $\mathcal{L}$ the sensitivity
panel gives $1/\gamma_1 \approx 1.1$ yr, and the rank-1-weighted
objective $\mathcal{L}_{w_1}$ gives $\approx 1.0$ yr on the same
data --- the two within $\pm 0.1$ yr. We adopt $\mathcal{L}$
because the per-(mode, horizon) scaling produces a proper LS
loss without an explicit rank-1 weight, and the resulting
MSM-parameter bootstrap distributions are more stable across
replicates.

\section{Reproducibility manifest}
\label{app:reproducibility}

This appendix specifies the procedure used to produce the
empirical results of Section~\ref{sec:empirical} to a level that
permits independent reproduction from publicly-available data.

\noindent
\textbf{Data source.} All four panels are constructed from
\citet{KennethFrenchDataLibrary}-distributed daily value-weighted
portfolios. The sensitivity, firsthalf, and secondhalf panels
use the $49$-industry daily file; the FF 100 panel uses the
$10 \times 10$ size$\times$book-to-market daily file. Industries
or portfolios with any missing observation after 1969-07-01 are
dropped from the panel; the firsthalf and secondhalf splits
re-instate the Software industry where its post-1998 data is
available.

\noindent
\textbf{Bootstrap procedure.} Moving-block bootstrap with block
size $2 H_{\max} = 2520$ trading days, $1000$ replicates per
panel, seeded by replicate index $K \in \{1, \ldots, 1000\}$
(with $H_{\max} = 1260$ the longest horizon evaluated). For each
replicate: (i) draw blocks with replacement to form a resampled
return series; (ii) compute the per-eigenmode variance ratios
$\widehat{\kappastat}^{\text{lin}}_i(H)$ and
$\widehat{\kappastat}^{\text{vol}}_i(H)$ from the
eigendecomposition of the daily-horizon and $H$-period covariance
matrices at $H \in \{1, 5, 21, 63, 252, 1260\}$; (iii) record the
$H = 1$ eigenvectors $V^{\text{lin}}_{H = 1}$ and
$V^{\text{vol}}_{H = 1}$.

\noindent
\textbf{Joint LS fit per replicate.} For each replicate's
$(\widehat{\kappastat}^{\text{lin}}, \widehat{\kappastat}^{\text{vol}})$
pair: fit the multi-memory factor model
\eqref{eq:factor-decomp} via per-(mode, horizon)-scaled joint
least squares \eqref{eq:joint-ls-objective}. Optimisation: L-BFGS-B
in three warm-restart passes from each of nine basin-aware
starting points, retaining the lowest-loss fit across the
resulting twenty-seven candidate local minima. MSM cascade
truncation $\bar k = 8$. The fit produces the global parameter
vector
$\theta = (H_P, H_A, \phi_M, d_M, m_0, b, \gamma_{\bar k},
\phi_{Vt}, d_{Vt})$ and the per-mode weight arrays
$\{w^{\text{lin}}_{i, k}, w^{\text{vol}}_{i, k}\}$ at each rank
$i \in \{1, \ldots, N\}$.

\noindent
\textbf{Loss-conditional reporting.} CIs in the main text and
in this supplementary are computed over the $500$ replicates per
panel whose loss falls below the panel median, per
Section~\ref{ssec:data-methodology} and the bimodality discussion
in Section~\ref{app:identifiability}.

\noindent
\textbf{Code and full bootstrap archive.} The Python code that
implements the bootstrap and the joint LS fit, together with the
per-replicate fit dictionaries that produced the tables in this
paper, is available from the author and will be deposited at a
public repository (Zenodo or analogous) at submission. The code
that produces the figures of Section~\ref{sec:empirical} from the
fit dictionaries is included in the same archive.

\end{document}